\def\BibTeX{{\rm B\kern-.05em{\sc i\kern-.025em b}\kern-.08em
		T\kern-.1667em\lower.7ex\hbox{E}\kern-.125emX}}
\let\NAT@parse\undefined
\newtheorem{theorem}{Theorem} 
\newtheorem{remark}[theorem]{Remark} 
\newtheorem{lemma}[theorem]{Lemma}
\newtheorem{corollary}[theorem]{Corollary}
\newtheorem{proposition}[theorem]{Proposition}
\newtheorem{definition}[theorem]{Definition}
\numberwithin{theorem}{section}
\DeclareMathOperator*\uplim{\overline{lim}}
\def\hmath$#1${\texorpdfstring{{\fontsize{11}{11}\rmfamily\textit{#1}}}{#1}}
\newcommand{\neweq}[1]{\mathrel{\stackrel{\makebox[0pt]{\mbox{\normalfont\tiny{#1}}}}{=}}}
\newcommand{\newgeq}[1]{\mathrel{\stackrel{\makebox[0pt]{\mbox{\normalfont\tiny{#1}}}}{\geq}}}
\newcommand{\newleq}[1]{\mathrel{\stackrel{\makebox[0pt]{\mbox{\normalfont\tiny{#1}}}}{\leq}}}
\begin{document}
	
	\title{\LARGE An Information-Theoretic Analysis of Discrete-Time Control and Filtering Limitations by the I-MMSE Relationships}
	
	\author{Neng Wan$^*$, Dapeng Li$^*$, Naira Hovakimyan, \IEEEmembership{Fellow, IEEE}, and Petros G. Voulgaris, \IEEEmembership{Fellow, IEEE}
		\thanks{\hspace{-5pt}* Authors contributed equally to this paper.}
		\thanks{\hspace{-5pt}** The extended version of this paper with supplementary material and appendices can be found in \cite{Wan_arXiv_2023} and \cite[Chapter 3]{Wan_2024}.}
	       \thanks{\hspace{-5pt}*** This manuscript is the extended version of the paper with the same title accepted by IEEE Transactions on Automatic Control.}
		\thanks{\hspace{-5pt}**** This work was partially supported by AFOSR, NASA and NSF. }
		\thanks{N. Wan and N. Hovakimyan are with the Department of Mechanical Science and Engineering and Coordinated Science Laboratory, University of Illinois at Urbana-Champaign, Urbana, IL 61801, USA (e-mail: \{nengwan2, nhovakim\}@illinois.edu).}
		\thanks{D. Li is with the Department of Mechanical and Energy Engineering, Southern University of Science and Technology, Shenzhen, Guangdong 518055, China (e-mail: dapeng.ustc@gmail.com).}
		\thanks{P. G. Voulgaris is with the Department of Mechanical Engineering, University of Nevada, Reno, NV 89557, USA (e-mail: pvoulgaris@unr.edu).}}
	
	\maketitle

	\hspace{-6pt}
	\begin{abstract}
		Fundamental limitations or performance trade-offs/limits are important properties and constraints of both control and filtering systems. Among various trade-off metrics, total information rate that characterizes the sensitivity trade-offs and time-averaged performance of control and filtering systems was conventionally studied by using the differential entropy rate and Kolmogorov-Bode formula. In this paper, by extending the famous I-MMSE (mutual information -- minimum mean-square error) relationships to the discrete-time additive white Gaussian channels with and without feedback, a new paradigm is introduced to estimate and analyze total information rate as a control and filtering trade-off metric. Under this framework, we explore the trade-off properties of total information rate for a variety of the discrete-time control and filtering systems, e.g., LTI, LTV, and nonlinear, and propose an alternative approach to investigate total information rate via optimal estimation.
	\end{abstract}
	
	\begin{IEEEkeywords}
		Control and filtering limits, I-MMSE relationships, total information rate, optimal estimation, fundamental limitations, information-theoretic method.
	\end{IEEEkeywords}

	\vspace{-0.8em}

	\section{Introduction}\label{sec1}
	Fundamental limitations or performance trade-offs of control and filtering systems have been an important and long-lasting topic, since the introduction of Bode's integral in the 1940's. Some established trade-off metrics include but are not limited to the Bode's integral \cite{Bode_1945, Freudenberg_1985, Sung_IJC_1988, Stein_CSM_2003, Wan_TAC_2020, Wan_TAC_2023}, minimum cost function \cite{Seron_TAC_1999, Liu_CIS_2014, Kostina_TAC_2019}, data transmission rate \cite{Nair_PIEEE_2007}, total or directed information rate \cite{Elia_TAC_2004, Martins_TAC_2008, Li_TAC_2013, Charalambous_TAC_2017, Fang_2017}, and lowest achievable estimation error \cite{Braslavsky_Auto_1999, Wan_arxiv_2022}, all of which play an essential role in characterizing the fundamental conflicts or trade-offs between the limitations of physical systems and the pursuit of control (or estimation) performance. Complex analysis and information theory are two major tools for modeling and analyzing these control and filtering trade-offs. Complex analysis is convenient for studying the linear time-invariant (LTI) systems and capturing the frequency-domain trade-offs, such as Bode's integral and its variants. By contrast, information-theoretic methods, which treat the control and filtering systems as communication channels, are favorable when we investigate the time-domain or channel-orientated trade-offs, e.g., data rate constraint, time-averaged performance cost, and minimum estimation error, of discrete-time, time-varying or nonlinear systems.

	Total and directed information (rates)\footnote{For a communication channel, total information (rate) refers to the mutual information (rate) between the transmitted message and channel output \cite{Martins_TAC_2008, Li_TAC_2013}; directed information (rate) is a causal generalization of input-output mutual information to feedback channels \cite{Massey_ISIT_1990}. In some scenarios, e.g., AWGN channels in \cite[Sec. III]{Kim_TIT_2008} and \cite[Sec. IV]{Weissman_TIT_2013}, these two quantities coincide.} have been identified as an important information-theoretic measure related to the sensitivity trade-offs or Bode-type integrals \cite{Elia_TAC_2004, Martins_TAC_2008, Li_TAC_2013, Fang_2017}, average performance cost \cite{Charalambous_TAC_2017, Kostina_TAC_2019}, and minimum estimation error \cite{Tanaka_TAC_2017}, of both discrete- and continuous-time control and filtering systems. Conventionally, differential entropy rate, Kolmogorov-Bode formula \cite{Kolmogorov_TIT_1956, Yu_TAC_2010}, and their variants are the primal tools and paradigm for studying and calculating these information rates.	Following this paradigm, Zang and Iglesias first showed that the entropy rate difference between the input and output of control channel serves as an information-theoretic interpretation of Bode's integral in discrete-time LTI control systems~\cite{Zang_SCL_2003}. By utilizing the identity between total information and differential entropy, e.g., \eqref{prop33_eq2} in \hyperref[prop33]{Proposition~3.3}, Martins and Dahleh later derived an entropy inequality and an information rate inequality that define the fundamental constraints for the plant's instability rate, a total information rate, in discrete-time control systems; in particular, when the systems are stationary LTI, a Bode-like integral constraint can be restored from their inequalities \cite{Martins_TAC_2008}. Directed information (rate) was used to characterize the fundamental trade-off between communication rate and performance cost of some optimal control problems in \cite{Charalambous_TAC_2017, Kostina_TAC_2019}. For the filtering problem, total information (rate) was utilized to formulate the rate distortion function and design the filter with trade-off concern in \cite{Charalambous_TAC_2014, Tanaka_TAC_2018, Stavrou_SIAMJCO_2018}. Similar properties and results were also found for the total and directed information rates in continuous-time control and filtering systems \cite{Li_TAC_2013, Tanaka_CDC_2017, Wan_arxiv_2022}.

	Apart from the conventional information-theoretic method, the I-MMSE relationships \cite{Guo_TIT_2005}, i.e., the relationships between mutual (or total) information in information theory and (prediction, causal, or non-causal) minimum mean-square estimation error in estimation theory, provides an estimation-based approach to calculate and analyze total information (rate) as a control and filtering trade-off metric. This alternative approach has been utilized to study the trade-offs of continuous-time control and filtering systems, in which differential entropy rate is not well-defined and cumbersome to use. By using Duncan's theorem, an identity between total information and the causal MMSE of input in continuous-time channel, total information rate was studied as a trade-off metric of continuous-time control and filtering systems in \cite{Wan_arxiv_2022}, and rate distortion function was exploited to design an optimal Kalman-Bucy filter with trade-off concern in \cite{Tanaka_CDC_2017, Tanaka_arxiv_2022}. However, few existing paper discussed or invoked the I-MMSE relationships when calculating or analyzing the total information (rates) in discrete-time control and filtering systems. One reason, as above mentioned, is that for the discrete-time channels and systems, we already have some mature and convenient tools, such as entropy rate and Kolmogorov-Bode formula, to perform a pure information-theoretic analysis. More importantly, although the continuous-time I-MMSE relationships, e.g., Duncan's theorem, have been introduced and developed for decades \cite{Duncan_SIAMJAM_1970, Kadota_TIT_1971a}, their discrete-time counterparts previously received less attention. Bucy and Guo et al. briefly commented on the I-MMSE relationship of discrete-time non-feedback channel in \cite{Bucy_IS_1979, Guo_TIT_2005}. Han and Song developed a differential I-MMSE relationship between the total information and the non-causal MMSEs of discrete- and continuous-time additive white Gaussian noise (AWGN) channels subject to feedback and input memory \cite{Han_TIT_2016}. Nevertheless, the more intuitive and practical I-MMSE relationship between the total information and the prediction/causal MMSEs of discrete-time AWGN channel with feedback, which is essential for analyzing the discrete-time control systems, has been due for a long time.

	In this paper, we derive the missing I-MMSE relationship for the discrete-time white Gaussian channels with and without feedback, simultaneously. With this newly derived I-MMSE relationship, and after modeling the discrete-time control and filtering systems into proper communication channels, we then investigate the total information (rate) as a control and filtering trade-off metric and estimate it by using the optimal filtering techniques, which together provide a discrete-time counterpart result complementing our previous analysis of continuous-time systems in \cite{Wan_arxiv_2022}. The main contributions of this paper can be summarized as follows:

	\noindent (C1)\label{cont1} We derive the I-MMSE relationship showing that the total information of discrete-time AWGN channel with feedback is sandwiched by the sums of causal and prediction MMSEs of the channel input (\hyperref[thm24]{Theorem 2.4}) irrespective of the distribution and stationarity of the signals or initial states. This relationship supplements the non-feedback result in \cite[Theorem~9]{Guo_TIT_2005} and serves as a discrete-time counterpart of Duncan's theorem in \cite[Theorem~1]{Kadota_TIT_1971a}.

	\noindent (C2)\label{cont2} By resorting to this new I-MMSE relationship, we derive the fundamental constraints of total information rate, $\bar{I}(E; X_0)$ or $\bar{I}(E; C, X_0)$, in a general nonlinear control setup (\hyperref[thm32]{Theorems 3.2} and \hyperref[thm34]{3.4}) and utilize them to characterize the performance trade-offs of various control systems. By applying these results to discrete-time LTI control systems, a more concise proof is presented to show that $\bar{I}(E; X_0)$ equals the sum of logarithmic unstable poles and serves as an information-theoretic interpretation of the established Bode-type integrals (\hyperref[prop36]{Proposition 3.6} and \hyperref[cor37]{Corollary 3.7}). For the LTV control systems, we then prove that $\bar{I}(E; X_0)$, determined by the antistable dynamics, equals a time-domain Bode's integral that quantifies a time-averaged cost function (\hyperref[prop311]{Proposition~3.11} and \hyperref[cor312]{Corollary~3.12}). For the nonlinear control systems, in which $\bar{I}(E; X_0)$ can be interpreted as a limit of data transmission rate or average cost function, we provide a filtering-based approach to estimate $\bar{I}(E; X_0)$ (\hyperref[prop314]{Proposition~3.14}).

	\noindent (C3) By utilizing the I-MMSE relationship in \hyperref[cont1]{(C1)}, we prove that total information rate $\bar{I}(Y; X_0)$ or $\bar{I}(Y; X_0, W)$ quantifies the lowest achievable or a lower bound of the time-averaged prediction MMSE of noise-free output in a general nonlinear filtering setup (\hyperref[thm42]{Theorems~4.2} and \hyperref[thm44]{4.4}). Subsequently, symmetric to the individual control analyses in (\hyperref[cont2]{C2}), we show that total information rate $\bar{I}(Y; X_0)$, determined by the unstable or anti-stable dynamics, equals the sum of logarithmic unstable poles in the LTI filtering systems (\hyperref[prop45]{Proposition~4.5}), and is identical to a time-domain Bode's integral in the LTV filtering systems (\hyperref[prop47]{Proposition~4.7}). For the nonlinear filtering systems, we propose a nonlinear-filtering-based approach to estimate $\bar{I}(Y; X_0, W)$ (\hyperref[sec44]{Section IV-D}).

	The remainder of this paper is organized as follows. \hyperref[sec2]{Section II} introduces the preliminaries and derives the discrete-time I-MMSE relationship. \hyperref[sec3]{Sections III} and \hyperref[sec4]{IV} use total information (rate) to characterize the performance trade-offs of different control and filtering systems, and \hyperref[sec5]{Section~V} draws the conclusions.

	\noindent \textit{Notations}: For a discrete-time random variable $X_i$, $x_i$ denotes a sample or value of $X_i$. For a discrete-time random process, $X_0^n$ or $X_{0:n}$, from $i=0$ to $n$, $x_{[0:n]}$ stands for a sample path or value of $X_0^n$. The norm of $X_0^n$ is defined by $\|X_0^n\| = (\sum_{i=0}^{n}X_i^\top X_i)^{1/2}$. $x_0^t$ stands for a continuous-time random process from $\tau = 0$ to $t$. For a complex number $c$, $|c|$ denotes its modulus. $\rm e$ is Euler's number. For a matrix $M$, $\det M$ or $|M|$  denotes its determinant. $\mu_i(M)$ denote the singular values of $M$ and are ordered as $\overline{\mu}(M):=\mu_1(M) \geq \mu_2(M) \geq \cdots \geq \mu_r(M) =: \underline{\mu}(M)$. $\|M\|$ denotes the norm of $M$ and equals $\overline{\mu}(M)$. Limit inferior or $\liminf$ is abbreviated as $\underline{\lim}$, and limit superior or $\limsup$ is shortened as $\overline{\lim}$.

	\section{Preliminaries and Problem Formulation}\label{sec2}
	In this section, we first show the definitions and preliminary results of some information-theoretic metrics and discrete-time AWGN channel, and for the first time, derive the I-MMSE relationship between the total information and prediction/causal MMSEs of discrete-time AWGN channel with feedback.
	
	\vspace{-0.2em}
	
	\subsection{Information Theory}
	Throughout this paper, information-theoretic metrics, e.g., differential entropy and mutual information, are defined by their measure-theoretic notions \cite{Polyanskiy_2016, Mehta_TAC_2008}, which apply to the discrete-time processes without probability density function (pdf), and thus are more general than the traditional statistical notions~\cite{Cover_2012}. Differential entropy (rate) measures the amount (or time density) of randomness in a random vector or process. 
	\begin{definition}[Differential Entropy] \label{def21}
		For a discrete-time random process $X_0^n$ with image (or push-forward) measure $\mu^{}_{X}$, its differential entropy $h(X_0^n)$ is defined by
		\begin{equation*}
			h(X_0^n) := \int_{\mathcal{X}} \log\left( \frac{d\mu^{}_{\rm Leb}}{d\mu^{}_X} \right) d\mu^{}_X,
		\end{equation*}
		where $\mathcal{X}$ denotes the Borel $\sigma$-algebra of $X_0^n$, and $\mu^{}_{\rm Leb}$ stands for the Lebesgue measure. In particular, if $X_0^n$ has pdf $p^{}_X$, then $h(X_0^n) := \int_{\mathcal{X}}\log[1/p^{}_X(X_0^n)]  p^{}_X(X_0^n) dX_0^n = \mathbb{E}_{p_{X}^{}}[\log(1 /  p^{}_X(X_0^n))]$. Differential entropy rate of $X_0^n$ is defined by\footref{footnote2}
		\begin{equation*}
			\bar{h}(X) := \uplim_{n\rightarrow\infty} h(X_0^n) / (n+1).
		\end{equation*}
	\end{definition}
	\noindent If $\mu_x$ is not absolutely continuous w.r.t. $\mu^{}_{\rm Leb}$, i.e., $\mu_x \not\ll \mu^{}_{\rm Leb}$, we let $h(X_0^n) = -\infty$ by convention. Mutual information (rate) quantifies the dependency between two random objects and the reliable data transmission (rate) of communication channel.
	\begin{definition}[Mutual Information] \label{def22}
		For discrete-time random processes $X_0^n$ and $Y_0^n$ with image measures $\mu^{}_X$ and $\mu^{}_Y$, the mutual information between $X_0^n$ and $Y_0^n$ is defined by\footnote{\label{footnote2}The definitions of entropy rate and mutual information rate adopted in this paper are allowed to be infinity and aligned with the definitions in \cite{Martins_TAC_2007, Martins_TAC_2008}. The boundedness or existence of these stationary rates has been studied in \cite{Miao_SP_2020, Vu_NC_2009, Carlos_Entropy_2019}. For example, when $X_0^n$ and $Y_0^n$ are stationary Gaussian or jointly stationary, $\bar{h}(X)$ and $\bar{I}(X;Y)$ are bounded, and one can replace $\uplim$ by $\lim$ in their definitions. To focus on the core problem, in this paper, we assume and mainly consider the scenarios when these stationary rates are bounded.}
		\begin{equation*}
			I(X_0^n; Y_0^n) := \int_{\mathcal{X} \times \mathcal{Y}} \log \frac{d\mu^{}_{XY}}{d(\mu^{}_X \times \mu^{}_Y)} d\mu^{}_{XY}\allowdisplaybreaks,
		\end{equation*}
		where $\mathcal{X}$ and $\mathcal{Y}$ respectively denote the Borel $\sigma$-algebras of $X_0^n$ and $Y_0^n$, and $d\mu^{}_{XY} / d(\mu^{}_X \times \mu^{}_Y)$ is the Radon-Nikodym derivative between joint measure $\mu_{XY}$ and product measure $\mu_X \times \mu_Y$. In particular, if $X_0^n$ and $Y_0^n$ have pdfs $p_X^{}$ and $p_Y^{}$, then $I(X_0^n; Y_0^n) := \int_{\mathcal{X} \times \mathcal{Y}} \log [p^{}_{X\times Y}(X_0^n, Y_0^n) / (p_X^{}(X_0^n) \times p_Y^{}(Y_0^n)) ] \  dX_0^n dY_0^n$. Mutual information rate between $X_0^n$ and $Y_0^n$ is defined by
		\begin{equation*}
			\bar{I}(X;Y) := \uplim_{n\rightarrow \infty} {I(X_0^n; Y_0^n)} / (n+1). \allowdisplaybreaks
		\end{equation*}
	\end{definition}
	\noindent Properties of differential entropy and mutual information, e.g., invariance and maximum of differential entropy, data processing inequality, chain rule, symmetry, and non-negativity of mutual information, are not given in this paper. Interested readers are referred to \cite{Cover_2012, Li_TAC_2013, Polyanskiy_2016, Fang_2017, Wan_SCL_2019} and references therein for their rigorous definitions and proofs.

	\subsection{Discrete-Time Gaussian Channel}\label{sec22}
	Consider a discrete-time additive white Gaussian noise or AWGN channel with or without feedback described by \hyperref[fig1]{Fig.~1},
	\begin{figure}[H] \vspace{-0.5em}
		\centerline{\includegraphics[width=0.75\columnwidth]{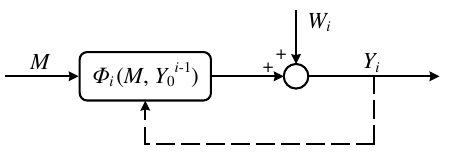}}
		\caption{Discrete-time additive Gaussian channel.} \vspace{-0.5em}
		\label{fig1}
	\end{figure}
	\noindent and the following difference equation
	\begin{equation}\label{Discrete_GC}
		Y_{i} =  \varPhi_i(M, Y_0^{i-1}) + W_i,
	\end{equation}
	\noindent where $Y_i$ and $Y_0^{i-1}$ respectively denote the channel output at time $i$ and the sample path of channel output from $0$ to $i-1$; the transmitted message $M$ can be a random variable or process; $\varPhi_i(M, Y_0^{i-1})$ stands for the channel input process or coding function of $M$, and $W_i$ is a white Gaussian channel noise independent from $M$. To describe a non-feedback channel, we replace the input function $\varPhi_i(M, Y_0^{i-1})$ in \eqref{Discrete_GC} with $\varPhi_i(M)$. Compared with the analysis of continuous-time channels performed on the infinite-dimensional spaces \cite{Kadota_TIT_1971a, Ihara_1993, Wan_arxiv_2022}, the convenience of analyzing the discrete-time channel \eqref{Discrete_GC} is that we can start from a finite-dimensional or -horizon problem with $i = 0, 1, \cdots, n$ and then generalize the results to infinite-horizon by taking the limit $n\rightarrow\infty$ \cite{Guo_2004}.

	Before we introduce the discrete-time I-MMSE relationship, some estimation performance metrics of channel \eqref{Discrete_GC} are defined as follows. Let the prior estimate of the input process be $\hat{\varPhi}^{-}_i := \mathbb{E}[\varPhi_i | Y_0^{i-1}]$. The one-step prediction MMSE of the input process $\varPhi_i$ is then defined by 
	\begin{equation}\label{pmmse}
		{\rm pmmse}(\varPhi_i) := \mathbb{E}[ (\varPhi_i - \hat{\varPhi}_i^{-})^\top(\varPhi_i - \hat{\varPhi}_i^{-})  ].
	\end{equation}
	\noindent Let the posterior estimate of the input process be $\hat{\varPhi}_i := \mathbb{E}[\varPhi_i|Y_0^i]$. The causal filtering MMSE of $\varPhi_i$ is denoted by 
	\begin{equation}\label{cmmse}
		{\rm cmmse}(\varPhi_i) := \mathbb{E}[ (\varPhi_i - \hat{\varPhi}_i)^\top(\varPhi_i - \hat{\varPhi}_i)  ].
	\end{equation}
	\noindent The following continuous-time I-MMSE relationship, which is also known as the Duncan's theorem \cite{Duncan_SIAMJAM_1970}, will be used to derive its discrete-time counterpart. 
	\begin{lemma}\label{lem23}
		For a continuous-time additive Gaussian channel $dy^{}_t = \sqrt{\rm snr} \ \phi(t, m, y_0^t) dt + dw^{}_t$ with transmitted message $m$, signal-to-noise ratio ${\rm snr} > 0$, channel input process $\phi^{}_t(m, y_0^t)$ or $\phi^{}_t(m)$ satisfying $\mathbb{E}[\phi_t^\top \phi_t^{}] < \infty$, channel output $y^{}_t$, and channel noise  $w_t$ -- a standard Brownian motion, the directed information $I(\phi_0^t \rightarrow y_0^t)$ and total information $I(m; y_0^t)$ satisfy
		\begin{equation}\label{lem23_eq1}
			I(\phi_0^t \rightarrow y_0^t) = I(m; y_0^t) = \frac{\rm snr}{2} \int_{0}^{t} {\rm cmmse}(\phi_\tau, {\rm snr}) \ d\tau,
		\end{equation}
		where ${\rm cmmse}(\phi_\tau, {\rm snr}) :=  \mathbb{E}[(\phi_\tau - \hat{\phi}_\tau)^\top (\phi_\tau - \hat{\phi}_\tau)]$ denotes the continuous-time causal MMSE, and $\hat{\phi}^{}_\tau := \mathbb{E}[\phi_\tau^{} | y_0^\tau]$ is the posterior estimate of channel input. 
	\end{lemma}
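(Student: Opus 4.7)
The plan is to split the claim into its two separate equalities and attack them with independent techniques. The first equality $I(\phi_0^t \to y_0^t) = I(m; y_0^t)$ is a structural statement about the causal feedback loop, while the second is the analytic core of Duncan's theorem and rests on Girsanov's theorem together with the nonlinear-filtering innovations representation.

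For the first equality I would work at the level of the integrands: write
\begin{equation*}
  I(\phi_0^t \to y_0^t) = \int_0^t I(\phi_0^\tau; dy_\tau \mid y_0^\tau), \qquad I(m; y_0^t) = \int_0^t I(m; dy_\tau \mid y_0^\tau),
\end{equation*}
and at each $\tau$ compare the two instantaneous mutual informations by expanding $I(m,\phi_0^\tau; dy_\tau \mid y_0^\tau)$ via chain rule in the two possible orders. Two Markov-type identities then collapse the expression: (a) $\phi_0^\tau$ is, by the setup, a measurable function of $(m, y_0^\tau)$, so $I(\phi_0^\tau; dy_\tau \mid m, y_0^\tau) = 0$; (b) conditional on $(\phi_0^\tau, y_0^\tau)$ the increment $dy_\tau = \sqrt{\rm snr}\,\phi_\tau\,d\tau + dw_\tau$ depends only on the independent noise increment, giving $I(m; dy_\tau \mid \phi_0^\tau, y_0^\tau) = 0$. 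Equating the two expansions yields the identity pointwise, and integrating over $[0,t]$ gives the first equality. The argument applies uniformly to feedback and non-feedback coding functions.

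For the second equality I would introduce Wiener measure $\mu_w$ as a reference. Girsanov's theorem applied to the conditional law of $y$ given $m$ gives
\begin{equation*}
  \log\frac{d\mu_{y\mid m}}{d\mu_w}(y_0^t) = \sqrt{\rm snr}\int_0^t \phi_\tau^\top\,dy_\tau - \frac{\rm snr}{2}\int_0^t \|\phi_\tau\|^2\,d\tau,
\end{equation*}
and by the Kallianpur-Striebel innovations formula from nonlinear filtering, the unconditional likelihood $d\mu_y/d\mu_w$ has the same shape with $\phi_\tau$ replaced by $\hat\phi_\tau$. Subtracting these two log-derivatives represents $\log(d\mu_{y\mid m}/d\mu_y)$ as an Itô integral plus a quadratic correction in $\phi_\tau - \hat\phi_\tau$. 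Taking expectation under the true law, the Itô integral reduces via $dy_\tau = \sqrt{\rm snr}\phi_\tau d\tau + dw_\tau$ (the martingale part vanishing in mean) to a deterministic time integral, the orthogonality $\mathbb{E}[(\phi_\tau - \hat\phi_\tau)^\top \hat\phi_\tau] = 0$ annihilates the cross-terms, and both the linear and quadratic pieces collapse to $\int_0^t {\rm cmmse}(\phi_\tau, {\rm snr})\,d\tau$; combining coefficients produces the prefactor $\rm snr/2$.

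The main obstacle is analytic rather than conceptual: one must justify that the exponential local martingale in Girsanov is a true martingale, which typically calls for Novikov's condition or a localisation argument beyond the bare hypothesis $\mathbb{E}[\phi_\tau^\top \phi_\tau] < \infty$. A standard workaround (due to Duncan, and Kadota-Zakai-Ziv) is to first truncate $\phi$, establish the identity for the truncated process, and pass to the limit via monotone/dominated convergence. In the feedback case one additionally needs to check that $\phi$ is progressively measurable with respect to the joint filtration generated by $m$ and $y$, so that the Itô integrals and the Kallianpur-Striebel representation remain well-posed.
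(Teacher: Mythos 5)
The paper does not actually prove Lemma~2.3: it is stated as a known result, with the first equality deferred to Weissman et al.\ and the second (Duncan's theorem, with feedback) to Kadota--Zakai--Ziv and Guo--Shamai--Verd\'u. Your sketch reconstructs precisely the arguments of those cited sources, and it is correct in substance: the two-way chain-rule expansion with the Markov identities $I(\phi_0^\tau; dy_\tau \mid m, y_0^\tau)=0$ and $I(m; dy_\tau \mid \phi_0^\tau, y_0^\tau)=0$ is how the directed-information identity is established, and the Girsanov/Kallianpur--Striebel subtraction with the orthogonality of the filtering error is the standard route to the $\tfrac{\rm snr}{2}\int_0^t {\rm cmmse}$ formula (your bookkeeping ${\rm snr}\cdot{\rm cmmse}-\tfrac{\rm snr}{2}{\rm cmmse}=\tfrac{\rm snr}{2}{\rm cmmse}$ checks out). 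Two points would need tightening to make this a complete proof rather than a sketch: the ``instantaneous'' quantities $I(\,\cdot\,; dy_\tau \mid y_0^\tau)$ are not literally defined, so the first-equality argument must be run on finite partitions of $[0,t]$ and passed to the limit, which is exactly how the cited reference defines continuous-time directed information; and, as you note, the martingale property of the Girsanov exponential and the vanishing of the It\^o integral in expectation require a truncation/localisation argument beyond $\mathbb{E}[\phi_\tau^\top\phi_\tau]<\infty$ alone. Since the paper offloads both issues to its references, your proposal is faithful to the paper's (implicit) proof and contains no genuine gap.
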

	\noindent Directed information, as a generalization of input-output mutual information to random objects obeying causal relation, and total information are widely used to analyze the control and communication problems with feedback. However, since total information can grant more flexibility and convenience in our later derivations, it will be favored throughout this paper. The second equality in \eqref{lem23_eq1} reveals the continuous-time I-MMSE relationship or Duncan's theorem, whose proof is available in \cite{Kadota_TIT_1971a, Guo_TIT_2005}. Based on \hyperref[lem23]{Lemma~2.3}, the following theorem shows a discrete-time I-MMSE relationship of Gaussian channel \eqref{Discrete_GC}. 
	\begin{theorem}\label{thm24}
		For the discrete-time additive white Gaussian channel with or without feedback described by \eqref{Discrete_GC} and \hyperref[fig1]{Fig.~1}, when the input process satisfies $\mathbb{E}[\varPhi_i^\top \varPhi^{}_i] < \infty$, $i = 0, \cdots, n$, the total information $I(M; Y_0^n)$ in \eqref{Discrete_GC} is subject to
		\begin{equation}\label{thm24_eq1}
			\frac{1}{2}  \sum_{i=0}^{n} {\rm cmmse}(\varPhi_i) \leq I(M; Y_0^n) \leq \frac{1}{2}  \sum_{i=0}^{n} {\rm pmmse}(\varPhi_i).
		\end{equation}
	\end{theorem}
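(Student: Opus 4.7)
\textit{Proof proposal.} The plan is to lift the discrete-time channel \eqref{Discrete_GC} into a continuous-time Gaussian channel with piecewise-constant drift, apply Duncan's theorem (\hyperref[lem23]{Lemma 2.3}), and then squeeze the resulting time integral of the continuous-time causal MMSE between the sums of the discrete pmmse and cmmse.

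First I would construct the auxiliary continuous-time channel $dy_\tau=\phi(\tau)\,d\tau+dw_\tau$ on $\tau\in[0,n+1]$, where $\phi(\tau):=\varPhi_{\lfloor\tau\rfloor}(M,Y_0^{\lfloor\tau\rfloor-1})$ and the discrete samples are recovered by the increments $Y_i:=y_{i+1}-y_i$. On each unit interval $[i,i+1)$ the drift is constant and equal to $\varPhi_i$, so $Y_i=\varPhi_i+(w_{i+1}-w_i)$ reproduces the law of \eqref{Discrete_GC}; moreover the construction is causal in $M$ because $\varPhi_i$ depends only on $M$ and on past increments $Y_0^{i-1}$, which are measurable with respect to $y_0^i$.

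Next I would establish the identity $I(M;y_0^{n+1})=I(M;Y_0^n)$. Data processing yields one inequality. For the reverse, on every interval $[i,i+1]$, given the endpoints $y_i,y_{i+1}$ (equivalently given $Y_i$) and the drift $\varPhi_i$, the interior path is a Brownian bridge, and by the standard sufficient-statistic property of Brownian motion with constant drift, this bridge is conditionally independent of $\varPhi_i$ (hence of $M$) once $Y_i$ is known. Stitching these intervals together shows that $y_0^{n+1}$ is conditionally independent of $M$ given $Y_0^n$, so no additional mutual information is gained. Applying \hyperref[lem23]{Lemma 2.3} with ${\rm snr}=1$ then yields
\begin{equation*}
I(M;Y_0^n)=\tfrac{1}{2}\int_0^{n+1}{\rm cmmse}(\phi_\tau)\,d\tau=\tfrac{1}{2}\sum_{i=0}^n\int_i^{i+1}{\rm cmmse}(\phi_\tau)\,d\tau.
\end{equation*}

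Finally I would pin down the two endpoint values of the continuous-time causal MMSE on $[i,i+1)$. At $\tau=i$, the same Brownian-bridge argument shows that the sub-integer portion of $y_0^i$ is independent of $M$ given $Y_0^{i-1}$, so $\mathbb{E}[\varPhi_i\mid y_0^i]=\mathbb{E}[\varPhi_i\mid Y_0^{i-1}]=\hat{\varPhi}_i^{-}$, giving ${\rm cmmse}(\phi_i)={\rm pmmse}(\varPhi_i)$. Letting $\tau\uparrow i+1$, the filtration augments to include $Y_i$, so the conditional expectation reduces to $\hat{\varPhi}_i$, giving ${\rm cmmse}(\phi_{(i+1)^-})={\rm cmmse}(\varPhi_i)$. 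Monotonicity of MMSE under an expanding filtration then bounds the integrand on the whole interval between these two values, yielding ${\rm cmmse}(\varPhi_i)\le\int_i^{i+1}{\rm cmmse}(\phi_\tau)\,d\tau\le{\rm pmmse}(\varPhi_i)$, and summing over $i$ produces \eqref{thm24_eq1}.

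The routine portion is Duncan's theorem and the MMSE-monotonicity step. The delicate part, and the main obstacle, is the Brownian-bridge sufficiency argument used twice: once to show $I(M;y_0^{n+1})=I(M;Y_0^n)$ and again to identify the two boundary values of ${\rm cmmse}(\phi_\tau)$ with ${\rm pmmse}(\varPhi_i)$ and ${\rm cmmse}(\varPhi_i)$. Making this rigorous in the feedback setting requires care with the filtrations, because $\varPhi_i$ depends on the past $Y_0^{i-1}$ and one must verify that the continuous-time posterior genuinely collapses onto the discrete-time one at the interval endpoints; this is where the piecewise-constant structure of $\phi(\tau)$ is essential.
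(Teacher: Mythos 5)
Your proposal is correct and follows essentially the same route as the paper's proof: lift \eqref{Discrete_GC} to a continuous-time channel with piecewise-constant input, identify $I(M;Y_0^n)$ with the continuous-time total information via a sufficiency argument, apply Duncan's theorem, and sandwich the integrand ${\rm cmmse}(\phi_\tau)$ on each unit interval between ${\rm cmmse}(\varPhi_i)$ and ${\rm pmmse}(\varPhi_i)$ by filtration monotonicity (the paper's inequality \eqref{thm24_eq4}). Your treatment of the Brownian-bridge sufficiency step is more explicit than the paper's, which simply asserts that the integer-time samples are sufficient statistics for $M$.
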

	\begin{proof}[\rm\bf Proof]
		When \eqref{Discrete_GC} represents a non-feedback channel, inequalities \eqref{thm24_eq1} can be proved by following Theorem~9 in \cite{Guo_TIT_2005} and the identities $I(\varPhi_0^n; Y_0^n) = I(M; Y_0^n) = I(\varPhi_0^n \rightarrow Y_0^n)$ of the non-feedback Gaussian channel presented in \cite{Massey_ISIT_1990, Kim_TIT_2008}, where the discrete-time directed information is defined by $I(\varPhi_0^n \rightarrow  Y_0^n) \break :=  \sum_{i=0}^{n}I(\varPhi_0^i; Y_i|Y_0^{i-1})$. However, it is not trivial to prove that the I-MMSE relationship \eqref{thm24_eq1} can carry over to the discrete-time AWGN channel with feedback, since several fundamental paradigms, such as the SNR-incremental channel, differential I-MMSE relationship along with the related results and proofs in \cite{Guo_TIT_2005}, as well as some information-theoretic measures, e.g., input-output mutual information $I(\varPhi_0^n; Y_0^n)$ in \eqref{Discrete_GC}, are either invalid or not well-defined in the presence of feedback.

		In the following, motivated by \cite{Guo_2004}, we are to prove that inequalities \eqref{thm24_eq1} hold in the discrete-time  AWGN channel with feedback by using the continuous-time I-MMSE relationship or Duncan's theorem in \hyperref[lem23]{Lemma~2.3}. Consider the following continuous-time channel with piecewise-constant message $m$ and input $\phi^{}_{t}$ as an equivalent of the discrete-time channel \eqref{Discrete_GC}
		\begin{equation}\label{thm24_eq2}
			dy^{}_t = \sqrt{\rm snr} \ \phi^{}_{t}(m, y_0^t) dt + dw^{}_t, \quad t \in [0, \infty).
		\end{equation}
		\noindent To align with the expression of \eqref{Discrete_GC}, we assume ${\rm snr} = 1$ in \eqref{thm24_eq2} by default. On the time interval $t \in (i-1, i]$, or more precisely $t \in (t_{i-1}, t_i]$, the piecewise-constant message $m$ is equal to the value of random variable/vector $M$ in \eqref{Discrete_GC} at time $i$, and the input to the continuous-time channel \eqref{thm24_eq2} equals the value of random vector/process $\varPhi_i$ in \eqref{Discrete_GC} at time $i$, i.e., $\phi_t(m, y_0^t) = \varPhi_i(M, Y_0^{i-1})$ if $t \in (i-1, i]$. At time step $t = i$, we further have the equality $w_t = W_i$, and thus $y_t = Y_i$ in \eqref{Discrete_GC} and \eqref{thm24_eq2}. Consequently, the following relationship of total information in \eqref{Discrete_GC} and \eqref{thm24_eq2} can be established 
		\begin{equation}\label{thm24_eq3}
			I(M; Y_0^n) = I(M; y_0^n) = I(m; y_0^n),
		\end{equation}
		\noindent where the first equality uses the fact that $Y_0^n$, or the samples of continuous-time process $y_0^n$ at natural numbers, are sufficient statistics for message $M$; the second equality follows the fact that discrete-time message $M$ and piecewise-constant message $m$ have the same randomness, and $I(M; Y_0^n)$ also equals the directed information $I(\varPhi_0^n \rightarrow Y_0^n)$ by \cite{Kim_TIT_2008}.

		Since the causal MMSEs of the discrete-time channel \eqref{Discrete_GC} and continuous-time channel \eqref{thm24_eq2} are identical at time $t = i$, we denote them collectively by ${\rm cmmse}(\varPhi_i, {\rm snr})$ in the following. Meanwhile, when $t\in(i-1, i]$, since the filtration generated by $y_0^i$ or $Y_0^i$ contains more information about $\phi_i$ or $\varPhi_i$ than the filtration generated by $y_0^t$, which in turn contains more information about $\phi_i$ or $\varPhi_i$ than $y_0^{i-1}$ or $Y_0^{i-1}$, we have
		\begin{equation}\label{thm24_eq4}
			{\rm cmmse}(\varPhi_i, {\rm snr}) \leq {\rm cmmse}(\phi_t, {\rm snr}) \leq {\rm pmmse}(\varPhi_i, {\rm snr}),
		\end{equation}
		\noindent where ${\rm pmmse}(\varPhi_i, {\rm snr})$ and ${\rm cmmse}(\varPhi_i, {\rm snr})$ follow the definitions in \eqref{pmmse} and $\eqref{cmmse}$ when ${\rm snr} = 1$, and the continuous-time causal MMSE ${\rm cmmse}(\phi_t, {\rm snr})$ was defined in \eqref{lem23_eq1}. Letting ${\rm snr} = 1$ in \eqref{thm24_eq2}, and integrating \eqref{thm24_eq4} over $t$ from $0$ to $n$, we obtain \eqref{thm24_eq1}, which follow the facts that $I(M;  Y_0^n)  =   ({\rm snr} /  {2}) \break \int_{0}^{n} {\rm cmmse}(\phi_t^{}, {\rm snr}) dt$ by \hyperref[lem23]{Lemma 2.3} and \eqref{thm24_eq3}, $\int_{0}^{n} {\rm cmmse}(\varPhi_i, \break {\rm snr}) dt  =  \sum_{i=0}^{n} {\rm cmmse}(\varPhi_i, {\rm snr})$, and $\int_{0}^{n} {\rm pmmse}(\varPhi_i,  {\rm snr}) dt  = \sum_{i=0}^{n} {\rm pmmse}(\varPhi_i, {\rm snr})$. 
	\end{proof}	
	\noindent \hyperref[thm24]{Theorem 2.4} develops an I-MMSE relationship between the total information and the prediction/causal MMSEs of discrete-time AWGN channels with and without feedback. This result not only serves as a discrete-time counterpart of the Duncan's theorem in \hyperref[lem23]{Lemma 2.3}, but provides a rigorous approach to generalize \cite[Theorem 9]{Guo_TIT_2005} into the feedback scenario. Similar to the preceding I-MMSE relationships, \hyperref[thm24]{Theorem 2.4} holds regardless of the distribution and stationarity of the signals or initial states in \hyperref[fig1]{Fig. 1}. Unlike Duncan's theorem, which is an equality condition, \hyperref[thm24]{Theorem 2.4} establishes a sandwich inequality between $I(M; Y_0^n)$ and the causal/prediction MMSEs. Although there does exist an equality and differential I-MMSE relationship between $I(M; Y_0^n)$ and the sum of a correctional term and non-causal smoothing MMSEs\footnote{The non-causal MMSEs are defined by ${\rm nmmse}(\varPhi^{}_i) :=  \mathbb{E}[ (\varPhi^{}_i   -  \hat{\varPhi}^{+}_i)^\top \cdot \break  (\varPhi^{}_i - \hat{\varPhi}^{+}_i)]$ with $i = 0,\cdots, n$ and smoothed estimate $\hat{\varPhi}^{+}_i := \mathbb{E}[\varPhi^{}_i|Y_0^n]$.} in the discrete-time AWGN channels \cite{Han_TIT_2016}, since the differential relation is not as straightforward as \eqref{thm24_eq1}, and non-causal MMSEs are not always attainable in practice, we will not cover this equality I-MMSE relationship in this paper. In the following sections, by utilizing \hyperref[thm24]{Theorem~2.4}, we are to interpret how total information (rate) serve as a trade-off metric in capturing fundamental limitations of control and filtering systems.

	\section{Discrete-Time Control Trade-offs}\label{sec3}
	To characterize the performance limits of control systems, we first model the general discrete-time control systems into an additive white Gaussian channel with feedback. By resorting to \hyperref[thm24]{Theorem 2.4} and optimal estimation theory, total information rate and its sandwich bounds are then utilized to capture the fundamental trade-offs of various control systems.
	
	\vspace{-0.2em}
	
	\subsection{General Control Systems and Trade-offs}\label{sec31}
	Consider the general discrete-time feedback control system illustrated by the  diagrams in \hyperref[fig2]{Fig.~2},
	\begin{figure}[H] \vspace{-0.65em}
		\centerline{\includegraphics[width=0.93\columnwidth]{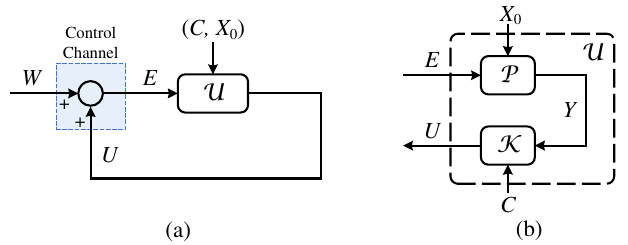}}
		\caption{Configuration of a general control system. (a) Block diagram of control system. (b) Block diagram of control input process $\mathcal{U}$.}\label{fig2}
	\end{figure}\vspace{-0.65em}
	\noindent where $\mathcal{U}$ stands for the lumped control process or the input coding function of ``message" $(C, X_0)$, which can be further decomposed into a plant model $\mathcal{P}$ and a control mapping $\mathcal{K}$; $Y$ and $X_0$ are the measured output and initial states of $\mathcal{P}$; $C$ is the command signal or control noise of $\mathcal{K}$; $U$ denotes the control input; external disturbance or channel noise $W$, independent from $(C, X_0)$, is additive white Gaussian or memory-less, and $E$ stands for the error signal.\footnote{We follow the general control system setup in \cite{Ishii_SCL_2011, Wan_SCL_2019} when the channel noise $W$ is added to the control signal $U$. By switching the roles of $\mathcal{P}$ and $\mathcal{K}$ in \hyperref[fig2]{Fig.~2}, we can also analyze the case when the noise $W$ is imposed on the measured output $Y$. The analysis procedures and results of these two scenarios are symmetric, while the only difference is in the notations.} We use the following stochastic difference equation to describe the plant $\mathcal{P}$: 
	\begin{equation}\label{Ctrl_Plant}
		\begin{split}
			X_{i+1} & = f_i(X_i) + b_i(X_i) E_i \allowdisplaybreaks\\
			Y_{i} & = h_{i}(X_i) ,
		\end{split}
	\end{equation}
	\noindent where $X_i$, $E_i$, and $Y_i$ are respectively the internal states of plant $\mathcal{P}$, error signal, and measured output at time $i$. The control mapping $\mathcal{K}$ can be realized by either the following causal stabilizing function with memory
	\begin{equation}\label{Ctrl}
		U_i = g^{}_i(Y_0^i, C_0^i)
	\end{equation}	
	\noindent or the memory-less function $U_i = g_i(Y_i ,C_i)$. Combining \eqref{Ctrl_Plant} and \eqref{Ctrl}, we can describe the control channel in \hyperref[fig2]{Fig.~2} and the error signal $E_i$ in \eqref{Ctrl_Plant} by
	\begin{equation}\label{Ctrl_Channel}
		E_i = U_i(E_0^{i-1}, C_0^i, X^{}_0) + W_i ,
	\end{equation}
	\noindent where input function $U_i(E_0^{i-1}, C_0^i, X^{}_0) = g_i(Y_0^i, C_0^i)$, and external disturbance or channel noise $W_i \sim \mathcal{N}(0, I)$. Meanwhile, we postulate the following assumption on the control system and channel depicted in \eqref{Ctrl_Plant}-\eqref{Ctrl_Channel}. 
	\vspace{0.5em}
	
	\noindent (A1) \label{ass1} The control system in \eqref{Ctrl_Plant}-\eqref{Ctrl_Channel} is internally mean-square stable, i.e., $\sup_i \mathbb{E}[X_i^\top X^{}_i] < \infty$, and the average power of control input $U_i$ is finite, i.e., $\mathbb{E}[U_i^\top U_i^{}] < \infty, \forall i$. 
	\begin{remark}
		The mean-square stability in \hyperref[ass1]{(A1)} is fundamental and commonly posited in the analyses of stochastic control systems, e.g., \cite{Martins_TAC_2008, Li_TAC_2013}. The average power constraint in \hyperref[ass1]{(A1)} is a preliminary condition for investigating the I-MMSE relationships, since it guarantees the absolute continuity or equivalence between the image measures for calculating total information and mutual information \cite{Ihara_1993, Guo_2004}.
	\end{remark}

	By applying \hyperref[thm24]{Theorem 2.4} to  \eqref{Ctrl_Channel}, we obtain a MMSE-based sandwich bound for the total information (rate) $I(E_0^n; C_0^n, X_0)$ in control channel \eqref{Ctrl_Channel}. Similar to some famous control trade-off metrics, e.g., Bode-like integrals and average entropy cost, $I(E_0^n; C_0^n, X_0)$ characterizes several fundamental trade-offs of control systems varied by the specific form of $\mathcal{U}$.
	\begin{theorem}\label{thm32}
		For the discrete-time control systems described in \eqref{Ctrl_Plant}-\eqref{Ctrl_Channel} and \hyperref[fig2]{Fig.~2}, the total information $I(E_0^n; C_0^n, X^{}_0)$ in \eqref{Ctrl_Channel} is bounded between
		\begin{equation}\label{thm32_eq1}
			\frac{1}{2}  \sum_{i=0}^n {\rm cmmse}(U_i) \leq I(E_0^n; C_0^n, X^{}_0) \leq \frac{1}{2}  \sum_{i=0}^n {\rm pmmse}(U_i),
		\end{equation}
		and the total information rate $\bar{I}(E; C, X_0)$ satisfies\footnote{In particular, when $(E_0^n, C_0^n)$ are stationary Gaussian or jointly stationary, and based on \eqref{thm32_eq1} and \hyperref[def22]{Definition~2.2}, we have $\lim_{n\rightarrow\infty} [2(n+1)]^{-1} \cdot \sum_{i=0}^{n} {\rm cmmse}(U_i)  \leq \bar{I}(E; C, X_0) = \lim_{n\rightarrow \infty} {I(E_0^n; C_0^n, X^{}_0)} / (n+1) \leq  {\lim_{n\rightarrow\infty}}  [2(n+1)]^{-1} \sum_{i=0}^{n} {\rm pmmse}(U_i)$.}
		\begin{align}\label{thm32_eq2}
			\uplim_{n\rightarrow\infty} \sum_{i=0}^{n} \frac{{\rm cmmse}(U_i)}{2(n+1)}    \leq \bar{I}(E; C, X_0) & \leq {\uplim_{n\rightarrow\infty}} \sum_{i=0}^{n} \frac{{\rm pmmse}(U_i)}{2(n+1)} .
		\end{align}
	\end{theorem}	
	\begin{proof}[\rm\bf Proof]
		Inequalities \eqref{thm32_eq1} can be directly proved by applying \hyperref[thm24]{Theorem 2.4} to the discrete-time control channel \eqref{Ctrl_Channel}, wherein $(C_0^i, X_0)$ corresponds to the message $M$ in \eqref{Discrete_GC}. Inequalities \eqref{thm32_eq2} can be derived from \eqref{thm32_eq1} and \hyperref[def22]{Definition 2.2}.		
	\end{proof}	
	\noindent Utilizing the fundamental identities and properties of differential entropy and mutual information, we have the following equality constraints and decomposition for $I(E_0^n; C_0^n, X_0)$. 
	\begin{proposition}\label{prop33}
		For the discrete-time control systems in \eqref{Ctrl_Plant}-\eqref{Ctrl_Channel} and \hyperref[fig2]{Fig.~2}, the total information $I(E_0^n; C_0^n, X_0)$ satisfies
		\begin{equation}\label{prop33_eq1}
			I(E_0^n; C_0^n, X_0) = I(E_0^n; X_0) + I(E_0^n; C_0^n|X_0), \allowdisplaybreaks
		\end{equation}
		and
		\begin{equation}\label{prop33_eq2}
			I(E_0^n; C_0^n, X_0) = h(E_0^n) - h(W_0^n).
		\end{equation}
	\end{proposition}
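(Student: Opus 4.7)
My plan is to prove the two equalities separately. For \eqref{prop33_eq1}, I would simply invoke the chain rule of mutual information, $I(A; B, D) = I(A; B) + I(A; D \mid B)$, with $A = E_0^n$, $B = X_0$, and $D = C_0^n$. This identity is standard in the measure-theoretic setup of \hyperref[def22]{Definition~2.2} (see the references cited just after that definition), so no further argument is required.

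For \eqref{prop33_eq2}, my plan is to start from the entropy representation $I(E_0^n; C_0^n, X_0) = h(E_0^n) - h(E_0^n \mid C_0^n, X_0)$ and reduce the conditional entropy to $h(W_0^n)$. First, I would expand the conditional entropy via the entropy chain rule,
\[
h(E_0^n \mid C_0^n, X_0) = \sum_{i=0}^n h(E_i \mid E_0^{i-1}, C_0^n, X_0).
\]
Next, using the control-channel equation \eqref{Ctrl_Channel}, $E_i = U_i(E_0^{i-1}, C_0^i, X_0) + W_i$, the input $U_i$ is a deterministic function of the conditioning variables, so translation invariance of differential entropy gives
\[
h(E_i \mid E_0^{i-1}, C_0^n, X_0) = h(W_i \mid E_0^{i-1}, C_0^n, X_0).
\]
Finally, since $W_i$ is white Gaussian and independent of $(C, X_0)$ and of $W_0^{i-1}$, and since $E_0^{i-1}$ is measurable with respect to $\sigma(W_0^{i-1}, C_0^{i-1}, X_0)$ through the causal closed-loop recursion \eqref{Ctrl_Plant}--\eqref{Ctrl_Channel}, $W_i$ is independent of the entire conditioning, so this conditional entropy collapses to $h(W_i)$. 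Summing over $i$ and using independence across time to write $h(W_0^n) = \sum_{i=0}^n h(W_i)$ yields \eqref{prop33_eq2}.

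The only subtle point is justifying that conditioning on the full command sequence $C_0^n$, rather than just its causal prefix $C_0^i$, does not alter the conditional entropy of $W_i$; this is precisely where causality of the control mapping $\mathcal{K}$ combined with the independence of $W$ from the message $(C, X_0)$ is used. Assumption \hyperref[ass1]{(A1)} on finite average power ensures that the differential entropies appearing above are well defined and that the chain-rule manipulations are rigorous in the measure-theoretic sense of \hyperref[def21]{Definitions~2.1} and \hyperref[def22]{2.2}.
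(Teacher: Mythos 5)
Your proposal is correct and follows essentially the same route as the paper: chain rule of mutual information for \eqref{prop33_eq1}, and for \eqref{prop33_eq2} the entropy chain rule, translation invariance to swap $E_i$ for $W_i$ in the conditional entropy, and the causality/independence argument that reduces the conditioning (the paper passes through $h(W_i\mid W_0^{i-1},C_0^i,X_0)=h(W_i\mid W_0^{i-1})$ before summing, whereas you collapse directly to $h(W_i)$ using whiteness, which is an immaterial difference). Your explicit remark about conditioning on the full sequence $C_0^n$ versus the causal prefix is a point the paper handles only implicitly, so no gap remains.
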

	\begin{proof}[\rm\bf Proof]
		\eqref{prop33_eq1} is verified by applying the chain rule of mutual information to $I(E_0^n; C_0^n, X_0)$. \eqref{prop33_eq2} is proved by \vspace{-0.5em}
		\begin{align*}
			I(E_0^n; C_0^n, X_0) 
			&\neweq{(a)} h(E_0^n) - \sum_{i=0}^{n} h(E_i | E_0^{i-1}, C_0^n, X_0) \allowdisplaybreaks\\
			&\neweq{(b)} h(E_0^n) - \sum_{i=0}^{n} h(W_i | E_0^{i-1}, C_0^i, X_0) \allowdisplaybreaks\\
			&\neweq{(c)} h(E_0^n) - \sum_{i=0}^{n} h(W_i | W_0^{i-1}, C_0^i, X_0) \allowdisplaybreaks\\
			&\neweq{(d)}  h(E_0^n) - \sum_{i=0}^{n} h(W_i | W_0^{i-1}),
		\end{align*}
		where (a) follows from the identity between mutual information and differential entropy, and the chain rule of differential entropy; (b) relies on \eqref{Ctrl_Channel}, in which $U_i$ is a function of $(E_0^{i-1}, \break  C_0^i,    X_0)$; (c) uses the fact that $(E_0^{i-1}, C_0^i, X_0)$ is a function of $(W_0^{i-1}, C_0^i,   X_0)$, and (d) implies \eqref{prop33_eq2} by the chain rule. 
	\end{proof}	
	\noindent The identities and representation in \hyperref[prop33]{Proposition 3.3}, developed from the pure and conventional information-theoretic relations, have been a pivotal tool for studying the total information (rate) as a control performance trade-off metric in \cite{Li_TAC_2013, Zhao_Auto_2014, Wan_SCL_2019}. For example, the important entropy inequality in \cite[Theorem 4.2]{Martins_TAC_2008}, i.e., $h(E_0^n) \geq I(E_0^n; X_0) + h(W_0^n)$, can be restored from \eqref{prop33_eq1} and \eqref{prop33_eq2}. On the contrary, based on the I-MMSE relationship, \hyperref[thm32]{Theorem 3.2} establishes an inequality constraint and sandwich-type bounds for the total information (rate), which provide an alternative approach for analyzing and estimating this control trade-off metric and will be studied in more detail in the rest of this section.

	By combining \hyperref[thm32]{Theorem 3.2} and \hyperref[prop33]{Proposition 3.3}, we then generalize the existing interpretations and bounds for the total information rate $\bar{I}(E; C, X_0)$. Before we show this extension, the notations of feedback capacity $\mathcal{C}_f$ and channel capacity $\mathcal{C}$ are introduced. For the control channel \eqref{Ctrl_Channel}, we define the (information) feedback capacity\footnote{In this paper, we adopt the mutual information (rate) version of feedback capacity as in \cite{Kadota_TIT_1971b, Ihara_1993}, as opposed to the (operational) feedback capacity, or the supremum of data rate, used in \cite{Kim_TIT_2006, Weissman_TIT_2013}.} $\mathcal{C}_f := \sup_{(C, X_0, U)} \bar{I}(E; C, X_0)$ as the supremum of total information rate over all admissible pairs of the ``message" $(C, X_0)$ and input $U$; for the same channel without feedback, when the power of input signal is limited by $\mathbb{E}[U_i^\top U^{}_i] \leq \rho_i, \forall i$, the channel capacity $\mathcal{C} := \break \sup_{\mathbb{E}[U_i^\top U^{}_i] \leq \rho^{}_i}\bar{I}(U; E)$ is defined as the supremum of input-output mutual information rate. The following theorem shows an extended sandwich bound on $\bar{I}(E; C, X_0)$.
	\begin{theorem}\label{thm34}
		For the discrete-time control systems described in \eqref{Ctrl_Plant}-\eqref{Ctrl_Channel} and \hyperref[fig2]{Fig.~2}, the total information rate $\bar{I}(E; C, X_0)$ in \eqref{Ctrl_Channel} is bounded between 
		\begin{equation}\label{thm34_eq1}
			{\rm CLB} \leq \bar{I}(E; C, X_0) \leq \mathcal{C}_f \leq \uplim_{n\rightarrow\infty} \sum_{i=0}^{n} \frac{{\rm pmmse}(U_i)}{2(n+1)}, \allowdisplaybreaks
		\end{equation}
		where control lower bound ${\rm CLB} =  \max\{ \bar{I}(E; X_0),  \overline{\lim}_{n\rightarrow\infty} \break [2 (n  + 1)]^{-1}   \sum_{i=0}^{n} {\rm cmmse}(U_i) \}$. When $\mathbb{E}[U_i^\top U_i^{}] \leq \rho_i^{}$, $\forall i$, and $\Sigma_{U_i}$ denotes the covariance of $U_i$, the feedback capacity $\mathcal{C}_f = \mathcal{C} =\uplim_{n\rightarrow \infty} \sup_{\mathbb{E}[U_i^\top U^{}_i] \leq \rho^{}_i} [2(n+1)]^{-1} \sum_{i=0}^{n} \log(|\Sigma_{U_i}  +I|)$. 
	\end{theorem}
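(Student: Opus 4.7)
The plan is to split \eqref{thm34_eq1} into four inequalities plus the capacity identity, each traceable to a result already established in the excerpt. First, I would verify both halves of the control lower bound CLB. The inequality $\bar{I}(E;X_0)\leq \bar{I}(E;C,X_0)$ follows by dividing the chain rule \eqref{prop33_eq1} of Proposition~3.3 by $n+1$, letting $n\rightarrow\infty$, and using $I(E_0^n;C_0^n\mid X_0)\geq 0$. The causal-MMSE bound $\uplim_{n\rightarrow\infty} [2(n+1)]^{-1}\sum_{i=0}^n {\rm cmmse}(U_i)\leq \bar{I}(E;C,X_0)$ is exactly the left inequality in \eqref{thm32_eq2} of Theorem~3.2. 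Taking the maximum of the two yields ${\rm CLB}\leq \bar{I}(E;C,X_0)$.

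Next, the middle inequality $\bar{I}(E;C,X_0)\leq \mathcal{C}_f$ is simply the definition of the feedback capacity as a supremum over admissible $(C,X_0)$. The outer bound $\mathcal{C}_f \leq \uplim_{n\rightarrow\infty} [2(n+1)]^{-1}\sum_{i=0}^n {\rm pmmse}(U_i)$ then follows because the right inequality in \eqref{thm32_eq2} holds uniformly over admissible messages and is preserved under the supremum.

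For the capacity identity, the direction $\mathcal{C}\leq \mathcal{C}_f$ is immediate since every non-feedback admissible input is realizable as a (feedback-independent) feedback coding function. For the reverse $\mathcal{C}_f\leq \mathcal{C}$, I would invoke the classical fact that feedback does not enlarge the Shannon capacity of a memoryless additive-Gaussian channel. Because $W_i$ is i.i.d.\ and independent of the message, the directed information $I(U_0^n\rightarrow E_0^n)=\sum_{i=0}^n I(U_0^i;E_i\mid E_0^{i-1})$ collapses, via the memoryless structure of $W_i$ together with the causal Markov property $(U_0^{i-1},E_0^{i-1})\rightarrow U_i\rightarrow E_i$, into per-letter bounds $\sum_i I(U_i;E_i)$; each term is in turn at most $\tfrac{1}{2}\log|\Sigma_{U_i}+I|$ by Gaussian maximum-entropy, with equality attained by independent Gaussian $U_i$. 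Normalizing by $n+1$, taking $\uplim$, and supremizing over covariances subject to $\mathbb{E}[U_i^\top U_i]\leq\rho_i$ produces the stated formula for $\mathcal{C}_f$.

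The principal obstacle is this feedback-removal step. Since $U_i$ depends on $E_0^{i-1}$ and hence is correlated with the past noise, one cannot naively replace $I(U_0^i;E_i\mid E_0^{i-1})$ by $I(U_i;E_i)$; the argument requires exploiting the whiteness of $W_i$ and the causal structure of \eqref{Ctrl_Channel}. Rather than rederive this well-known reduction, my plan is to cite the feedback-capacity treatments of \cite{Kim_TIT_2008} (already used in the proof of Theorem~2.4) and the standard parallel-Gaussian capacity formula, and to keep the emphasis on how Proposition~3.3 and Theorem~3.2 combine to produce the new sandwich structure around $\bar{I}(E;C,X_0)$.
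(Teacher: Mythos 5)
Your proposal is correct, and for the sandwich structure around $\bar{I}(E;C,X_0)$ it follows the paper's proof essentially verbatim: the two halves of $\rm CLB$ come from the non-negativity of $I(E_0^n;C_0^n\mid X_0)$ in \eqref{prop33_eq1} and the left inequality of \eqref{thm32_eq2}, while the chain $\bar{I}(E;C,X_0)\leq\mathcal{C}_f\leq\overline{\lim}_n\,[2(n+1)]^{-1}\sum_i{\rm pmmse}(U_i)$ comes from the definition of $\mathcal{C}_f$ as a least upper bound together with the right inequality of \eqref{thm32_eq2}. Where you diverge is the capacity identity: you propose to establish $\mathcal{C}_f\leq\mathcal{C}$ by citing the classical feedback-does-not-increase-capacity result and a per-letter reduction of the directed information $\sum_i I(U_0^i;E_i\mid E_0^{i-1})$, and you flag the feedback-removal step as the principal obstacle. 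The paper avoids that obstacle entirely: it never compares $\mathcal{C}_f$ to $\mathcal{C}$ directly for the converse, but instead bounds $\bar{I}(E;C,X_0)=\bar{h}(E)-\bar{h}(W)$ (via \eqref{prop33_eq2}) using $h(E_0^n)\leq\sum_{i}h(E_i)$, the Gaussian maximum-entropy bound, and the fact that $\Sigma_{E_i}=\Sigma_{U_i}+I$ because the white noise $W_i$ is independent of $U_i$ even under feedback; this gives the explicit upper bound $\overline{\lim}_n\sup_{\mathbb{E}[U_i^\top U_i]\leq\rho_i}[2(n+1)]^{-1}\sum_i\log(|\Sigma_{U_i}+I|)$ on $\mathcal{C}_f$ in a self-contained way. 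Achievability (hence equality with $\mathcal{C}$) is then obtained exactly as you suggest, by the special non-feedback scenario with mutually independent Gaussian inputs $U_i\sim\mathcal{N}(0,\Sigma_{U_i})$. Your route via the directed-information decomposition reaches the same per-letter bound --- since $h(E_i\mid U_0^i,E_0^{i-1})=h(W_i)$ by whiteness, the telescoped sum collapses to the same $h(E_0^n)-h(W_0^n)$ --- so it is valid, but it imports the external feedback-capacity machinery of \cite{Kim_TIT_2008} for a step the paper handles in three lines of entropy inequalities; if you adopt the paper's entropy chain, the obstacle you identify simply does not arise.
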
 
	\begin{proof}[\rm\bf Proof]
		We first derive the control lower bound $\rm CLB$ in \eqref{thm34_eq1}. By the non-negativity of mutual information $I(E_0^n; C_0^n | X_0)$ in \eqref{prop33_eq1}, we can imply that $\bar{I}(E; X_0)$ serves as a lower bound for $\bar{I}(E; C, X_0)$. Moreover, from inequalities \eqref{thm32_eq2}, we can tell that $\overline{\lim}_{n\rightarrow\infty} [2(n+1)]^{-1}  \sum_{i=0}^{n} {\rm cmmse}(U_i)$ is also a lower bound for $\bar{I}(E; C, X_0)$. For now, we temporarily let the larger bound be $\rm CLB$, and figure out their order in later subsections with concrete settings. As for the upper bounds, since $\mathcal{C}_f$ is the lowest upper bound for $\bar{I}(E; C, X_0)$, any upper bound for $\bar{I}(E; C, X_0)$ is greater than or equal to $\mathcal{C}_f$. Hence, by invoking \eqref{thm32_eq2}, we have $\bar{I}(E; C, X_0) \leq \mathcal{C}_f \leq \overline{\lim}_{n\rightarrow\infty} [2(n+1)]^{-1}  \sum_{i=0}^{n} {\rm pmmse}(U_i)$ in \eqref{thm34_eq1}.

		It then remains to show that under the power constraint, $\mathbb{E}[U_i^\top U^{}_i] \leq \rho_i, \forall i$, the feedback capacity $\mathcal{C}_f$ equals $\uplim_{n\rightarrow\infty} \break \sup_{\mathbb{E}[U_i^\top U^{}_i] \leq \rho^{}_i} [2(n+1)]^{-1} \sum_{i=0}^{n} \log(|\Sigma_{U_i} + I|)$. Without loss of generality, suppose that $E_i, W_i \in \mathbb{R}^s, \forall i$, in the control channel \eqref{Ctrl_Channel}. We first prove that $\mathcal{C} \leq \mathcal{C}_f \leq  \uplim_{n\rightarrow \infty} \break \sup_{\mathbb{E}[U_i^\top U^{}_i] \leq \rho^{}_i} [2(n+1)]^{-1} \sum_{i=0}^{n} \log(|\Sigma_{U_i} + I|)$, where the first inequality $\mathcal{C} \leq \mathcal{C}_f$ can be implied from their definitions and the identity $I(E_0^n; C_0^n, X_0) = I(E_0^n; U_0^n)$ when \eqref{Ctrl_Channel} is a channel without feedback \cite{Kim_TIT_2008}, and the second equality can be proved by using the facts that $\mathcal{C}_f$ is the lowest upper bound for $\bar{I}(E; C, X_0)$ and 
		\begin{align} \label{thm34_eq2}
			\bar{I}(E; C, X_0) & \neweq{(a)} \bar{h}(E) - \bar{h}(W) \leq \mathcal{C}_f \nonumber\\
			& \hspace{-40pt} \newleq{(b)} \uplim_{n\rightarrow\infty} \sup_{\mathbb{E}[U_i^\top U^{}_i] \leq \rho^{}_i} \frac{\sum_{i=0}^{n}\log[(2\pi {\rm e})^{s} | {\Sigma_{E_i}}|] }{2(n+1)} - \frac{\log [(2\pi {\rm e})^{s}]}{2} \nonumber \\
			&\newleq{(c)} \uplim_{n\rightarrow\infty} \sup_{\mathbb{E}[U_i^\top U^{}_i] \leq \rho^{}_i} \frac{1}{2(n+1)}  \sum_{i=0}^{n} \log(|\Sigma_{U_i} +I|) \nonumber \\
			&\neweq{(d)} \uplim_{n\rightarrow \infty} \frac{1}{2(n+1)} \sum_{i=0}^{n} \log \left( 1 + \rho^{}_i \right),
		\end{align}
		\noindent where (a) follows from \eqref{prop33_eq2} and \hyperref[def21]{Definitions 2.1} and \hyperref[def22]{2.2}; (b) uses the property $h(E_0^n) \leq \sum_{i=0}^{n}h(E_i)$ with equality attained if and only if $\{E_i\}_{i=0}^n$ are mutually independent, maximum entropy condition, and $W_i \sim \mathcal{N}(0, I)$; (c) relies on the fact that $\Sigma_{E_i} = \Sigma_{U_i} + \Sigma_{W_i}$, where $\Sigma_{W_i} = I$, and (d) can be obtained when \eqref{Ctrl_Channel} is one-dimensional, i.e., $s=1$. To prove the other direction, consider a special scenario when \eqref{Ctrl_Channel} is a Gaussian channel without feedback and $U_i \sim \mathcal{N}(0, \Sigma_{U_i})$ are mutually independent for $i = 0$ to $n$. Since the channel capacity under this special setting satisfies
		\begin{equation}\label{thm34_eq3}
			\mathcal{C} \geq \bar{I}(E; U) = \uplim_{n\rightarrow\infty} \sup_{\mathbb{E}[U_i^\top U^{}_i] \leq \rho^{}_i} [2(n+1)]^{-1}  \sum_{i=0}^{n} \log(|\Sigma_{U_i} + I|) ,
		\end{equation}
		\noindent by combining \eqref{thm34_eq2} and \eqref{thm34_eq3}, we prove the last statement in \hyperref[thm34]{Theorem 3.4}. This completes the proof.
	\end{proof}
	\noindent By connecting the MMSE-based bounds in \hyperref[thm32]{Theorem 3.2} with the quantities in \hyperref[prop33]{Proposition 3.3}, \hyperref[thm34]{Theorems 3.4} offers a more comprehensive description on $\bar{I}(E; C, X_0)$. Depending on the specific forms of the plant $\mathcal{P}$ and controller $\mathcal{K}$ in \hyperref[fig2]{Fig. 2}, the total information rate $\bar{I}(E; C, X_0)$ and the plant's instability rate $\bar{I}(E; X_0)$ in \eqref{thm34_eq1} capture various control trade-offs. For example, as later subsections demonstrate, when $\mathcal{P}$ is linear, $\bar{I}(E; X_0)$ serves as an information-theoretic interpretation of the Bode-type integrals (\hyperref[prop36]{Proposition 3.6}) or the average risk-sensitive cost function (\hyperref[prop311]{Proposition 3.11}), and $\bar{I}(E; C, X_0)$ is closely related to the data-rate constraint and rate-distortion trade-offs considered in \cite{Nair_PIEEE_2007, Tanaka_TAC_2018}. Typically, similar to the trade-off properties of aforementioned metrics, the larger the information rates and the bounds in \eqref{thm34_eq1} are, the rougher the control limitations tend to be, e.g., when $\bar{I}(E; C, X_0)$ and its associated bounds increase in a linear control system, the noise sensitivity property or the optimal performance cost of the same system usually deteriorates. In the following subsections, by using \hyperref[thm34]{Theorem 3.4}, $\bar{I}(E; C, X_0)$ or $\bar{I}(E; X_0)$ will be scrutinized as a control trade-off metric with more concrete setups, in which explicit expressions and the order of the bounds in \eqref{thm34_eq1} are attainable. 	  		 
	\begin{remark}
		\hyperref[thm32]{Theorems 3.2}, \ref{thm34}, and \hyperref[prop33]{Proposition 3.3} also hold for the control channel \eqref{Ctrl_Plant}-\eqref{Ctrl_Channel} without the command signal $C_0^n$, in which the total information rate $\bar{I}(E; C, X_0)$ coincides with the instability rate $\bar{I}(E; X_0)$ in \eqref{prop33_eq1} and \eqref{thm34_eq1}.
	\end{remark}

	\subsection{Control Trade-offs in LTI Systems}
	When both the plant model $\mathcal{P}$ and control mapping $\mathcal{K}$ in \hyperref[fig2]{Fig.~2} are LTI, we use the following equations to describe the dynamics of $\mathcal{P}$ or the augmented dynamics of $\mathcal{P}$ and $\mathcal{K}$: 
	\begin{equation}\label{LTI_Ctrl_Plant}
		\begin{split}
			X_{i+1} &= AX_i + BE_i\\
			Y_i &= HX_i
		\end{split}
	\end{equation}
	\noindent such that the control input signal can be expressed as
	\begin{equation}\label{LTI_Ctrl}
		U_i = G  X_i,
	\end{equation}
	\noindent where $A, B$, $H$ and $G$ are time-invariant matrices of proper dimensions, and $X_i$ is the internal states of $\mathcal{P}$ or the augmented states of $\mathcal{P}$ and $\mathcal{K}$. Equations \eqref{LTI_Ctrl_Plant} and \eqref{LTI_Ctrl} are general enough to depict most  scenarios when $\mathcal{K}$ is an LTI controller, e.g., state feedback, output feedback, and observer-based feedback. Consequently, the control channel can be described by
	\begin{equation}\label{LTI_Ctrl_Channel}
		E_i = U_i + W_i = GX_i + W_i,
	\end{equation}
	where $W_i\sim\mathcal{N}(0, I)$ denotes a white Gaussian noise. By resorting to \hyperref[thm34]{Theorem 3.4}, we have the following result on the total information rate $\bar{I}(E; X_0)$ of LTI control systems or the feedback channel depicted in \eqref{LTI_Ctrl_Plant}-\eqref{LTI_Ctrl_Channel}.	 
	\begin{proposition}\label{prop36}
		When the discrete-time LTI control system described by \eqref{LTI_Ctrl_Plant}-\eqref{LTI_Ctrl_Channel} and \hyperref[fig2]{Fig. 2} is internally mean-square stable, the total information rate $\bar{I}(E; X_0)$ is bounded between 	
\begin{equation}\label{prop36_eq1}
			\uplim_{n\rightarrow\infty} \sum_{i=0}^{n} \frac{{\rm cmmse}(U_i)}{2(n+1)}  \leq   \bar{I}(E; X_0) \leq   \uplim_{n\rightarrow\infty} \sum_{i=0}^{n}  \frac{{\rm pmmse}(U_i)}{2(n+1)} ,
\end{equation}
where $\bar{I}(E; X_0) = \sum_{j} \log |\lambda^{+}_j(A)|$ with $\lambda_j^{+}(A)$ denoting the eigenvalues of matrix $A$ with positive real parts, i.e., open-loop unstable poles of $\mathcal{U}$ in \hyperref[fig2]{Fig.~2}.
	\end{proposition}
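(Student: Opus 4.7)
The sandwich inequality \eqref{prop36_eq1} is immediate: specializing \hyperref[thm32]{Theorem~3.2} to an LTI system with no command signal ($C \equiv 0$) collapses the ``message'' $(C, X_0)$ to $X_0$ while leaving the MMSE bounds on $U_i$ unchanged. The substantive content of the statement is the closed-form identity $\bar{I}(E; X_0) = \sum_{j}\log|\lambda_j^+(A)|$, which I would establish as follows.

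The first step is a reduction to an entropy-rate difference. Repeating the chain-rule derivation of \eqref{prop33_eq2} in \hyperref[prop33]{Proposition~3.3} with $C \equiv 0$ yields $I(E_0^n; X_0) = h(E_0^n) - h(W_0^n)$. Since $W_0^n$ is i.i.d.\ $\mathcal{N}(0, I_s)$ so that $\bar{h}(W) = \tfrac{s}{2}\log(2\pi\mathrm{e})$, the claim reduces to showing $\bar{h}(E) - \bar{h}(W) = \sum_{j}\log|\lambda_j^+(A)|$ for the closed-loop dynamics.

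The second step exploits the LTI closed-loop structure. Substituting $U_i = GX_i$ into the plant produces $X_{i+1} = (A+BG)X_i + BW_i$ and $E_i = GX_i + W_i$; internal mean-square stability forces $A+BG$ to be Schur, so this is a stable state-space realization of $E$ driven by white Gaussian noise. The transfer matrix from $W$ to $E$ is $H(z) = I + G(zI - A - BG)^{-1}B$, and the Schur determinant identity gives $\det H(z) = \det(zI - A)/\det(zI - A - BG)$. The discrete-time Kolmogorov--Szeg\H{o} theorem then yields $\bar{h}(E) - \bar{h}(W) = \tfrac{1}{2\pi}\int_{-\pi}^{\pi}\log|\det H(e^{j\omega})|\,d\omega$, and a Jensen-type evaluation of this integral --- using that all poles of $\det H$ (the eigenvalues of $A+BG$) lie strictly inside the unit disk while the only zeros outside are precisely the unstable eigenvalues $\lambda_j^+(A)$ --- identifies the right-hand side with $\sum_{j}\log|\lambda_j^+(A)|$, closing the argument.

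The main obstacle is the transient caused by the arbitrary initial state: the Szeg\H{o} formula applies strictly to a truly stationary process, whereas $E$ is only asymptotically stationary because $X_0$ need not be drawn from the invariant distribution of the closed-loop dynamics. The Schur property of $A+BG$ yields exponential mixing, so $h(E_0^n)$ differs from its stationarily-initialized counterpart by $o(n)$ and the Ces\`aro average is unaffected; this is standard but requires some bookkeeping, e.g.\ by controlling the KL divergence between the two laws of $E_0^n$, which decays geometrically.
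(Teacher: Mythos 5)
Your argument is essentially correct but follows a genuinely different route from the paper's. You reduce $\bar{I}(E;X_0)=\bar{h}(E)-\bar{h}(W)$ and then evaluate the entropy-rate difference in the frequency domain via the closed-loop transfer matrix $H(z)=I+G(zI-A-BG)^{-1}B$, the determinant identity $\det H(z)=\det(zI-A)/\det(zI-A-BG)$, the Kolmogorov--Szeg\H{o} formula, and Jensen's formula; this is the classical Bode-integral computation. The paper instead works entirely in the time domain: it poses the estimation of $U_i=GX_i$ from $E_0^{i}$ as a Kalman filtering problem (Appendix~A, \eqref{LTI_Ctrl_Filtering}--\eqref{LTI_ARE_Ctrl}), shows via the LTI version of \hyperref[lem46]{Lemma~4.6} that the asymptotic prior error covariance is supported on the unstable modes, and extracts $\sum_j\log|\lambda_j^{+}(A)|$ by taking $\log\det(\cdot)$ of the steady-state Riccati equation $P_u^{-}=A_uP_u^{-}(G_uP_u^{-}G_u^\top+I)^{-1}A_u^\top$; the upper bound on $I(E_0^n;X_0)$ comes from the innovations representation $h(E_0^n)=h(G(X_0^n-\hat{X}_{0:n}^{-})+W_0^n)$ plus maximum entropy, and the lower bound is imported from \cite[Lemma~4.1]{Martins_TAC_2007}. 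Your route is shorter and makes the connection to the classical Bode integral transparent; the paper's route is the one that generalizes to the LTV case (\hyperref[prop311]{Proposition~3.11}), where no transfer function exists, and it also produces the MMSE quantities appearing in \eqref{prop36_eq1} as a by-product.

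Two caveats you should tighten. First, the Szeg\H{o} evaluation of $\bar{h}(E)$ presumes $E$ is (asymptotically) \emph{Gaussian}, i.e., that $X_0$ is Gaussian; for general $X_0$ with finite differential entropy the identity $\bar{I}(E;X_0)=\sum_j\log|\lambda_j^{+}(A)|$ still holds, but your single equality must then be split into a maximum-entropy upper bound (which survives without Gaussianity) and a separate lower bound such as the mean-square-stability argument of \cite{Martins_TAC_2007} -- exactly the split the paper performs. Second, your treatment of the non-stationary initialization is asserted rather than proved; it does go through (the two laws of $E_0^n$ differ only by replacing $\Sigma_{X_0}$ with the stationary state covariance inside a $\log\det(I+\cdot)$, an $O(1)$ perturbation when $\Sigma_{X_0}\succ 0$), but note that the initial condition is precisely the source of the linear growth of $I(E_0^n;X_0)$, so the "transient washes out" intuition must be applied to $h(E_0^n)$ and not to the mutual information itself; the paper sidesteps this entirely because the Riccati recursion \eqref{LTI_Prediction} converges regardless of $P_0^{-}$.
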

	\begin{proof}[\rm\bf Proof]  
		See \hyperref[appA]{Appendix A} for the proof of \hyperref[prop36]{Proposition 3.6} and the calculation of ${\rm cmmse}(U_i)$ and ${\rm pmmse}(U_i)$ in \eqref{prop36_eq1}. Based on the information and estimation theories, the direct approach for proving $\bar{I}(E; X_0) = \sum_{j} \log |\lambda^{+}_j(A)|$ in \hyperref[appA]{Appendix A} is more explicit and concise than the existing methods. Moreover, a numerical example verifying the sandwich bounds in \eqref{prop36_eq1} is also given in \hyperref[appA]{Appendix A}. 
	\end{proof}	
	\noindent \hyperref[prop36]{Proposition 3.6} shows an equality condition and the sandwich bounds, based on the MMSEs of $U_i$, for the total information rate or plant's instability rate $\bar{I}(E; X_0)$ subject to \eqref{LTI_Ctrl_Plant}-\eqref{LTI_Ctrl_Channel}. The equality, $\bar{I}(E; X_0) = \sum_{j} \log |\lambda^{+}_j(A)|$, suggests that the total information rate can be used as an information-theoretic interpretation or substitute to the famous Bode's and Bode-like integrals \cite{Freudenberg_1985, Martins_TAC_2008}, which are equal or bounded below by $\sum_{j} \log |\lambda^{+}_j(A)|$. The sandwich bounds in \eqref{prop36_eq1} supplement this control trade-off property or metric by offering an alternative lower bound and an estimation method for $\bar{I}(E; X_0)$.

	For completeness, we briefly discuss the scenario when the LTI plant \eqref{LTI_Ctrl_Plant} is stabilized by the nonlinear control mapping
	\begin{equation}\label{Nonlinear_Ctrl}
		U_i = g^{}_i(Y_0^i),
	\end{equation}
	\noindent which results in the following control channel
	\begin{equation}\label{LTI_Nonlinear_Channel}
		E_i = U_i(E_0^{i-1}, X_0) + W_i.
	\end{equation}
	\noindent Since the control mapping and channel are now nonlinear, the MMSE-based bounds of total information rate need to be calculated by using the nonlinear filtering technique, which will be discussed later in \hyperref[sec34]{Section III-D}. By applying \hyperref[thm34]{Theorem~3.4} to \eqref{LTI_Nonlinear_Channel}, we have the following corollary on $\bar{I}(E; X_0)$. 
	\begin{corollary}\label{cor37}
		When the discrete-time LTI plant \eqref{LTI_Ctrl_Plant} under nonlinear control mapping \eqref{Nonlinear_Ctrl}, as depicted in \hyperref[fig2]{Fig.~2}, is internally mean-square stable, the total information rate $\bar{I}(E; X_0)$ in \eqref{LTI_Nonlinear_Channel} is bounded between
		\begin{equation}\label{cor37_eq1}
			{\rm CLB}_{\rm LTI} \leq \bar{I}(E; X_0) \leq \uplim_{n\rightarrow\infty}  \sum_{i=0}^{n} \frac{{\rm pmmse}(U_i)}{2(n+1)},
		\end{equation}
		where ${\rm CLB}_{\rm LTI} = \max\{  \sum_j \log |\lambda^{+}_j(A)|,   \uplim_{n\rightarrow\infty} [2(n+1)]^{-1} \cdot \sum_{i=0}^{n} {\rm cmmse}(U_i) \}$, with $\lambda_j^{+}(A)$ denoting the eigenvalues of matrix $A$ with positive real parts.
	\end{corollary}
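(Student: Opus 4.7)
The plan is to combine \hyperref[thm34]{Theorem~3.4}, instantiated via \hyperref[prop33]{Remark~3.5} for the no-command-signal case, with a classical Martins--Dahleh-style instability-rate lower bound. Since the channel \eqref{LTI_Nonlinear_Channel} carries only the ``message'' $X_0$, \hyperref[prop33]{Remark~3.5} identifies $\bar{I}(E; C, X_0)$ in \eqref{thm34_eq1} with $\bar{I}(E; X_0)$. Reading \eqref{thm34_eq1} off under this identification immediately gives the right-hand inequality of \eqref{cor37_eq1} and puts $\uplim_{n\rightarrow\infty}[2(n+1)]^{-1}\sum_{i=0}^{n}{\rm cmmse}(U_i)$ into the lower bound; the only remaining ingredient is therefore the bound $\bar{I}(E; X_0) \geq \sum_{j}\log|\lambda_j^+(A)|$.

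For this remaining piece, the key observation is that the instability rate $\sum_{j}\log|\lambda_j^+(A)|$ is a property of the open-loop plant \eqref{LTI_Ctrl_Plant} and is insensitive to whether the stabilizing map is linear or nonlinear, as long as internal mean-square stability \hyperref[ass1]{(A1)} holds. The plan is to apply a similarity transformation that puts $A$ in block-triangular form with an unstable block $A^u$ whose eigenvalues are exactly $\{\lambda_j^+(A)\}$, yielding a sub-dynamics
\begin{equation*}
    X_{i+1}^u = A^u X_i^u + B^u E_i.
\end{equation*}
Iterating gives $X_n^u = (A^u)^n X_0^u + \sum_{k=0}^{n-1}(A^u)^{n-1-k}B^u E_k$, so by the linear-change-of-variable rule for differential entropy,
\begin{equation*}
    h(X_n^u \,|\, E_0^{n-1}) = h(X_0^u) + n\log|\det A^u|.
\end{equation*}
Mean-square stability forces $\mathbb{E}[\|X_n^u\|^2]$ to be uniformly bounded, and the maximum-entropy principle then caps $h(X_n^u)$ by a constant, yielding $I(X_0^u; E_0^{n-1}) \geq n\log|\det A^u| - O(1)$. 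Dividing by $n+1$, passing to the limit, and using $\bar{I}(E; X_0) \geq \bar{I}(E; X_0^u)$ (data-processing, since $X_0^u$ is a linear function of $X_0$) delivers $\bar{I}(E; X_0) \geq \sum_j\log|\lambda_j^+(A)|$, which closes the lower bound in \eqref{cor37_eq1}. This is exactly the route used in \hyperref[appA]{Appendix~A} for the linear-controller case of \hyperref[prop36]{Proposition~3.6}, so the argument can be imported with minimal modification.

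The main obstacle is ensuring that the growth-of-entropy estimate for $X_n^u$ genuinely does not appeal to linearity of the controller. Here the nonlinear $g_i(Y_0^i)$ enters only through $E_i = U_i + W_i$, which drives the linear sub-dynamics of $X_i^u$ additively; the entropy identity above depends only on this linear driving relation and on the invariance of differential entropy under deterministic linear bijections, neither of which sees the controller's nonlinearity. The gap between this lower bound and the tight equality of \hyperref[prop36]{Proposition~3.6} is precisely what forces the corollary to state $\bar{I}(E;X_0) \geq \sum_j\log|\lambda_j^+(A)|$ rather than equality: with a nonlinear $\mathcal{K}$, one loses the joint Gaussianity and the explicit Kalman-filter computation that made ${\rm cmmse}(U_i)$ and ${\rm pmmse}(U_i)$ tractable in \hyperref[appA]{Appendix~A}, so the MMSE-based bounds must be kept in their general form and the Bode-type value is recovered only as a lower estimate.
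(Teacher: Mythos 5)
Your proposal is correct and follows essentially the same route as the paper: the MMSE sandwich comes from applying \hyperref[thm32]{Theorem~3.2}/\hyperref[thm34]{3.4} to the channel \eqref{LTI_Nonlinear_Channel} (with the no-command-signal identification of $\bar{I}(E;C,X_0)$ and $\bar{I}(E;X_0)$), and the remaining bound $\bar{I}(E;X_0)\geq\sum_j\log|\lambda_j^+(A)|$ is the Martins--Dahleh instability-rate argument, which the paper simply cites (as \cite[Lemma~4.1]{Martins_TAC_2007} or \hyperref[cor312]{Corollary~3.12}) while you write it out; note that the unstable-subsystem/maximum-entropy derivation you sketch is the content of \hyperref[appC]{Appendix~C}, not of \hyperref[appA]{Appendix~A}, which instead uses a Kalman-filter computation for the reverse inequality in the linear-controller case. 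One intermediate identity is misstated: conditioning on $E_0^{n-1}$ makes $X_n^u$ an affine bijection of $X_0^u$, so the change-of-variable rule gives $h(X_n^u\,|\,E_0^{n-1})=h(X_0^u\,|\,E_0^{n-1})+n\log|\det A^u|$, not $h(X_0^u)+n\log|\det A^u|$ (the latter would force $I(X_0^u;E_0^{n-1})=0$). With that correction, $I(X_0^u;E_0^{n-1})=h(X_0^u)-h(X_n^u\,|\,E_0^{n-1})+n\log|\det A^u|\geq n\log|\det A^u|-O(1)$ follows from the mean-square-stability entropy cap exactly as you intend, and the rest of your argument goes through.
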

	\begin{proof}[\rm\bf Proof]
		Applying \hyperref[thm32]{Theorem~3.2} to the control channel \eqref{LTI_Nonlinear_Channel}, we derive the MMSE-based lower and upper bounds in \eqref{cor37_eq1}. By applying \cite[Lemma~4.1]{Martins_TAC_2007} or the \hyperref[cor312]{Corollary~3.12} of this paper to the control systems depicted in \eqref{LTI_Ctrl_Plant}, \eqref{Nonlinear_Ctrl}, and \eqref{LTI_Nonlinear_Channel}, we can tell that $\sum_j \log|\lambda^{+}_j(A)|$ is also a lower bound of $\bar{I}(E; X_0)$. Since the control mapping \eqref{Nonlinear_Ctrl} is now nonlinear, similar to \hyperref[thm34]{Theorem 3.4}, we assign the larger lower bound to ${\rm CLB}_{\rm LTI}$.
	\end{proof}
	\noindent Thanks to the adoption of the I-MMSE relationships in \hyperref[thm24]{Theorem 2.4} or \ref{thm34}, no restriction is imposed on the stationarity or distribution of the signals in \hyperref[prop36]{Proposition 3.6} and \hyperref[cor37]{Corollary 3.7}. Moreover, if all signals in \hyperref[fig2]{Fig. 2} are stationary Gaussian, \cite{Martins_TAC_2008} reveals that the information rate $\bar{I}(E; X_0)$ is identical to a Bode-like integral that is i) defined by the power spectral density ratio of $E_0^n$ and $W_0^n$, ii) lower-bounded by $\sum_j \log |\lambda_j^{+}(A)|$, and iii) captures the noise sensitivity trade-off property as the classical Bode's integral in \cite{Sung_IJC_1988}. \hyperref[prop36]{Proposition~3.6} and \hyperref[cor37]{Corollary~3.7} also supplement the previous findings in \cite{Martins_TAC_2008} by showing i) $\bar{I}(E; X_0)$ {\it equals} $\sum_j \log |\lambda_j^{+}(A)|$ when the control mapping $\mathcal{K}$ is linear; ii) alternative and MMSE-based sandwich bounds for $\bar{I}(E; X_0)$ when the established bounds, such as $\sum_j \log |\lambda_j^{+}(A)|$ and $\mathcal{C}_f$, are loose or difficult to calculate, and iii) nonlinear controller mapping does not necessarily improve the noise sensitivity trade-off of LTI plant.

\subsection{Control Trade-offs in LTV Systems}
When the plant model $\mathcal{P}$ in \hyperref[fig2]{Fig.~2} is linear time-varying, we use \eqref{LTV_Ctrl_Plant} to describe the dynamics of $\mathcal{P}$, or the augmented dynamics of $\mathcal{P}$ and $\mathcal{K}$ when $\mathcal{K}$ is linear and $U_i$ satisfies \eqref{LTV_Ctrl}:
	\begin{equation}\label{LTV_Ctrl_Plant}
		\begin{split}
			X_{i+1} & = A_iX_i + B_iE_i \\
			Y_i & = H_i X_i,
		\end{split}
	\end{equation}
	\noindent where $A_i$, $B_i$ and $H_i$ are time-varying matrices of proper dimensions. Definitions of some notations related to the LTV system \eqref{LTV_Ctrl_Plant}, such as uniformly exponentially (anti)stable, exponential dichotomy, weighted shift operator and its spectrum, are briefly interpreted in \hyperref[appB]{Appendix B} and references therein. Meanwhile, we postulate the following assumptions on \eqref{LTV_Ctrl_Plant}. 
	
	\vspace{0.3em}
	
	\noindent (A2) \label{ass2} The sequence $\{A_i\}$ admits an exponential dichotomy of rank $m_u$, and the spectrum of the antistable part satisfies $\cup_{j=1}^{l}\{\lambda\in\mathbb{C}: \underline{\kappa}_j \leq |\lambda| \leq \overline{\kappa}_j \}$ with multiplicities $m_1, \cdots, m_l$, where $\underline{\kappa}_j \geq 0$ and $\sum_{j=1}^{l} m_j = m_u$. 
	
	\vspace{0.3em}
	
	\noindent (A3) \label{ass3} The sequence $\{A_i - B_i H_i\}$ is uniformly exponentially stable (UES). 
	
	\begin{remark}\label{rem38}
		(\hyperref[ass2]{A2}) can be regarded as an LTV counterpart of the stable/unstable dichotomy in LTI systems and is extensively postulated and studied in the literature on LTV systems \cite{BenArtzi_IEOT_1991, Iglesias_Auto_2001, Dieci_JDE_2010, Tranninger_CSL_2020}. When \eqref{LTV_Ctrl_Plant} is regular, e.g., time-invariant or periodic, the annuli in (\hyperref[ass2]{A2}) shrink to circles, and the modulus of the spectrum in (\hyperref[ass2]{A2}) coincide with the discrete-time Lyapunov exponents or the modulus of unstable poles. (\hyperref[ass3]{A3}) guarantees the boundedness of sensitivity operator, a state-space representation of the sensitivity function, which is also a necessary requirement of any internally stabilizing controller \cite{Iglesias_Auto_2001}. 	
	\end{remark}
	\noindent The exponential dichotomy in (\hyperref[ass2]{A2}) allows us to define a stability preserving state space transformation that similar to the modal decomposition of LTI systems, separates the stable and antistable parts of $\{A_i\}$. 
	\begin{lemma}[See \cite{BenArtzi_IEOT_1991}]\label{lem39}
		The sequence $\{A_i\}$ in \eqref{LTV_Ctrl_Plant} possesses an exponential dichotomy if and only if there exist a bounded sequence of matrices $\{T_i\}$ with bounded inverse such that 
		\begin{align*}
			\left[\begin{array}{c|c}
				T^{}_{i+1}A^{}_i T_i^{-1} & T_{i+1}B_i \\ \hline
				H^{}_iT^{-1}_i & D_i
			\end{array}  \right] 
			:=
			\left[\begin{array}{cc|c}
				A_s(i) & 0 & B_s(i)\\
				0 & A_u(i) & B_u(i)\\ \hline
				H_s(i) & H_u(i) & D(i)
			\end{array}\right],
		\end{align*}
		where $D_i$ denotes the direct feedthrough matrix, and $\{A_s(i)\}$ and $\{A_u(i)\}$ respectively stand for the stable and antistable parts of $\{A_i\}$.
	\end{lemma}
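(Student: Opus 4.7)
The plan is to prove the equivalence in two directions using the standard characterization of exponential dichotomy through invariant projection sequences, and then to construct the change of coordinates $\{T_i\}$ explicitly from the splitting into stable/antistable invariant subspaces.

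For the necessity direction (dichotomy implies the block-diagonal form), I would first recall that an exponential dichotomy for $\{A_i\}$ is equivalent to the existence of a uniformly bounded sequence of projections $\{P_i\}$ with the intertwining property $P_{i+1}A_i = A_i P_i$, together with exponential forward-time decay of the transition operator on $\text{range}(I-P_i)$ and exponential backward-time decay on $\text{range}(P_i)$. From this, I would construct $T_i$ by choosing, at each $i$, an orthonormal basis for $\text{range}(P_i)$ (the $m_u$-dimensional antistable subspace) and for $\text{range}(I-P_i)$ (the stable subspace), and defining $T_i^{-1}$ as the matrix whose columns are these bases. The intertwining property guarantees that in these coordinates $A_i$ becomes block-diagonal with blocks $A_u(i)$ and $A_s(i)$; the direct feedthrough block $D_i$ and the input/output blocks $B_{s/u}(i),H_{s/u}(i)$ appear by the obvious conformal partitioning of $B_i$ and $H_i$.

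For the sufficiency direction (block-diagonal form implies dichotomy), I would argue that the explicit form directly exhibits projections $\widetilde{P}_i$ onto the antistable coordinates in the transformed basis, and that the exponential estimates on the state transition matrices of $\{A_s(i)\}$ and $\{A_u(i)\}$ yield exponential dichotomy for the transformed system; pulling this back through the bounded bijection $T_i$ preserves the dichotomy because boundedness of $T_i$ and $T_i^{-1}$ preserves the norm estimates up to a uniform multiplicative constant.

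The main obstacle is establishing uniform boundedness of $\{T_i\}$ and $\{T_i^{-1}\}$. This reduces to showing that the angle between $\text{range}(P_i)$ and $\text{range}(I-P_i)$ is bounded away from zero uniformly in $i$, which is a classical consequence of the uniform boundedness of the projections $\{P_i\}$ provided by the dichotomy (since $\|P_i\|$ controls the reciprocal of this minimum angle). Once this is in place, the orthonormal-basis construction automatically gives bounded $T_i,T_i^{-1}$. The rest of the argument is essentially bookkeeping, and since the statement is already established in BenArtzi_IEOT_1991, I would in practice cite that reference for the detailed construction and boundedness estimates rather than reproduce them.
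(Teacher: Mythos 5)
The paper does not prove this lemma at all: it is imported verbatim from Ben-Artzi and Gohberg (the cited reference), so there is no in-paper argument to compare yours against. Your outline is nevertheless the standard and correct proof of that result, and it matches the definition of exponential dichotomy the paper records in Appendix~B (uniformly bounded projections $P_i$ of constant rank with $A_iP_i = P_{i+1}P\mkern-13mu\phantom{P}A_i$ -- i.e.\ $A_iP_i=P_{i+1}A_i$ -- growth on $\mathrm{range}(P_i)$ and decay on $\mathrm{range}(I-P_i)$). You correctly identify the only genuinely delicate point, namely that $\sup_i\|T_i\|$ and $\sup_i\|T_i^{-1}\|$ must be controlled, and you correctly reduce it to the minimal angle between $\mathrm{range}(P_i)$ and $\mathrm{range}(I-P_i)$ being bounded away from zero, which follows from $\sup_i\|P_i\|<\infty$. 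Two minor bookkeeping remarks: the constancy of $\mathrm{rank}(P_i)$ (part of the dichotomy definition) is what makes the block sizes $m_u$ and $m-m_u$ well defined uniformly in $i$; and in the necessity direction the block-diagonal (rather than merely block-triangular) form requires invariance of \emph{both} subspaces, which does follow from the single commutation relation applied to $\mathrm{range}(P_i)$ and $\ker(P_i)$ separately, as your argument implicitly uses. Deferring the detailed estimates to the cited reference is exactly what the paper itself does.
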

	\noindent By using the relationship between entropy and risk-sensitive cost functions in \cite{Glover_SCL_1988, Peters_TAC_1999}, we define an LTV Bode's integral $\mathscr{B}$, which gives a time-domain characterization of the sensitivity property of \eqref{LTV_Ctrl_Plant}. 
	\begin{lemma}[See \cite{Iglesias_Auto_2001}]\label{lem310}
		For the discrete-time LTV system \eqref{LTV_Ctrl_Plant} fulfilling (\hyperref[ass2]{A2}) and (\hyperref[ass3]{A3}), the time-domain Bode's integral $\mathscr{B}$ satisfies\footnotemark
		\begin{align*}
			\mathscr{B} = \uplim_{n\rightarrow\infty} \frac{\log \det [ \varPhi_{A_u}(n, 0)]}{n}  
			\geq \sum_{j=1}^{l} m_j \log \underline{\kappa}_j \geq m_u \log \beta, 
		\end{align*}
		where $\varPhi_{A_u}(n, 0) = \prod_{i=0}^{n-1}A_u(i)$ stands for the discrete-time state transition matrix of $\{A_u(i)\}$ from $i = 0$ to $n$, and $\alpha$ and $\beta > 1$ are positive constants, independent from $n$, such that $\underline{\mu}(\varPhi_{A_u}(n, 0)) \geq \alpha \beta^{n}$. \footnotetext{Rigorously, the LTV Bode's integral is defined as $\mathscr{B} := \overline{\lim}_{n\rightarrow\infty}(n+1)^{-1} \log \det(E_n^\top S^\top S E_n)$ in \cite{Iglesias_Auto_2001}, where $E_n$ denotes the embedding operator of $X_0^n$ and $S$ is the sensitivity operator of \eqref{LTV_Ctrl_Plant}. Meanwhile, there exist an identity between $\mathscr{B}$ and the average risk-sensitive cost function $\mathscr{B} = \overline{\lim}_{n\rightarrow\infty}-[2(n+1)]^{-1}\log(\mathbb{E}[\exp(\|W_0^n\| / 2 - \|E_0^n\| /2)])$, where $W_0^n$ and $E_0^n$ are the noise and error signals in \hyperref[fig2]{Fig.~2}, and identities $\mathscr{B} = \overline{\lim}_{n\rightarrow\infty}(n+1)^{-1} \sum_{i=0}^{n}\log\det R_i = \overline{\lim}_{n\rightarrow\infty}(n+1)^{-1}\log\det[\varPhi_{A_u}(n+1, 0)]$, where block-diagonal operator $R_i = I + B_u(i)^\top X_u(i+1)B_u(i)$ with $X_u(i)$ and $B_u(i)$ being the antistable parts of $X_i$ and $B_i$.}
	\end{lemma}
	\noindent In particular, if the sequence $\{A_i\}$ in \eqref{LTV_Ctrl_Plant} is open-loop stable, the LTV Bode's integral reduces to $\mathscr{B} = 0$. When the sequence $\{A_u(i)\}$ of unstable part is regular, by \hyperref[rem38]{Remark~3.8}, the LTV Bode's integral takes the form  $\mathscr{B} = \sum_{j=1}^{l} m_j \log \underline{\kappa}_j = \sum_{j=1}^{l} m_j \log \overline{\kappa}_j$.

	With the assumptions and properties above, we now turn to investigate the control trade-offs for the discrete-time LTV systems. Similar to the LTI study, we first consider the scenario when the control mapping $\mathcal{K}$ in \hyperref[fig2]{Fig.~2} is linear, and the control input satisfies 
	\begin{equation}\label{LTV_Ctrl}
		U_i = G_iX_i,
	\end{equation}
	\noindent where $G_i$ is a time-varying feedback gain, and $X_i$ denotes the internal states of $\mathcal{P}$ or the augmented states of $\mathcal{P}$ and $\mathcal{K}$. Subsequently, the control channel subject to the LTV plant \eqref{LTV_Ctrl_Plant} and controller \eqref{LTV_Ctrl} becomes
	\begin{equation}\label{LTV_Ctrl_Channel}
		E_i = U_i + W_i = G_iX_i + W_i.
	\end{equation}
	\noindent By applying \hyperref[thm34]{Theorem~3.4} to \eqref{LTV_Ctrl_Plant}-\eqref{LTV_Ctrl_Channel}, we have the following result on the total information rate $\bar{I}(E; X_0)$ of LTV systems.
	\begin{proposition}\label{prop311}
		When the discrete-time LTV control system subject to \eqref{LTV_Ctrl_Plant}-\eqref{LTV_Ctrl_Channel}, assumptions (\hyperref[ass2]{A2}) and (\hyperref[ass3]{A3}), as \hyperref[fig2]{Fig.~2} shows, is internally mean-square stable, the total information rate $\bar{I}(E; X_0)$ in \eqref{LTV_Ctrl_Channel} is bounded between
		\begin{equation}\label{prop311_eq1}
			{\rm CLB}_{\rm LTV} \leq   \bar{I}(E; X_0) \leq   \uplim_{n\rightarrow\infty} \sum_{i=0}^{n}  \frac{{\rm pmmse}(U_i)}{2(n+1)},
		\end{equation}
		where $\bar{I}(E; X_0) = \overline{\lim}_{n\rightarrow \infty} (n+1)^{-1} \log\det[\varPhi_{A_u}(n+1, 0)]  = \mathscr{B}$, and ${\rm CLB}_{\rm LTV} = \max\{ \sum_{j=1}^l m_j \log \underline{\kappa}_j,  \overline{\lim}_{n\rightarrow\infty}[2(n+1)]^{-1}  \sum_{i=0}^n {\rm cmmse}(U_i) \}$.
	\end{proposition}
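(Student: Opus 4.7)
The plan is to establish \eqref{prop311_eq1} by combining Theorem~3.4 with a two-sided argument for the central equality $\bar I(E;X_0)=\mathscr B$. First, applying Theorem~3.4 in the no-command-signal form (via the Remark following Theorem~3.4) to the channel \eqref{LTV_Ctrl_Channel}, with $(C,X_0)$ replaced by $X_0$, yields the $\mathrm{pmmse}$ upper bound and the $\mathrm{cmmse}$ branch of ${\rm CLB}_{\rm LTV}$ in \eqref{prop311_eq1} at once. Once the identity $\bar I(E;X_0)=\mathscr B$ is proved, Lemma~3.10 supplies $\mathscr B\ge\sum_{j=1}^{l} m_j\log\underline{\kappa}_j$, which absorbs the remaining branch of ${\rm CLB}_{\rm LTV}$. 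Hence the substantive task is the equality $\bar I(E;X_0)=\mathscr B$.

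For the lower bound $\bar I(E;X_0)\ge\mathscr B$, I would apply the coordinate change of Lemma~3.9 to decompose $X_i=(X_i^s,X_i^u)$ and invoke data processing to reduce to $I(E_0^n;X_0^u)$. Solving the antistable recursion yields $X_{n+1}^u=\varPhi_{A_u}(n+1,0)X_0^u+F_n(E_0^n)$ for a deterministic linear $F_n$, so $Z_{n+1}:=\varPhi_{A_u}(n+1,0)X_0^u$ is a function of $X_0^u$ that simultaneously equals $X_{n+1}^u-F_n(E_0^n)$. The change-of-variables formula for differential entropy together with translation invariance of conditional entropy then give
\begin{equation*}
I(Z_{n+1};E_0^n)=\log\bigl|\det\varPhi_{A_u}(n+1,0)\bigr|+h(X_0^u)-h(X_{n+1}^u\mid E_0^n).
\end{equation*}
Internal mean-square stability combined with Gaussian maximum-entropy provides $h(X_{n+1}^u\mid E_0^n)\le h(X_{n+1}^u)\le c$ uniformly in $n$. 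Dividing by $n+1$, passing to $\overline{\lim}$, and invoking Lemma~3.10 then produce $\bar I(E;X_0)\ge\overline{\lim}_{n\to\infty}(n+1)^{-1}\log|\det\varPhi_{A_u}(n+1,0)|=\mathscr B$.

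For the matching upper bound $\bar I(E;X_0)\le\mathscr B$, I would start from the no-command-signal version of \eqref{prop33_eq2}, $I(E_0^n;X_0)=h(E_0^n)-h(W_0^n)$, and bound $h(E_0^n)$ above by the Gaussian maximum-entropy estimate $(1/2)\log[(2\pi\mathrm{e})^{s(n+1)}\det\mathrm{Cov}(E_0^n)]$, which holds regardless of the law of $X_0$. Closed-loop linearity lets me write $E_0^n=SW_0^n+aX_0$, where $S$ is the LTV sensitivity operator of \eqref{LTV_Ctrl_Plant}--\eqref{LTV_Ctrl_Channel}, bounded by (A3), and $a$ is the bounded initial-state response operator; so $\mathrm{Cov}(E_0^n)=SS^{\top}+a\,\mathrm{Cov}(X_0)\,a^{\top}$, and the LTV Szego-type asymptotic identity for the sensitivity operator (Lemma~3.10) matches $\overline{\lim}_{n\to\infty}(2(n+1))^{-1}\log\det\mathrm{Cov}(E_0^n)$ with $\mathscr B$, absorbing the bounded low-rank correction contributed by $a\,\mathrm{Cov}(X_0)\,a^{\top}$. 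The principal obstacle I anticipate is this upper-bound step: rigorously executing the LTV Szego-type asymptotic (rather than its LTI or stationary counterpart) is technically delicate and leans essentially on (A3) for boundedness of $S$ and on (A2) for the exponential dichotomy that pins down the growth rate of the log-determinant.
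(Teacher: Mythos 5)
Your application of Theorem~3.4 for the sandwich bounds, your use of Lemma~3.10 to absorb the $\sum_{j}m_j\log\underline{\kappa}_j$ branch of ${\rm CLB}_{\rm LTV}$, and your lower-bound argument $\bar I(E;X_0)\ge\mathscr B$ (solve the antistable recursion, apply data processing to $Z_{n+1}=\varPhi_{A_u}(n+1,0)X_0^u$, use translation invariance of conditional entropy, and bound $h(X_{n+1}^u\mid E_0^n)$ by mean-square stability plus maximum entropy) all track the paper's route essentially step for step; that half is sound.

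The gap is in your upper bound. Writing $E_0^n=SW_0^n+aX_0$ with $S$ the finite-horizon sensitivity operator, note that $S=(I-L)^{-1}$ with $L$ strictly causal, so the truncated $S$ is unit lower triangular and $\det(SS^\top)=1$ identically: the ``LTV Szego-type asymptotic for the sensitivity operator'' contributes exactly zero to $(2(n+1))^{-1}\log\det\mathrm{Cov}(E_0^n)$, and the object $E_n^\top S^\top S E_n$ appearing in Iglesias's definition of $\mathscr B$ is not this finite section. Consequently the roles in your decomposition are inverted: the term you propose to ``absorb'' as a bounded low-rank correction, $a\,\mathrm{Cov}(X_0)\,a^\top$, is the \emph{sole} source of the growth rate $\mathscr B$. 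Indeed $\det\mathrm{Cov}(E_0^n)=\det[I_m+\Sigma_{X_0}^{1/2}\mathcal O_n\Sigma_{X_0}^{1/2}]$ with $\mathcal O_n=\sum_{i=0}^{n}\varPhi_A(i,0)^\top G_i^\top G_i\varPhi_A(i,0)$ an open-loop observability Gramian, and extracting $\overline{\lim}_n (n+1)^{-1}\log|\det\varPhi_{A_u}(n+1,0)|$ from it requires a uniform observability/reconstructibility analysis of the antistable pair $(A_u,G_u)$ that you do not supply and that you yourself flag as the step you cannot execute. The paper avoids this entirely: it bounds $h(E_0^n)$ through the Kalman innovations, $h(E_i\mid E_0^{i-1})\le \tfrac12\log[(2\pi{\rm e})^s|G_iP_i^-G_i^\top+I|]$, and then the Riccati difference equation for the antistable prior error covariance (via Lemma~4.6, which also kills the stable block of $P_i^-$) telescopes $\sum_i\log\det[G_u(i)\bar P_u^-(i)G_u^\top(i)+I]$ exactly into $2\log\det\varPhi_{A_u}(n+1,0)$ plus boundary terms. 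You would need either to adopt that innovations/Riccati argument or to carry out the Gramian growth analysis rigorously; as written, the upper-bound step does not go through.
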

	\begin{proof}[\rm\bf Proof]
		By applying \hyperref[thm34]{Theorem 3.4} to the LTV control system and channel depicted by \eqref{LTV_Ctrl_Plant}-\eqref{LTV_Ctrl_Channel}, we can obtain the MMSE-based sandwich bounds in \eqref{prop311_eq1}. Hence, it remains to show that $\bar{I}(E;X_0) = \overline{\lim}_{n\rightarrow\infty}(n+1)^{-1} \log \det \varPhi_{A_u}(n+1, 0) \geq \sum_{j=1}^l m_j \log \underline{\kappa}_j$, where the inequality can be directly implied from \hyperref[lem310]{Lemma 3.10}, and the equality can be verified by respectively proving the statements i)~$\bar{I}(E; X_0) \geq  \overline{\lim}_{n\rightarrow\infty}(n+1)^{-1}  \log  \det \varPhi_{A_u}(n+1, 0)$, and ii)~$\bar{I}(E; X_0) \leq \overline{\lim}_{n\rightarrow\infty} (n+1)^{-1}  \log\det \varPhi_{A_u}(n+1, 0)$.

		To calculate the prior and posterior estimates or estimation error (covariance) of the control input $U_i = G_iX_i$, consider the following LTV filtering problem originated from \eqref{LTV_Ctrl_Plant}-\eqref{LTV_Ctrl_Channel}
		\begin{equation}\label{LTV_Ctrl_Filtering}
			\begin{split}
				X_{i+1} & = (A_i + B_iG_i)X_i + B_i W_i \\
				E_i & = G_i X_i + W_i,
			\end{split}
		\end{equation}
		\noindent where $E_i$ is observable, and $X_i$ is the hidden states to be estimated. By applying the LTV Kalman filter, \cite[Sec. 2.7.2]{Lewis_2017}, to \eqref{LTV_Ctrl_Filtering}, and defining the state prior estimate $\hat{X}_i^{-} := \mathbb{E}[X_i | E_0^{i-1}]$, the non-zero prior error covariance $P_i^{-} := \mathbb{E}[(X_i - \hat{X}_i^{-})(X_i - \hat{X}_i^{-})^\top]$ of the filtering problem \eqref{LTV_Ctrl_Filtering} satisfies
		\begin{equation}\label{LTV_Prediction}
			P_{i}^{-}  = A^{}_{i-1}P_{i-1}^{-}(G^{}_{i-1}P^{-}_{i-1}G_{i-1}^\top + I)^{-1} A_{i-1}^\top.
		\end{equation}
		\noindent Let the posterior estimate be $\hat{X}_i := \mathbb{E}[X_i| E_0^i]$. The posterior error covariance $P_i := \mathbb{E}[(X_i - \hat{X}_i)(X_i - \hat{X}_i)^\top]$ of \eqref{LTV_Ctrl_Filtering} follows
		\begin{equation}\label{LTV_Correction}
			P_{i} = P_{i}^{-}-P_{i}^{-}G^{\top}_{i}(G^{}_{i}P_{i}^{-}G^{\top}_{i}+I)^{-1}G_{i}^{}P_{i}^{-}.
		\end{equation}
		\noindent The evolution of error covariance matrices (not state estimates) of the filtering problem \eqref{LTV_Ctrl_Filtering}, presented in \eqref{LTV_Prediction} and \eqref{LTV_Correction}, are identical to the filtering result of the following LTV system without control input and process noise
		\begin{equation}\label{prop311_eq4}
			\begin{split}
				X_{i+1} & = A_iX_i, \\
				E_i & = G_i X_i + W_i.
			\end{split}
		\end{equation}
		\noindent Since \eqref{prop311_eq4} is under the assumption (\hyperref[ass2]{A2}), we can partition \eqref{prop311_eq4} into stable and antistable parts, as \hyperref[lem39]{Lemma~3.9} shows. When \eqref{LTV_Ctrl_Filtering} or \eqref{prop311_eq4} is either a) uniformly completely reconstructible or b) exponentially stable, by \hyperref[lem46]{Lemma~4.6}, the asymptotic (or steady-state) prior error covariance $\lim_{i' \rightarrow\infty} P^{-}_{i'}$ in \eqref{LTV_Prediction} takes the form of
		\begin{equation}\label{prop311_eq5}
			\lim_{i'\rightarrow\infty}P_{i'}^{-} = \left[\begin{matrix}
				0 & 0\\
				0 & \bar{P}_u^{-}(i)
			\end{matrix}\right],
		\end{equation}
		\noindent where the asymptotic prior error covariance of antistable part $\bar{P}_u^{-}(i) := \lim_{i'\rightarrow\infty}\mathbb{E}[(X_u(i') - \hat{X}_u^{-}(i'))(X_u(i') - \hat{X}_u^{-}(i')  )^\top]$, with antistable modes $X_u(t)$ and their prior estimates $\hat{X}^{-}_u(i)$, satisfies the following RDE (Riccati difference equation)
		\begin{equation}\label{LTV_RDE_Ctrl}
			\bar{P}_{u}^{-}(i+1) = A^{}_u(i) \bar{P}_u^{-}(i) [ G^{}_u(i) \bar{P}_u^{-}(i)G_u^\top(i) + I]^{-1} A_u^\top(i)
		\end{equation}
		\noindent with $\bar{P}_{u}^{-}(i)$ being non-zero, and $A_u(i)$ and $G_u(i)$ respectively being the antistable parts of $A(i)$ and $G(i)$.

		With the filtering result ahead, we now prove statement i) and show that $\bar{I}(E; X_0) \geq \overline{\lim}_{n\rightarrow\infty}(n+1)^{-1}  \log \det  \varPhi_{A_u}  (n+\break 1,  0)  = \overline{\lim}_{n\rightarrow\infty}[2(n+1)]^{-1}  \sum_{i=0}^{n} \log\det[ G^{}_u(i)  \bar{P}_u^{}(i) G_u^\top(i) \break + I]$. The former inequality will be verified later in the proof of \hyperref[cor312]{Corollary 3.12} or \hyperref[appC]{Appendix C} by resorting to the mean-square stability and maximum entropy conditions. To show the latter equality, take $\log \det(\cdot)$ on both sides of \eqref{LTV_RDE_Ctrl}, which gives
		\begin{align}\label{prop311_eq7}
			\log\det A_u(i) & = \frac{1}{2} \Big[ \log \det [G_u(i)\bar{P}_u^{-}(i)G_u^\top(i) + I] \\
			& \hspace{30pt} + \log \det \bar{P}_u^{-}(i+1) - \log \det \bar{P}_u^{-}(i) \Big]. \nonumber
		\end{align}
		\noindent Summing up \eqref{prop311_eq7} from $i = 0$ to $n$, dividing the result by $n+1$, and taking the limit superior as $n\rightarrow\infty$, we then prove statement i). To verify statement ii) and show that $\bar{I}(E; X_0) \leq \overline{\lim}_{n\rightarrow\infty} [2(n+1)]^{-1} \sum_{i=0}^{n} \log\det[ G^{}_u(i)\bar{P}^{}_u(i)G_u^\top(i) + I]  = \overline{\lim}_{n\rightarrow\infty} (n+1)^{-1} \log \det \varPhi_{A_u}(n+1, 0)$, consider the following manipulation on $I(E_0^n; X_0)$,
		\begin{align}\label{prop311_eq8}
			I(E_0^n; X_0) & \neweq{(a)} h(E_0^n) - h(W_0^n) \allowdisplaybreaks\\
			& \neweq{(b)} h(G_0^n (X_0^n - \hat{X}_{0:n}^{-}) + W_0^n) - h(W_0^n) \nonumber \allowdisplaybreaks \\
			& \newleq{(c)} \frac{1}{2} \sum_{i=0}^{n} \log  [ (2\pi {\rm e})^s |G_i^{} P_i^{-}G_i^\top+I|] - \frac{n}{2} \log( 2\pi {\rm e} )^s \nonumber \allowdisplaybreaks \\
			& \neweq{(d)}  \log\det\varPhi_{A_u}(n+1, 0) + \frac{1}{2} \log \det \bar{P}_u^{-}(0) \nonumber \allowdisplaybreaks \\
			& \hspace{102pt} - \frac{1}{2} \log\det \bar{P}_u^{-}(n+1) ,\nonumber
		\end{align}
		\noindent where (a) relies on \eqref{prop33_eq2} in \hyperref[prop33]{Proposition~3.3}; (b) uses the identity \eqref{LTV_Ctrl_Channel} and $h(G_0^n X_0^n + W_0^n) = h(G_0^n(X_0^n - \hat{X}_{0:n}^{-})  + W_0^n)$, since $\hat{X}_{0:n}^{-} := \{\mathbb{E}[X_0^i|E_0^{i-1}]\}_{i=0}^n$ is deterministic; (c) follows from the property $h(X_0^n)  \leq \sum_{i=0}^{n}  h(X_i)$, where equality holds if and only if $\{X_i\}_{i=0}^n$ are mutually independent, and the maximum entropy property with $E_i$ and $W_i \in \mathbb{R}^s$ in \eqref{LTV_Ctrl_Channel}, and (d) invokes \eqref{prop311_eq5} and \eqref{prop311_eq7} in the limit of $i\rightarrow\infty$. 	
		Dividing both sides of \eqref{prop311_eq8} by $n+1$ and taking the limit superior as $n\rightarrow\infty$, statement ii) is proved. Combining the proofs of statements i) and ii), we then verify the identity $\bar{I}(E; X_0) = \overline{\lim}_{n\rightarrow \infty} (n+1)^{-1} \log\det[\varPhi_{A_u}(n+1, 0)]$ in \hyperref[prop311]{Proposition~3.11}. This completes the proof. 		
	\end{proof}	
	\noindent \hyperref[prop311]{Proposition 3.11} reveals that for the LTV systems governed by \eqref{LTV_Ctrl_Plant}-\eqref{LTV_Ctrl_Channel}, total information rate $\bar{I}(E; X_0)$ equals the LTV Bode's integral $\mathscr{B}$ defined in \hyperref[lem310]{Lemma 3.10}. This equality can be regarded as the LTV counterpart of the identity $\bar{I}(E; X_0) = \sum_{j} \log  |\lambda^{+}_j(A)|$ in \hyperref[prop36]{Proposition 3.6}, a technical explanation of which is given in  \hyperref[rem38]{Remarks 3.8} and \ref{rem313}. Meanwhile, compared with the optimization-based proof in \cite{Tanaka_TAC_2018}, which has not been verified in the LTV scenario yet, our proofs of \hyperref[prop36]{Propositions 3.6} and \ref{prop311}, based on the information and estimation theories, are not only more straightforward, but applicable to both the LTI and LTV control systems. Similar to the additional findings${}^{\ref{LTI_foot}}$ in the proof of \hyperref[prop36]{Proposition 3.6}, more interesting observations on the MMSE-based bounds in \eqref{prop311_eq1} can be extracted from the propagation formulas of error covariance matrices in \eqref{LTV_Prediction} and \eqref{LTV_Correction}.\footnote{For example, {\bf a)} $2\sum_{i=0}^{n} \sum_j \log|\lambda_{j}(A_u(i)) | + \log\det P_u^{-}(0) - \log\det \break P_u^{-}(n+1) = \sum_{i=0}^{n} \sum_k (1+\eta^{}_{i, k})$, where $\{\lambda_j(A_u(i))\}$ denote the eigenvalues of $A_u(i)$, and $\{\eta^{}_{i, k}\}$ are the eigenvalues of $G^{}_u(i)P_u^{-}(i)G_u^\top(i)$; {\bf b)} $\lim_{n\rightarrow \infty} [2(n  + 1)] ^{-1} \sum_{i=0}^{n}{\rm cmmse}(U_i)  \leq \overline{\lim}_{n\rightarrow\infty} (n+  1)^{-1}  \sum_{i=0}^{n}  \sum_j \break \log|\lambda_j(A_u(i))|$; {\bf c)} $\lim_{n\rightarrow \infty} [2 (n  + 1)]^{-1} \sum_{i=0}^{n}{\rm pmmse}(U_i) \geq \overline{\lim}_{n\rightarrow\infty} \break  (n+1)^{-1} \sum_{i=0}^{n} \sum_j \log|\lambda_j(A_u(i))|$, where observations b) and c) along with the identity $\bar{I}(E; X_0) = \overline{\lim}_{n\rightarrow \infty} (n+1)^{-1} \log\det[\varPhi_{A_u}(n+1, 0)]$ reaffirm \eqref{prop311_eq1} in \hyperref[prop311]{Proposition 3.11}. The proofs of observations a)-c) can be inferred from the LTI derivations in  \hyperref[appA]{Appendix A}.}

	For completeness, we briefly discuss the scenario when the LTV plant \eqref{LTV_Ctrl_Plant} is stabilized by the nonlinear controller \eqref{Nonlinear_Ctrl}, which can also be modeled into the control channel \eqref{LTI_Nonlinear_Channel}, and is no longer an LTV system due to the non-linearity in \eqref{Nonlinear_Ctrl}. By applying \hyperref[thm34]{Theorem 3.4} to the control system depicted by \eqref{Nonlinear_Ctrl}, \eqref{LTI_Nonlinear_Channel} and \eqref{LTV_Ctrl_Plant}, we have the following finding.
	\begin{corollary}\label{cor312}
		When the discrete-time LTV plant \eqref{LTV_Ctrl_Plant} under nonlinear control mapping \eqref{Nonlinear_Ctrl} in  \hyperref[fig2]{Fig.~2} is internally mean-square stable, the total information rate $\bar{I}(E; X_0)$ satisfies
		\begin{equation}\label{cor312_eq1}
			{\rm CLB}_{\rm LTV} \leq \bar{I}(E; X_0) \leq \uplim_{n\rightarrow\infty}  \sum_{i=0}^{n}  \frac{{\rm pmmse}(U_i)}{2(n+1)},
		\end{equation}
		where ${\rm CLB}_{\rm LTV} \hspace{-2.5pt} = \hspace{-2.5pt} \max\{ \sum_{j=1}^l m_j \log \underline{\kappa}_j, \uplim_{n\rightarrow\infty} [2(n+1)]^{-1} \hspace{-2pt} \cdot \break \sum_{i=0}^{n} {\rm cmmse}(U_i)  \}$.
	\end{corollary}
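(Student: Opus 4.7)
The plan is to prove the result in two parts. The MMSE-based sandwich piece of the statement follows at once by applying \hyperref[thm34]{Theorem~3.4} to the control channel \eqref{LTI_Nonlinear_Channel}, since under \eqref{Nonlinear_Ctrl}--\eqref{LTV_Ctrl_Plant} the ``message'' $(C_0^n, X_0)$ reduces to $X_0$ (as noted in the remark following \hyperref[thm34]{Theorem~3.4}). This immediately yields the upper bound $\bar{I}(E; X_0) \leq \uplim_{n\rightarrow\infty}[2(n+1)]^{-1} \sum_{i=0}^n {\rm pmmse}(U_i)$ and the cmmse branch of ${\rm CLB}_{\rm LTV}$. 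The remaining task is to show $\bar{I}(E; X_0) \geq \sum_{j=1}^l m_j \log \underline{\kappa}_j$ despite the nonlinearity of $g_i$; this coincides with ``statement~i)'' deferred in the proof of \hyperref[prop311]{Proposition~3.11}, and its derivation should hinge on mean-square stability and the maximum-entropy inequality.

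Next I would apply the state-space transformation of \hyperref[lem39]{Lemma~3.9} to block-diagonalise $\{A_i\}$ into stable and antistable pieces, writing $T_0 X_0 = (X_s(0)^\top, X_u(0)^\top)^\top$ with $X_u(0) \in \mathbb{R}^{m_u}$. Because the dynamics decouple after transformation, iterating the antistable subsystem driven by the channel input $E_0^n$ yields the explicit identity
\[
X_u(n+1) \;=\; \varPhi_{A_u}(n+1,0)\, X_u(0) \;+\; F_n(E_0^n),
\]
where $F_n$ collects the forced response and depends only on $E_0^n$. Crucially, this relation is a pure consequence of the plant \eqref{LTV_Ctrl_Plant}; the nonlinearity of $g_i$ affects only how $E_0^n$ is generated, not the deterministic linear map $X_u(0) \mapsto X_u(n+1)$ given $E_0^n$. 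The change-of-variables identity for conditional differential entropy then gives $h(X_u(0)\mid E_0^n) = h(X_u(n+1)\mid E_0^n) - \log|\det \varPhi_{A_u}(n+1,0)|$, and assumption \hyperref[ass1]{(A1)} together with the Gaussian maximum-entropy bound delivers $h(X_u(n+1)\mid E_0^n) \leq h(X_u(n+1)) \leq (m_u/2)\log(2\pi {\rm e}\,\bar{\sigma}^2)$ for an $n$-independent constant $\bar{\sigma}^2$.

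Combining these with $h(X_0 \mid E_0^n) \leq h(X_s(0)) + h(X_u(0) \mid E_0^n) + \log|\det T_0^{-1}|$ and the representation $I(E_0^n; X_0) = h(X_0) - h(X_0 \mid E_0^n)$, I would obtain $I(E_0^n; X_0) \geq \log|\det \varPhi_{A_u}(n+1,0)| - K$ for an absolute constant $K$. Dividing by $n+1$, passing to the limit superior, and invoking \hyperref[lem310]{Lemma~3.10} then delivers $\bar{I}(E; X_0) \geq \mathscr{B} \geq \sum_{j=1}^l m_j \log \underline{\kappa}_j$, which together with the MMSE estimates closes the argument. The main obstacle is the careful bookkeeping around the change-of-variables step: because $X_u(0)$ and $E_0^n$ are statistically dependent, one must interpret the entropy identity in terms of a push-forward of the conditional density, which is legitimate precisely because the Jacobian $\varPhi_{A_u}(n+1,0)$ is deterministic, and one must verify that $h(X_0)$, $h(X_s(0))$, and $\bar{\sigma}^2$ are all finite so that they contribute only $o(n)$ terms in the information-rate limit.
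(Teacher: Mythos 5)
Your proposal is correct, and it reaches the key inequality $\bar{I}(E;X_0)\geq\sum_{j=1}^{l}m_j\log\underline{\kappa}_j$ by a route that differs from the paper's Appendix~C in its entropy bookkeeping, though the ingredients (the Lemma~3.9 decomposition, the antistable transition matrix, mean-square stability, and the Gaussian maximum-entropy bound) are the same. The paper writes $X_u(n)=\varPhi_{A_u}(n,0)\,[X_u(0)+\bar{X}_u(n)]$ with $\bar{X}_u(n)$ a function of $E_0^{n-1}$, invokes data processing to get $I(E_0^{n-1};X_0)\geq I(\bar{X}_u(n);X_u(0))$, and then extracts $\log\det\varPhi_{A_u}(n,0)$ by factoring it out of the second moment $\mathbb{E}[X_u(n)X_u(n)^\top]$, whose log-determinant is bounded by mean-square stability; the residual covariance $\Gamma_n=\mathbb{E}[(X_u(0)+\bar{X}_u(n))(X_u(0)+\bar{X}_u(n))^\top]$ then absorbs the constant. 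You instead keep the conditioning on $E_0^n$ throughout and extract the same determinant via the change-of-variables identity $h(X_u(0)\mid E_0^n)=h(X_u(n+1)\mid E_0^n)-\log|\det\varPhi_{A_u}(n+1,0)|$, bounding $h(X_u(n+1)\mid E_0^n)\leq h(X_u(n+1))$ uniformly in $n$ by \hyperref[ass1]{(A1)} and maximum entropy. The two mechanisms are equivalent in substance; yours avoids introducing $\bar{X}_u(n)$ and $\Gamma_n$ and makes the role of the (deterministic, invertible, since $\underline{\mu}(\varPhi_{A_u}(n,0))\geq\alpha\beta^{n}>0$) Jacobian explicit, at the cost of the finiteness caveats on $h(X_0)$ and $h(X_s(0))$ that you correctly flag -- the paper carries the symmetric implicit requirement that $h(X_u(0))$ be finite, so this is not an added burden. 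The MMSE sandwich portion of your argument coincides with the paper's (a direct application of \hyperref[thm34]{Theorem~3.4} to \eqref{LTI_Nonlinear_Channel}).
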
	
	\begin{proof}[\rm\bf Proof] %
		See \hyperref[appC]{Appendix C} or \cite[Cor. 3.16]{Wan_2024} for the proof.
	\end{proof}
	\noindent Since the control mapping \eqref{Nonlinear_Ctrl} is nonlinear, the estimates of input, ${\rm cmmse}(U_i)$ and ${\rm pmmse}(U_i)$, in \eqref{cor312_eq1} can no longer be calculated from the linear Kalman filters adopted in \hyperref[prop36]{Proposition 3.6} or \ref{prop311}. Instead, the nonlinear filtering method, to be discussed in the following subsection, should be used.

	\hyperref[prop311]{Proposition 3.11} and \hyperref[cor312]{Corollary 3.12} connect the total information rate $\bar{I}(E; X_0)$ to the LTV Bode's integral $\mathscr{B}$ and the sum of spectral values $\sum_{j=1}^{l}m_j \log\underline{\kappa}_j$ for the first time. This connection suggests that analogous to the LTV Bode's integral and average risk-sensitive cost function, total information rate also captures the time-domain control trade-off property of the LTV system \eqref{LTV_Ctrl_Plant}, i.e., if the running risk-sensitive or entropy cost is less (or greater) than $\bar{I}(E; X_0)$ on a time interval, the cost outside this interval must be greater (or less). Moreover, with the aid of the I-MMSE relationship, \hyperref[prop311]{Proposition 3.11} and \hyperref[cor312]{Corollary 3.12} do not posit any assumption on the stationarity or distribution of signals or initial states, and the MMSE-based bounds not only provide a sandwich estimate to $\bar{I}(E; X_0)$ but supplement the bounds in \hyperref[thm34]{Theorem 3.4} and \hyperref[lem310]{Lemma 3.10}.
	\begin{remark}\label{rem313}
		The LTV derivations and results in \hyperref[prop311]{Proposition 3.11} and \hyperref[cor312]{Corollary 3.12} cover the LTI ones in \hyperref[prop36]{Proposition 3.6} and \hyperref[cor37]{Corollary 3.7} as a special case. When we apply the LTV results of this section to an LTI system as \eqref{LTI_Ctrl_Plant}, the spectral values $\underline{\kappa}_j$ in the ${\rm CLB}_{\rm LTV}$ of \eqref{prop311_eq1} and \eqref{cor312_eq1} will coincide with the modulus of unstable poles $|\lambda^{+}_j(A)|$ in the ${\rm CLB}_{\rm LTI}$ of \eqref{prop36_eq1} and \eqref{cor37_eq1}, and the time-varying estimation error covariance matrices in \eqref{LTV_Prediction}-\eqref{prop311_eq8} will degenerate to the time-invariant or steady-state matrices and be canceled. Similar relationship also exists when we investigate the total information rates of continuous-time LTI and LTV control systems \cite{Wan_arxiv_2022}.
	\end{remark}

	\subsection{Control Trade-offs in Nonlinear Systems}\label{sec34}
	When both $\mathcal{P}$ and $\mathcal{K}$ in \hyperref[fig2]{Fig.~2} are nonlinear, consider the following nonlinear plant 
	\begin{equation}\label{nonlinear_dym}
		\begin{split}
			X_{i+1} & = f_i(X_i) + b_i(X_i) E_i,\\
			Y_i & = h_i(X_i), 
		\end{split}
	\end{equation}
	stabilized by the nonlinear control mapping 
	\begin{equation}\label{nonlinear_ctrl}
		U_i = g_i(Y_i).
	\end{equation}
	The control channel and error signal $E_i$ in \hyperref[fig2]{Fig.~2} then satisfy
	\begin{equation}\label{nonlinear_ctrl_channel}
		E_i = U_i(E_0^{i-1}, X_0) + W_i = U_i(X_{i-1}) + W_i,
	\end{equation}	
	\noindent where $W_i$ is a white Gaussian process noise, and the notation $U_i$ is abused here to denote proper functions that satisfy $g(Y_i) = U_i(E_0^{i-1}, X_0) = U_i(X_{i-1})$. By applying \hyperref[thm34]{Theorem 3.4} and nonlinear estimation theory to the control system and feedback channel depicted in \eqref{nonlinear_dym}-\eqref{nonlinear_ctrl_channel}, the MMSE-based sandwich bounds of $\bar{I}(E; X_0)$ and their calculation scheme are given in the following proposition and proof. 
	\begin{proposition}\label{prop314}
		When the discrete-time control system in \hyperref[fig2]{Fig. 2}, subject to the nonlinear plant \eqref{nonlinear_dym} and control mapping \eqref{nonlinear_ctrl}, is internally mean-square stable, the total information rate $\bar{I}(E; X_0)$ in \eqref{nonlinear_ctrl_channel} is bounded by
		\begin{equation}\label{prop314_eq1}
			\uplim_{n\rightarrow\infty} \sum_{i=0}^{n} \frac{{\rm cmmse}(U_i)}{2(n+1)}   \leq \bar{I}(E; X_0) \leq \uplim_{n\rightarrow\infty}  \sum_{i=0}^{n} \frac{{\rm pmmse}(U_i)}{2(n+1)}.
		\end{equation}		
	\end{proposition}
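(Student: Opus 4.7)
The plan is to apply Theorem 3.4 (or equivalently Theorem 3.2) directly to the nonlinear control channel \eqref{nonlinear_ctrl_channel}. Since there is no command signal $C_0^n$ in the setup of \eqref{nonlinear_ctrl}-\eqref{nonlinear_ctrl_channel}, \hyperref[rem35]{Remark~3.5} identifies $\bar{I}(E;X_0)$ with $\bar{I}(E;C,X_0)$ in \eqref{thm34_eq1}. Theorem 3.4 was proved for a general stabilizing mapping $\mathcal{K}$ satisfying only the internal mean-square stability and average-power conditions in \hyperref[ass1]{(A1)}, and neither its statement nor its derivation requires any linearity of $f_i$, $b_i$, $h_i$, or $g_i$. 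Consequently, the upper bound in \eqref{prop314_eq1} is precisely the pmmse bound of \eqref{thm34_eq1}, and the lower bound follows from the cmmse term embedded in ${\rm CLB}$. The sandwich inequalities thus come essentially for free from the already-established I-MMSE machinery; the genuinely new content of \hyperref[prop314]{Proposition~3.14} is the indication of how the two MMSE sums can be evaluated in practice.

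The second step of the proof is therefore to specify the calculation scheme for ${\rm cmmse}(U_i)$ and ${\rm pmmse}(U_i)$, where $\hat{U}_i := \mathbb{E}[U_i|E_0^{i}]$ and $\hat{U}_i^{-} := \mathbb{E}[U_i|E_0^{i-1}]$. Since $U_i = g_i(h_i(X_{i-1}))$ after substitution of \eqref{nonlinear_ctrl} into \eqref{nonlinear_dym}, and both the state evolution and the channel \eqref{nonlinear_ctrl_channel} are nonlinear, the closed-form LTV Kalman recursions \eqref{LTV_Prediction}-\eqref{LTV_Correction} used in the proof of \hyperref[prop311]{Proposition~3.11} are no longer available. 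I would instead cast \eqref{nonlinear_dym}-\eqref{nonlinear_ctrl_channel} as a standard nonlinear state-space filtering problem with hidden state $X_i$, measurement $E_i = U_i(X_{i-1}) + W_i$, and white Gaussian measurement noise $W_i$, and then propagate the conditional laws $p(X_i\mid E_0^{i-1})$ and $p(X_i\mid E_0^i)$ to extract $\hat{U}_i^{-}$ and $\hat{U}_i$ via the push-forward under $g_i\circ h_i$.

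The main obstacle is that these conditional densities rarely admit closed-form updates, so the calculation scheme must be described in terms of the standard nonlinear-filtering surrogates: the extended Kalman filter, which linearizes $f_i$, $b_i$, $h_i$, and $g_i$ along the running estimate and reuses the LTV recursions; the unscented Kalman filter, which propagates a set of sigma points through the exact nonlinearities; and the particle filter, which represents the posterior by a weighted Monte Carlo ensemble. In each case one obtains numerical approximations to ${\rm cmmse}(U_i)$ and ${\rm pmmse}(U_i)$ and then forms the Ces\`aro averages appearing on the two sides of \eqref{prop314_eq1} to obtain the MMSE-based sandwich estimate of $\bar{I}(E;X_0)$. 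Conceptually the proof is short; practically, quantifying the bounds in \eqref{prop314_eq1} requires committing to one of these filtering approximations, which is where any remaining analytical effort concentrates.
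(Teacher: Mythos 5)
Your derivation of the sandwich inequality itself is correct and is exactly the paper's route: apply Theorem~3.2/3.4 to the channel \eqref{nonlinear_ctrl_channel}, use Remark~3.5 to identify $\bar{I}(E;C,X_0)$ with $\bar{I}(E;X_0)$ in the absence of a command signal, and observe that nothing in that machinery required linearity. The paper's proof likewise spends essentially all of its length on the computational scheme for ${\rm cmmse}(U_i)$ and ${\rm pmmse}(U_i)$ rather than on the inequality.

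There is, however, one concrete point where your described scheme would fail if implemented literally. You propose to ``cast \eqref{nonlinear_dym}--\eqref{nonlinear_ctrl_channel} as a \emph{standard} nonlinear state-space filtering problem'' and propagate $p(X_i\mid E_0^{i-1})$, $p(X_i\mid E_0^i)$ by the usual time/measurement updates. But the problem is not standard: the measurement noise $W_i$ in $E_i = U_i(X_i)+W_i$ is the \emph{same} random variable that drives the state through $X_{i+1}=f_i(X_i)+b_i(X_i)E_i = f_i(X_i)+b_i(X_i)\bigl[U_i(X_i)+W_i\bigr]$, so the process and measurement noises are perfectly correlated and $X_{i+1}$ is not conditionally independent of $E_0^i$ given $X_i$. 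The naive Chapman--Kolmogorov step $\int p(x_{i+1}\mid x_i)\,\pi_i(x_i)\,dx_i$ is therefore incorrect. The paper handles this by first decoupling the noise, rewriting the dynamics as $X_{i+1}=F_i(X_i,E_i)$ with $F_i(X_i,E_i)=\bar f_i(X_i)+b_i(X_i)[E_i-U_i(X_i)]$, after which $(X_i,E_i)$ is Markov and the time update legitimately uses the transition kernel $p(x_{i+1}\mid x_i, e_0^i)$. Your EKF/UKF/particle-filter surrogates are fine, but they must be applied to this decoupled model (or to a correlated-noise variant of those filters), not to the system as if the noises were independent. This does not affect the validity of \eqref{prop314_eq1}, only the correctness of the numerical evaluation of its two sides.
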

	\begin{proof}[\rm\bf Proof]
		The MMSE-based bounds and the inequalities in \eqref{prop314_eq1} can be obtained by applying \hyperref[thm34]{Theorem 3.4} to control channel \eqref{nonlinear_ctrl_channel}. However, since both the plant and controller are nonlinear, we can no longer resort to the preceding linear Kalman filters when calculating the estimates and estimation errors of control input $U_i$ in \eqref{prop314_eq1}. Assume the control input signal $U_i$ in \eqref{nonlinear_ctrl} and \eqref{nonlinear_ctrl_channel} is observable. Otherwise, see \hyperref[sec44]{Section IV-D} for the alternative estimation and calculation scheme. In order to compute the prior estimate $\hat{U}_i^{-} := \mathbb{E}[U_i(X_i) | E_0^{i-1}]$ in ${\rm pmmse}(U_i)$ and the posterior estimate $\hat{U}_i := \mathbb{E}[U_i(X_i)|  E_0^{i}]$ in ${\rm cmmse}(U_i)$, consider the following nonlinear filtering problem with correlated noise and originated from \eqref{nonlinear_dym}-\eqref{nonlinear_ctrl_channel}:
		\begin{equation}\label{prop314_eq2}
			\begin{split}
				X_{i+1} & = \bar{f}_i(X_i) + b_i(X_i) W_i\\
				E_i & = U_i(X_i) + W_i,
			\end{split}
		\end{equation}	
		\noindent where $X_i$ denotes the hidden internal states of plant $\mathcal{P}$; the error signal $E_i$ is observable, and $\bar{f}_i(X_i) := f_i(X_i) + b_i(X_i) \cdot U_i(X_i)$. To implement the nonlinear filter and estimate the prior and posterior densities from \eqref{prop314_eq2}, we decouple the correlated noise by transforming \eqref{prop314_eq2} into the following filtering problem with independent noise \cite{Chen_RSN_2011, Wang_Auto_2014}:
		\begin{equation}\label{prop314_eq3}
			\begin{split}
				X_{i+1} & = {F}_i(X_i, E_i)\\
				E_i & = U_i(X_i) + W_i,
			\end{split}
		\end{equation}
		\noindent where ${F}_i(X_i, E_i) := \bar{f_i}(X_i) + b_i(X_i) [E_i - U_i(X_i)]$. Since $(X_i, E_i)$ in \eqref{prop314_eq3} form a Markov chain, by letting $\pi_{t|s}(x) := \mathbb{P}(X_t = x| E_0^s = e_0^s)$, we can calculate the prior density $\pi_{i+1|i}(x_{i+1}) = \mathbb{P}(X_{i+1} = x_{i+1} | E_0^i = e_0^i)$ by the time update
		\begin{equation}\label{prop314_eq4}
			\pi_{i+1 | i}(x_{i+1}) = \int_{\mathcal{X}_i} p(x_{i+1}| x_i, e_0^i) \pi_{i}(x_i)  dx_i,
		\end{equation}
		\noindent where $p(x_{i+1}|x_i, e_0^i) :=\mathbb{P}[X_{i+1} = x_{i+1} | X_i = x_i, E_0^i = e_0^i]$ is the state transition probability of \eqref{prop314_eq3}, and $\mathcal{X}_i$ denotes the state space of $X_{i}$. With the measurement $e_{i+1}$, we can compute the posterior density $\pi_{i+1}(x_{i+1}) := \mathbb{P}(X_{i+1} = x_{i+1} | E_0^{i+1} = e_0^{i+1})$ from the measurement update 
		\begin{equation}\label{prop314_eq5}
			\pi_{i+1}(x_{i+1}) = \frac{\pi_{i+1|i}(x_{i+1})  p(e_{i+1}|x_{i+1})}{\int_{\mathcal{X}_{i+1}}\pi_{i+1|i}(x_{i+1})  p(e_{i+1}|x_{i+1})  dx_{i+1}},
		\end{equation}
		\noindent where $p(e_{i+1}|x_{i+1}) := \mathbb{P}(E_{i+1} = e_{i+1} | X_{i+1} = x_{i+1})$ is the conditional distribution of observation $e_{i+1}$ given the state $x_{i+1}$. By utilizing the density functions \eqref{prop314_eq4} and \eqref{prop314_eq5}, and control mapping \eqref{nonlinear_ctrl}, we can calculate the estimates of control input $\hat{U}_i^{-}$ and $\hat{U}_i$, and thus the estimation errors ${\rm cmmse}(U_i)$ and ${\rm pmmse}(U_i)$ in \eqref{prop314_eq1}.
	\end{proof}

	It is worth noting that the density functions \eqref{prop314_eq4} and \eqref{prop314_eq5} can be approximated by the sampling methods, such as MCMC, which are arbitrarily accurate but tend to be slow and computationally intensive. In practice, by postulating some proper assumptions, e.g., Gaussian distribution assumption in \cite{Arasaratnam_PIEEE_2007}, we can simplify and expedite this approximation process with the aid of some sub-optimal filters, such as the extended Kalman filter and particle filter \cite{Chen_Statistics_2003}. Meanwhile, due to the variety and complexity of nonlinear models in \eqref{nonlinear_dym}-\eqref{nonlinear_ctrl_channel}, the control trade-off property of $\bar{I}(E;X_0)$ in \hyperref[prop314]{Proposition~3.14} is not as explicit as the linear scenarios discussed in the preceding subsections. Nevertheless, since the I-MMSE relationship and inequalities \eqref{thm34_eq1} still hold for nonlinear control systems, we can interpret the total information rate $\bar{I}(E; X_0)$ in \eqref{nonlinear_ctrl_channel} as i) a performance limit related to the average control quadratic cost as in the LTV control systems, and ii) a data transmission limit (or the plant's instability rate) of the nonlinear control system and feedback channel as we interpreted in \hyperref[sec31]{Section III-A}.

	\section{Discrete-Time Filtering Limits}\label{sec4}
	In this section, we first model the general discrete-time filtering system into an additive white Gaussian channel without feedback. With the aid of discrete-time I-MMSE relationship and optimal estimation theory, total information rate and its sandwich bounds are used to quantify the performance limits of different filtering systems.

	\subsection{General Filtering Systems and Limits}
	Consider the general discrete-time filtering system depicted in Fig.~\ref{fig3}:
	\begin{figure}[H]
		\centerline{\includegraphics[width=0.85\columnwidth]{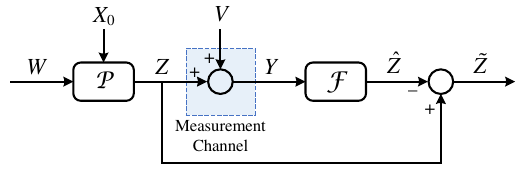}}
		\caption{Configuration of a general filtering system.}\label{fig3}
	\end{figure}	
	\noindent where $\mathcal{P}$ and $\mathcal{F}$ denote the plant model and filtering process, respectively; $(W, X_0)$ can be treated as the transmitted message of measurement channel with $W$ being a white Gaussian process noise, and $X_0$ being the initial states of $\mathcal{P}$; $V$ is a white Gaussian measurement noise; $W$, $V$, and $X_0$ are mutually independent; $Z$ stands for the noise-free output to estimate; $Y$ is the measured output; $\hat{Z}$ represents the estimate of $Z$, and the estimation error $\tilde{Z} := Z - \hat{Z}$. The open- or closed-loop plant $\mathcal{P}$ in \hyperref[fig3]{Fig.~3} is described by the following nonlinear model
	\begin{equation}\label{Filter_Plant}
		\begin{split}
			X_{i+1} & = f_i(X_i) + b_i(X_i) W_i \\
			Z_i & = h_i(X_i),
		\end{split}
	\end{equation}
	\noindent where $X_i$ and $Z_i$ are the internal states and noise-free output of $\mathcal{P}$ at time $i$; process noise $W_i \sim \mathcal{N}(0, \varepsilon^2 I)$, and $f_i(\cdot)$, $b_i(\cdot)$ and $h_i(\cdot)$ are measurable functions. Since the noise-free output $Z_i$ is a function of message $(W_0^{i-1}, X_0)$ by \eqref{Filter_Plant}, the measurement channel in \hyperref[fig3]{Fig.~3} is governed by
	\begin{equation}\label{measure_channel}
		Y_i = Z_i(W_0^{i-1}, X_0) + V_i,
	\end{equation}
	\noindent where the channel input is $Z_i(W_0^{i-1}, X_0) = Z_i = h_i(X_i)$, and the measurement noise is $V_i \sim \mathcal{N}(0, I)$. By combining the first equation in \eqref{Filter_Plant} with \eqref{measure_channel}, we can then develop a discrete-time filtering problem, in which $X_i$ is the vector of hidden states to estimate, and $Y_i$ denotes the observable output. The following assumption is then imposed on this filtering problem. 	
		
	\vspace{0.3em}
	
	\noindent (A4) \label{ass4} The noise-free output $Z_i$ is of bounded power, i.e., $\mathbb{E}[Z_i^\top Z_i] < \infty$, $\forall i$. 
	
	\begin{remark}
		\hyperref[ass4]{(A4)} is a prerequisite for applying the I-MMSE relationship or \hyperref[thm24]{Theorem~2.4} to the measurement channel \eqref{measure_channel}, since it guarantees that $Z_i$ belongs to the reproducing kernel Hilbert space induced by $V_i$, which is a necessary (and sufficient, when $Z_i$ is Gaussian) condition for the boundedness and existence of total information $I(Y_0^n; W_0^{n-1}, X_0)$ and input-output mutual information $I(Y_0^n; Z_0^n)$ in \eqref{measure_channel}. Apart from that, no restriction is imposed on the distribution of $Z_i$, i.e., $Z_i$ can be non-Gaussian.
	\end{remark}

	By applying \hyperref[thm24]{Theorem 2.4} to the filtering problem comprising of \eqref{Filter_Plant} and \eqref{measure_channel}, the following theorem shows that the total information rate $\bar{I}(Y; W, X_0)$ characterizes some performance limits in the filtering problem of estimating $Z_0^n$. 
	\begin{theorem}\label{thm42}
		For the discrete-time filtering systems subject to \eqref{Filter_Plant}, \eqref{measure_channel} and \hyperref[fig3]{Fig.~3}, the total information $I(Y_0^n;  W_0^{n-1}, X_0)$ in \eqref{measure_channel} is bounded between 
		\begin{equation}\label{thm42_eq1}
			\frac{1}{2} \sum_{i=0}^{n} {\rm cmmse}(Z_i) \leq I(Y_0^n; W_0^{n-1}, X_0) \leq \frac{1}{2} \sum_{i=0}^{n} {\rm pmmse}(Z_i),
		\end{equation}
		and the total information rate $\bar{I}(Y; W, X_0)$ satisfies\footnote{In particular, when $(Y_0^n, W_0^n)$ are stationary Gaussian or jointly stationary, by \eqref{thm42_eq1} and \hyperref[def22]{Definition 2.2}, we have $\lim_{n\rightarrow\infty}  \sum_{i=0}^{n}{\rm cmmse}(Z_i) / [2(n+1)] \leq \bar{I}(Y; W,  X_0)   \leq \lim_{n\rightarrow\infty} \sum_{i=0}^{n} {\rm pmmse}(Z_i) / [2(n+1)]$.}
		\begin{equation}\label{thm42_eq2}
			\uplim_{n\rightarrow\infty} \sum_{i=0}^{n} \frac{{\rm cmmse}(Z_i)}{2(n+1)}  \leq  \bar{I}(Y; W, X_0) \leq \uplim_{n\rightarrow\infty} \sum_{i=0}^{n} \frac{{\rm pmmse}(Z_i)}{2(n+1)}.
		\end{equation}
	\end{theorem}
	\begin{proof}[\rm\bf Proof]
		Inequalities \eqref{thm42_eq1} are obtained by applying the I-MMSE relationship in \hyperref[thm24]{Theorem~2.4} to the measurement channel \eqref{measure_channel}, and inequalities \eqref{thm42_eq2} are derived by \eqref{thm42_eq1} and the definition of mutual information rate in \hyperref[def22]{Definition~2.2}. 
	\end{proof}	    	
	\noindent \hyperref[thm42]{Theorem~4.2} shows that for the general discrete-time filtering systems, total information rate serves as a lower bound for the time-averaged prediction MMSE and an upper bound for the time-averaged causal MMSE. By resorting to the fundamental properties of differential entropy and mutual information, we have the following equality constraints and decomposition for $I(Y_0^n; W_0^{n-1}, X_0)$.  
	\begin{proposition}\label{prop43}
		For the discrete-time filtering systems subject to \eqref{Filter_Plant}, \eqref{measure_channel} and \hyperref[fig3]{Fig.~3}, the total information $I(Y_0^n; W_0^n, X_0)$ satisfies 
		\begin{equation}\label{prop43_eq1}
			I(Y_0^n; W_0^{n-1}, X_0) = I(Y_0^n; X_0) + I(Y_0^n; W_0^{n-1}|X_0),
		\end{equation}
		and
		\begin{equation}\label{prop43_eq2}
			I(Y_0^n; W_0^{n-1}, X_0) = h(Y_0^n) - h(V_0^n). 
		\end{equation}
	\end{proposition}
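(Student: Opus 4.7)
The proposition parallels Proposition~3.3, with the measurement channel \eqref{measure_channel} replacing the control channel \eqref{Ctrl_Channel}, so my plan is to follow the same style of argument. The two structural facts I will exploit are: (i) under \eqref{Filter_Plant}, $Z_i$ is a measurable function of $(W_0^{i-1}, X_0)$ for every $i \leq n$, and hence deterministic given the conditioning $(W_0^{n-1}, X_0)$; and (ii) the measurement noise $V$ is white Gaussian and independent of $(W, X_0)$.

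For \eqref{prop43_eq1} I would simply invoke the chain rule of mutual information with the groupings $W_0^{n-1}$ and $X_0$, which immediately yields the decomposition. For \eqref{prop43_eq2} I would start from the entropy representation
\begin{equation*}
I(Y_0^n; W_0^{n-1}, X_0) = h(Y_0^n) - h(Y_0^n \mid W_0^{n-1}, X_0),
\end{equation*}
decompose the conditional entropy by the chain rule into $\sum_{i=0}^n h(Y_i \mid Y_0^{i-1}, W_0^{n-1}, X_0)$, and then mimic the four-step pipeline used in the proof of Proposition~3.3: write $Y_i = Z_i + V_i$ and use that $Z_i$ is determined by the conditioning to replace $Y_i$ by $V_i$ inside the differential entropy; then use the same measurability to swap $Y_0^{i-1}$ for $V_0^{i-1}$ in the conditioning; and finally invoke the independence of $V$ from $(W, X_0)$ together with the whiteness of $V$ to drop the conditioning entirely, leaving $\sum_{i=0}^n h(V_i) = h(V_0^n)$.

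The only delicate step is the two substitutions that replace $Y$'s by $V$'s inside conditional differential entropies. These are purely measurability arguments, relying on the deterministic relation $V_i = Y_i - Z_i(W_0^{i-1}, X_0)$ to show that conditional on $(W_0^{n-1}, X_0)$ the $\sigma$-algebras generated by $Y_0^{i-1}$ and $V_0^{i-1}$ coincide, so that the shift-invariance of differential entropy applies. This is routine but worth writing out, and it is the only point where the proof genuinely uses the structure of \eqref{Filter_Plant}--\eqref{measure_channel}; no new inequalities or estimation-theoretic input are required.
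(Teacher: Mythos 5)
Your proposal is correct and follows essentially the same route as the paper: equation \eqref{prop43_eq1} via the chain rule of mutual information, and equation \eqref{prop43_eq2} via the entropy representation, the chain rule on the conditional entropy, the substitution $Y_i \mapsto V_i$ using the measurability of $Z_i$ with respect to $(W_0^{i-1}, X_0)$, the swap of $Y_0^{i-1}$ for $V_0^{i-1}$ in the conditioning, and finally the mutual independence of $W$, $V$, $X_0$ together with the whiteness of $V$ to reduce to $h(V_0^n)$. The paper's proof is exactly this pipeline (stated as "similar arguments as the proof of \eqref{prop33_eq2}"), so no further comparison is needed.
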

	\begin{proof}[\rm\bf Proof]
		\eqref{prop43_eq1} is obtained by applying the chain rule of mutual information to $I(Y_0^n; W_0^{n-1}, X_0)$. \eqref{prop43_eq2} can be verified by
		\begin{align*}
			I(Y_0^n; W_0^{n-1}, X_0) & = h(Y_0^n) - h(Y_0^n|W_0^{n-1}, X_0) \allowdisplaybreaks\\
			& = h(Y_0^n) - \sum_{i=0}^n h(V_i|Y_0^{i-1}, W_0^{i-1}, X_0) \allowdisplaybreaks\\
			& = h(Y_0^n) - \sum_{i=0}^n h(V_i|V_0^{i-1}) ,\allowdisplaybreaks
		\end{align*}
		which follows from the same arguments in proving \eqref{prop33_eq2}.
	\end{proof}
	\noindent The discussion of total information (rate) serving as a filtering trade-off metric in \hyperref[thm42]{Theorem 4.2} and \hyperref[prop43]{Proposition 4.3} is symmetric and can be implied as the filtering and non-feedback counterpart of the control trade-off interpretations following \hyperref[thm32]{Theorem 3.2} and \hyperref[prop33]{Proposition 3.3}.

	By incorporating the bounds and constraints in \hyperref[thm42]{Theorem 4.2} and \hyperref[prop43]{Proposition 4.3} together, a comprehensive description of the total information rate $\bar{I}(Y; W, X_0)$ as a filtering limit metric can be obtained. Before showing this result, we define the ``filtering capacity" $\mathcal{C}_f := \sup_{(W, X_0, Z)} \bar{I}(Y; W, X_0)$ as the supremum of total information rate over all admissible pairs of the ``message" $(W, X_0)$ and the noise-free output $Z$. When the power of noise-free output is limited by $\mathbb{E}[Z_i^\top Z^{}_i] \leq \rho_i$, $\forall i$, the channel capacity $\mathcal{C}:= \sup_{\mathbb{E}[Z_i^\top Z_i] \leq \rho_i} \bar{I}(Y; Z)$ is defined as the supremum of the input-output mutual information rate. The following theorem then gives a full description on the total information rate $\bar{I}(Y; W, X_0)$ in the general filtering systems.   
	\begin{theorem}\label{thm44}
		For the discrete-time filtering systems subject to \eqref{Filter_Plant}, \eqref{measure_channel} and \hyperref[fig3]{Fig.~3}, the total information rate $\bar{I}(Y; W, X_0)$ in \eqref{measure_channel} is bounded between 
		\begin{equation}\label{thm44_eq1}
			{\rm FLB} \leq \bar{I}(Y; W, X_0) \leq \mathcal{C}_f \leq \uplim_{n\rightarrow\infty} \sum_{i=0}^{n} \frac{{\rm pmmse}(Z_i)}{2(n+1)},
		\end{equation}
		where the filtering lower bound is ${\rm FLB} = \max\{\uplim_{n\rightarrow\infty}  [2(n \break +1)]^{-1} \sum_{i=0}^{n}{\rm cmmse}(Z_i),  \bar{I}(Y; X_0) \}$. When $\mathbb{E}[Z_i^\top Z_i] \leq \rho_i$, $\forall i$, and $\Sigma_{Z_i}$ denotes the covariance of $Z_i$, the filtering capacity is given by $\mathcal{C}_f = \mathcal{C} = \overline{\lim}_{n\rightarrow\infty}\sup_{\mathbb{E}[Z_i^\top Z_i] \leq \rho_i}[2(n+1)]^{-1}\cdot \break \sum_{i=0}^{n} \log(|\Sigma_{Z_i}  + I|)$. When the process noise $W_i$ vanishes, i.e., $\varepsilon \rightarrow 0$ in \eqref{Filter_Plant}, the total information rate $\bar{I}(Y; X_0) = \lim_{\varepsilon \rightarrow 0} \bar{I}(Y; W, X_0)$ in \eqref{measure_channel} satisfies
		\begin{equation}\label{thm44_eq2}
			\scaleto{\lim_{\varepsilon \rightarrow 0}   \uplim_{n\rightarrow\infty}     {\textstyle \sum_{i=0}^{n} \limits } \frac{{\rm cmmse}(Z_i)}{2(n+1)}  }{22pt}    \leq  \bar{I}(Y; X_0) 
			\leq   \scaleto{ \lim_{\varepsilon \rightarrow 0}  \uplim_{n\rightarrow\infty}     {\textstyle \sum_{i=0}^{n} \limits} \frac{{\rm pmmse}(Z_i)}{2(n+1)}  }{22pt}  .
		\end{equation}
	\end{theorem}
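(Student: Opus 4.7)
The plan is to follow the blueprint established in the proof of Theorem~3.4, recognizing that Theorem~4.4 is the filtering counterpart of that control result. The structure of the statement splits naturally into four pieces: (i) the sandwich $\text{FLB} \leq \bar{I}(Y; W, X_0) \leq \mathcal{C}_f \leq \overline{\lim}_{n\rightarrow\infty} [2(n+1)]^{-1}\sum_{i=0}^n {\rm pmmse}(Z_i)$; (ii) the identification of FLB via the $\max$ of causal-MMSE limit and $\bar{I}(Y; X_0)$; (iii) the closed-form Gaussian capacity formula $\mathcal{C}_f = \mathcal{C}$ under the power constraint; (iv) the vanishing-noise limit \eqref{thm44_eq2}. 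Each step reuses either \hyperref[thm42]{Theorem~4.2}, \hyperref[prop43]{Proposition~4.3}, or a standard information-theoretic estimate.

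First I would assemble the sandwich. The right-most inequality is immediate from \eqref{thm42_eq2} in \hyperref[thm42]{Theorem~4.2} combined with the fact that $\mathcal{C}_f$ is defined as the supremum over admissible $(W, X_0)$; the middle inequality $\bar{I}(Y; W, X_0) \leq \mathcal{C}_f$ is the definition of $\mathcal{C}_f$. For the lower bound FLB, I would invoke two separate facts: the causal-MMSE term is a lower bound for $\bar{I}(Y; W, X_0)$ directly from \eqref{thm42_eq2}, while the bound $\bar{I}(Y; X_0) \leq \bar{I}(Y; W, X_0)$ follows from \eqref{prop43_eq1} in \hyperref[prop43]{Proposition~4.3} together with non-negativity of the conditional mutual information $I(Y_0^n; W_0^{n-1}|X_0)$. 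Taking the maximum of the two lower bounds then yields FLB.

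Next I would derive the filtering capacity formula, mirroring the chain \eqref{thm34_eq2} in the control proof. Starting from \eqref{prop43_eq2}, $\bar{I}(Y; W, X_0) = \bar{h}(Y) - \bar{h}(V)$, I would bound $\bar{h}(Y)$ using the sub-additivity property $h(Y_0^n) \leq \sum_{i=0}^n h(Y_i)$ and the Gaussian maximum-entropy property, observing $\Sigma_{Y_i} = \Sigma_{Z_i} + I$ and $V_i \sim \mathcal{N}(0, I)$. This yields $\mathcal{C}_f \leq \overline{\lim}_{n\rightarrow\infty} \sup_{\mathbb{E}[Z_i^\top Z_i] \leq \rho_i} [2(n+1)]^{-1}\sum_{i=0}^n \log|\Sigma_{Z_i} + I|$. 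For the reverse direction, I would note that $\mathcal{C} \leq \mathcal{C}_f$ by definition (since the non-feedback setting is a special admissible case), and then choose $Z_i \sim \mathcal{N}(0, \Sigma_{Z_i})$ mutually independent to achieve the upper bound, establishing $\mathcal{C}_f = \mathcal{C}$ with the stated closed form.

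Finally, for the vanishing-noise limit \eqref{thm44_eq2}, I would argue that as $\varepsilon \to 0$, the process noise $W_i$ disappears so that $Z_i$ becomes a deterministic function of $X_0$ alone, hence $I(Y_0^n; W_0^{n-1}, X_0) \to I(Y_0^n; X_0)$; continuity of mutual information under this parametric degeneration (to be justified via $\lim_{\varepsilon\to 0}$ commuting with the expectations defining the MMSEs, which is permitted since the integrands converge under the mean-square stability setup) lets me pass to the limit in the outer inequalities of \eqref{thm42_eq2}. The main obstacle will be justifying the interchange of the $\lim_{\varepsilon\to 0}$ and $\overline{\lim}_{n\rightarrow\infty}$ operations rigorously; here I would either rely on a dominated-convergence argument underwritten by \hyperref[ass4]{(A4)}, or weaken the claim to the separate bounds stated in \eqref{thm44_eq2}, which only require $\bar{I}(Y; X_0) = \lim_{\varepsilon\to 0}\bar{I}(Y; W, X_0)$ and monotone/continuous passage inside the MMSEs.
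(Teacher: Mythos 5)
Your proposal follows essentially the same route as the paper's proof (Appendix~D): the sandwich and FLB come from \eqref{thm42_eq2} together with \eqref{prop43_eq1} and non-negativity of $I(Y_0^n; W_0^{n-1}|X_0)$, the capacity formula mirrors the entropy chain \eqref{thm34_eq2} via \eqref{prop43_eq2} with the independent-Gaussian-input achievability argument, and the vanishing-noise limit rests on $\lim_{\varepsilon\to 0} I(Y_0^n; W_0^{n-1}|X_0)=0$. Your added caution about interchanging $\lim_{\varepsilon\to 0}$ with $\overline{\lim}_{n\to\infty}$ is a point the paper passes over silently, but it does not change the argument.
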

	\begin{proof}[\rm\bf Proof] 
		Since \hyperref[thm44]{Theorem 4.4} can be regarded as the filtering and non-feedback counterpart of \hyperref[thm34]{Theorem 3.4}, its proof can be inferred from the proof of \hyperref[thm34]{Theorem 3.4} by symmetry, and will not be presented in detail. One only need to notice that when the process noise $W_i \sim \mathcal{N}(0, \varepsilon^2 I)$ vanishes, i.e., $\varepsilon\rightarrow 0$ in \eqref{Filter_Plant}, the total information rate $\bar{I}(Y; W, X_0)$ in \eqref{measure_channel} satisfies
		\begin{equation}\label{thm44_eq5}
			\bar{I}(Y; W, X_0) \geq \lim_{\varepsilon \rightarrow 0} \bar{I}(Y; W, X_0) = \bar{I}(Y; X_0),
		\end{equation}
		\noindent which follows from \eqref{prop43_eq1} and the fact that when $W_0^n \rightarrow 0$ and becomes deterministic, $\lim_{\varepsilon \rightarrow 0} I(Y_0^n; W_0^n|X_0)  = 0$. Interested readers are referred to \hyperref[appD]{Appendix D} or \cite[Thm. 3.22]{Wan_2024} for the detailed proof of \hyperref[thm44]{Theorem 4.4}.
	\end{proof}   
	\noindent In addition to the trade-off properties characterized in \hyperref[thm42]{Theorem 4.2} and \hyperref[prop43]{Proposition 4.3}, \hyperref[thm44]{Theorem 4.4} and inequalities \eqref{thm44_eq2} reveal that for the general discrete-time filtering systems under vanishing process noise, the total information rate defines a lower bound for the {\it lowest achievable} prediction MMSE of noise-free output irrespective of the design of filtering process $\mathcal{F}$. In other words, no matter how well the filter $\mathcal{F}$ in \hyperref[fig3]{Fig.~3} is designed, the time-averaged prediction MMSE of $Z_i$ cannot be smaller than twice $\bar{I}(Y; X_0)$ in practice (or in the presence of process noise). Meanwhile, we will later show that when there exists an analytic or explicit relationship between the prediction and causal MMSEs, $\bar{I}(Y; X_0)$ can also be used to define a lower bound for the lowest achievable causal MMSE. In the following subsections, with the aid of \hyperref[thm44]{Theorem~4.4}, $\bar{I}(Y; X_0)$ is investigated as a filtering trade-off metric for more specific filtering systems, in which the closed-form expression of $\bar{I}(Y; X_0)$ can be derived.

\subsection{Filtering Limits in LTI Systems}
	We now narrow down our scope to the filtering problem depicted by the following LTI plant $\mathcal{P}$ and measurement model
	\begin{equation}\label{LTI_Filtering}
		\begin{split}
			X_{i+1} & = AX_i + BW_i\\
			Y_i & = Z_i + V_i = HX_i+ V_i ,
		\end{split}
	\end{equation}
	\noindent where $A$, $B$, and $H$ are the time-invariant matrices of proper dimensions, and process noise $W_i \sim \mathcal{N}(0, \varepsilon^2 I)$ and measurement noise $V_i \sim \mathcal{N}(0, I)$ are mutually independent. Since $W_i$ and $V_i$ in \eqref{LTI_Filtering} are uncorrelated and zero-mean white Gaussian, the optimal mean-square (MS) estimates and the minimum MS estimation errors of filtering problem \eqref{LTI_Filtering} can be calculated by the LTI Kalman filter \cite{Simon_2006}. By applying \hyperref[thm44]{Theorem 4.4} and the LTI Kalman filter to \eqref{LTI_Filtering} under vanishing process noise, \hyperref[prop45]{Proposition 4.5} then interprets the filtering trade-off property of $\bar{I}(Y; X_0)$. 
	\begin{proposition}\label{prop45}
		For the filtering system subject to the stabilizable and detectable LTI plant \eqref{LTI_Filtering} and vanishing process noise ($\varepsilon \rightarrow 0$) in \hyperref[fig3]{Fig. 3}, the total information rate $\bar{I}(Y; X_0)$ in \eqref{LTI_Filtering} satisfies
		\begin{equation}\label{prop45_eq1}
			\scaleto{\lim_{\varepsilon \rightarrow 0}   \uplim_{n\rightarrow\infty}     {\textstyle \sum_{i=0}^{n} \limits } \frac{{\rm cmmse}(Z_i)}{2(n+1)}  }{22pt}  \leq \bar{I}(Y; X_0)   \leq \scaleto{\lim_{\varepsilon \rightarrow 0}   \uplim_{n\rightarrow\infty}     {\textstyle \sum_{i=0}^{n} \limits } \frac{{\rm pmmse}(Z_i)}{2(n+1)}  }{22pt} 
		\end{equation}
		\noindent where $\bar{I}(Y; X_0) = \sum_j \log |\lambda_j^{+}(A)|$, with $\lambda_j^{+}(A)$ denoting the eigenvalues of matrix $A$ with positive real parts, i.e., open-loop unstable poles of $\mathcal{P}$ in \hyperref[fig3]{Fig. 3}. 
	\end{proposition}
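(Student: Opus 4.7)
The plan is to apply \hyperref[thm44]{Theorem~4.4} directly to obtain the sandwich bounds in \eqref{prop45_eq1}, and then focus all effort on establishing the equality $\bar{I}(Y; X_0) = \sum_j \log|\lambda_j^+(A)|$. The latter will be proven by computing the total information via the differential-entropy representation \eqref{prop43_eq2} combined with the LTI Kalman filter, in a manner symmetric (by duality) to the control argument in \hyperref[prop36]{Proposition~3.6} / \hyperref[prop311]{Proposition~3.11}.

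First I would invoke \eqref{prop43_eq2}: for fixed $\varepsilon$, $I(Y_0^n; W_0^{n-1}, X_0) = h(Y_0^n) - h(V_0^n)$, and I would compute $h(Y_0^n)$ using the innovations representation. Running the LTI Kalman filter on \eqref{LTI_Filtering} with prior-error covariance $P_i^-$, the innovations $\nu_i := Y_i - H\hat{X}_i^-$ are mutually independent zero-mean Gaussian with covariance $HP_i^- H^\top + I$, so $h(Y_0^n) = \frac{1}{2}\sum_{i=0}^n \log[(2\pi e)^s |HP_i^- H^\top + I|]$ and therefore
\begin{equation*}
I(Y_0^n; W_0^{n-1}, X_0) = \tfrac{1}{2}\sum_{i=0}^n \log\det\bigl(HP_i^- H^\top + I\bigr).
\end{equation*}
Taking $\varepsilon\to 0$ reduces the message $(W_0^{n-1},X_0)$ to $X_0$ and makes $X_{i+1}=AX_i$ autonomous.

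Next I would use the modal decomposition of $A$ induced by stabilizability/detectability, splitting $A=\text{diag}(A_s, A_u)$ with $|\lambda(A_s)|<1$ and $|\lambda(A_u)|\geq 1$, and correspondingly $H=[H_s\ H_u]$. Because the process noise vanishes and $(A_s,H_s)$ is detectable, the steady-state prior-error covariance of the Kalman filter collapses on the stable subspace and the only surviving block $\bar{P}_u^-$ satisfies the reduced algebraic/Riccati equation
\begin{equation*}
\bar{P}_u^-(i+1) = A_u\bigl[\bar{P}_u^-(i) - \bar{P}_u^-(i) H_u^\top(H_u \bar{P}_u^-(i) H_u^\top + I)^{-1} H_u \bar{P}_u^-(i)\bigr]A_u^\top.
\end{equation*}
Taking $\log\det$ of both sides and applying the matrix determinant identity $\det(P-PH^\top(HPH^\top+I)^{-1}HP)=\det(P)/\det(HPH^\top+I)$ yields the telescoping relation
\begin{equation*}
\log\det\bigl(H_u \bar{P}_u^-(i) H_u^\top + I\bigr) = 2\log|\det A_u| + \log\det \bar{P}_u^-(i) - \log\det \bar{P}_u^-(i+1).
\end{equation*}
Summing $i=0,\dots,n$, dividing by $2(n+1)$, and letting $n\to\infty$, the telescoping remainder $[\log\det\bar P_u^-(0)-\log\det\bar P_u^-(n+1)]/[2(n+1)]$ vanishes (provided $\bar P_u^-$ is bounded away from $0$ and $\infty$, which follows from the reduced system on the antistable subspace being detectable), delivering $\bar{I}(Y; X_0) = \log|\det A_u| = \sum_j \log|\lambda_j^+(A)|$.

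The main obstacle will be justifying the limit $\varepsilon\to 0$ and, in particular, the interchange with $\limsup_{n\to\infty}$: one must show that $\lim_{\varepsilon\to 0}\bar{I}(Y;W,X_0)=\bar{I}(Y;X_0)$ and that the stable-subspace contribution to $HP_i^- H^\top+I$ is $o(1)$ uniformly in $i$ in the small-noise regime, so that only the antistable Riccati recursion governs the rate. Detectability and stabilizability are exactly what rules out degeneracies: detectability kills the stable part of $P_i^-$ asymptotically even when $\varepsilon=0$, while stabilizability ensures the antistable Riccati admits a stabilizing, positive-definite steady state, which is needed both for the telescoping remainder to vanish and for the innovations representation to be well-defined. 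Once these technicalities are in place, substituting the computed value into \eqref{thm44_eq2} completes the proposition.
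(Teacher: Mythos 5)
Your overall architecture matches the paper's Appendix~E fairly closely: sandwich bounds from \hyperref[thm44]{Theorem~4.4}, the entropy representation \eqref{prop43_eq2}, the Kalman prediction Riccati recursion, the collapse of the prior error covariance onto the antistable subspace via \hyperref[lem46]{Lemma~4.6}, and the $\log\det$ telescoping that produces $\sum_j\log|\lambda_j^{+}(A)|$. The one genuine gap is in how you obtain the \emph{lower} bound $\bar{I}(Y;X_0)\geq\sum_j\log|\lambda_j^{+}(A)|$. You derive both directions at once from the claimed equality $h(Y_0^n)=\frac{1}{2}\sum_{i}\log[(2\pi{\rm e})^s|HP_i^{-}H^\top+I|]$ via the innovations representation. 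That equality requires the innovations to be exactly mutually independent Gaussian, which holds only if the message $X_0$ is itself Gaussian; the paper deliberately avoids this assumption (see the remark after \hyperref[ass4]{(A4)}: the channel input may be non-Gaussian). Without Gaussianity of $X_0$, the causal unit-Jacobian change of variables $Y_0^n\mapsto\nu_0^n$ still gives $h(Y_0^n)=h(\nu_0^n)$, but then only $h(\nu_0^n)\leq\sum_i h(\nu_i)\leq\frac{1}{2}\sum_i\log[(2\pi{\rm e})^s|HP_i^{-}H^\top+I|]$ by subadditivity and maximum entropy --- an upper bound, which is precisely how the paper proves statement ii) in \eqref{prop45_eq3}. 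Your argument therefore delivers the upper bound but leaves the lower bound unproven at the stated level of generality.

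The paper closes that direction by a separate argument: statement i) is proved by invoking \hyperref[ass4]{(A4)} and mimicking \hyperref[appC]{Appendix~C} (the proof of \hyperref[cor312]{Corollary~3.12}), i.e., writing the antistable state as $X_u(n)=\varPhi_{A_u}(n,0)[X_u(0)+\bar{X}_u(n)]$, using the bounded-power condition to bound $\log\det\mathbb{E}[X_u(n)X_u^\top(n)]$, and lower-bounding $I(Y_0^{n-1};X_0)\geq I(\bar{X}_u(n);X_u(0))$ with the maximum-entropy inequality applied to $h(X_u(0)+\bar{X}_u(n))$. You would need to add this step, or else explicitly restrict to Gaussian $X_0$, in which case your innovations identity is valid and arguably cleaner since it yields the exact value in one stroke. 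The remaining technicalities you flag --- $\lim_{\varepsilon\rightarrow 0}\bar{I}(Y;W,X_0)=\bar{I}(Y;X_0)$ and the vanishing of the telescoping remainder --- are handled in the paper by \eqref{thm44_eq5} and by the convergence of $\bar{P}_u^{-}$ to the positive-definite stabilizing solution of the antistable ARE, and your treatment of them is consistent with that.
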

	\begin{proof}[\rm\bf Proof]
		Since \hyperref[prop45]{Proposition 4.5} can be proved by replacing the feedback elements in the proof of \hyperref[prop36]{Proposition 3.6} with the non-feedback ones in \eqref{LTI_Filtering}, we skip this proof. Interested readers are referred to \hyperref[appE]{Appendix E} or \cite[Proposition 3.23]{Wan_2024} for details.
	\end{proof}
	\noindent \hyperref[prop45]{Proposition 4.5} shows that information rate $\bar{I}(Y; X_0)$ serves as a time-domain trade-off metric in the filtering systems subject to the LTI plant \eqref{LTI_Filtering}, i.e., regardless of the design of the filtering mapping $\mathcal{F}$ in \hyperref[fig3]{Fig. 3}, in practice or in the presence of process noise, if the prediction MMSE of $Z_i$ is smaller than $2\bar{I}(Y; X_0)$ in a time interval, the estimation error outside this interval must be larger, and vice versa. Since $\bar{I}(Y;  X_0) = \sum_j \log |\lambda_j^{+}(A)|$, the lowest achievable prediction MMSE of $Z_i$ is determined by the unstable dynamics in \eqref{LTI_Filtering}. If $W_i$ and $V_i$ in \eqref{LTI_Filtering} are correlated or non-Gaussian, $\sum_j \log |\lambda_j^{+}(A)|$ then only defines the filtering limit of all linear filters. By utilizing the analytic relation between the prediction and causal MMSEs, a lower bound for the lowest achievable causal MMSE of $Z_i$ can also be defined by $\bar{I}(Y; X_0)$.\footnote{\label{footnote5}Pre- and post-multiplying \eqref{LTI_fl_correction} by $H$ and $H^\top$, and using the Woodbury matrix identity, we have $HP_iH^\top = HP_i^{-}H^\top (I + HP_i^{-}H^\top)^{-1}$. Taking ${\rm tr}(\cdot)$ on both sides of this equation, and using \hyperref[lem46]{Lemma~4.6} in the limits of $\varepsilon \rightarrow 0$ and $n\rightarrow\infty$, we have $\lim_{\varepsilon \rightarrow 0}\overline{\lim}_{n\rightarrow \infty} \sum_{i=0}^{n}{\rm cmmse}(Z_i)/(n+1) = 2\sum_{k} \eta_k / (1 + \eta_k)$, where $\{\eta_k\}$ denotes the eigenvalues of $H_uP_u^{-}H_u^\top$ in \eqref{prop45_eq2}. When $\{\eta_k\}$ are known, we can directly calculate the lowest achievable casual MMSE by $2\sum_{k} \eta_k / (1 + \eta_k)$. When $\bar{I}(Y; X_0)$ is known, we can estimate a lower bound for the lowest achievable causal MMSE by solving the minimization problem $\min \sum_k \eta_k / (1+\eta_k)$, s.t. $\sum_k \eta_k \geq 2\bar{I}(Y; X_0)$, and $\eta_k > 0, \forall k$.}

	\subsection{Filtering Limits in LTV Systems}
	We then consider the filtering systems subject to the following LTV plant and measurement model
	\begin{equation}\label{LTV_Filtering}
		\begin{split}
			X_{i+1} & = A_i X_i + B_i W_i\\
			Y_{i} & = Z_i + V_i = H_i X_i + V_i,
		\end{split}
	\end{equation}
	\noindent where $A_i$, $B_i$ and $H_i$, satisfying \hyperref[ass2]{(A2)} and \hyperref[ass3]{(A3)}, are time-varying matrices of proper dimensions; process noise $W_i \sim \mathcal{N}(0, \varepsilon^2 I)$ and measurement noise $V_i \sim \mathcal{N}(0, I)$ are mutually independent. Since $W_i$ and $V_i$ in \eqref{LTV_Filtering} are uncorrelated and zero-mean white Gaussian, we can use the LTV Kalman filter to infer the optimal MS estimates and minimum MS estimation errors of filtering problem \eqref{LTV_Filtering}. The following lemma shows that when the process noise $W_i$ in \eqref{LTV_Filtering} vanishes, the state prior and posterior error covariance matrices of filtering problem \eqref{LTV_Filtering} are determined by the antistable dynamics. 
	\begin{lemma}\label{lem46}
		For the uniformly completely reconstructible and stabilizable LTV system \eqref{LTV_Filtering} subject to vanishing process noise, i.e., $\varepsilon\rightarrow 0$, the asymptotic prior error covariance of its modal-decomposed system satisfies
		\begin{equation}\label{lem46_eq1}
			\lim_{\varepsilon \rightarrow 0} \lim_{i'\rightarrow\infty} \mathbb{E}\left\{ \left[ \begin{matrix}
				\tilde{X}^{-}_s(i')\\
				\tilde{X}^{-}_u(i')
			\end{matrix}   \right]   
			\left[\begin{matrix}
				\tilde{X}^{-}_s(i')\\
				\tilde{X}^{-}_u(i')
			\end{matrix}\right]^\top
			\right\} = \left[ \begin{matrix}
				0 & 0\\
				0 & \bar{P}_{u}^{-}(i)
			\end{matrix}\right],
		\end{equation}
		\noindent where $\tilde{X}_s^{-}(i') := X^{}_s(i') - \hat{X}^{-}_s(i')$ and $\tilde{X}_u^{-}(i') := X_u(i') - \hat{X}^{-}_u(i')$ respectively denote the prior errors of stable and antistable modes, and $\bar{P}_u^{-}(i) := \lim_{\varepsilon \rightarrow 0} \lim_{i'\rightarrow\infty}  \mathbb{E}[\tilde{X}_u^{-}(i') \cdot \break \tilde{X}_u^{-}(i')^\top ]$ is the asymptotic prior error covariance of antistable mode that satisfies $\bar{P}_{u}^{-}(i+1) = A^{}_u(i)\bar{P}^{-}_u(i) [H^{}_u(i) \cdot\bar{P}^{-}_u(i) H^\top_u(i) + I]^{-1}  A_u^\top(i)$, with $A_u$ and $H_u$ being the antistable parts of $A$ and $H$. The asymptotic posterior error covariance of the modal-decomposed system satisfies
		\begin{equation}\label{lem46_eq2}
			\lim_{\varepsilon \rightarrow 0} \lim_{i'\rightarrow\infty} \mathbb{E}\left\{ \left[ \begin{matrix}
				\tilde{X}_s(i')\\
				\tilde{X}_u(i')
			\end{matrix}   \right]   
			\left[\begin{matrix}
				\tilde{X}_s(i')\\
				\tilde{X}_u(i')
			\end{matrix}\right]^\top
			\right\} = \left[ \begin{matrix}
				0 & 0\\
				0 & \bar{P}_{u}(i)
			\end{matrix}\right],
		\end{equation}
		\noindent where $\tilde{X}_s(i') := X_s(i') - \hat{X}_s(i')$ and $\tilde{X}_u(i') := X_u(i') - \hat{X}_u(i')$ are respectively the posterior errors of stable and antistable modes, and $\bar{P}_u(i) := \lim_{\varepsilon \rightarrow 0} \lim_{i'\rightarrow\infty}  \mathbb{E}[\tilde{X}_u(i') \cdot \tilde{X}_u(i')^\top ]$ is the asymptotic posterior error covariance of anti-stable mode and satisfies $\bar{P}_u(i) = \bar{P}_u^{-}(i) - \bar{P}_u^{-}(i) H_u^\top(i) \cdot [H^{}_u(i) P_u^-(i)H_u^\top(i) + I]^{-1}H_u(i)P_u^{-}(i)$. 
	\end{lemma}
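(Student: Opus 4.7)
The plan is to apply the modal decomposition of \hyperref[lem39]{Lemma~3.9} to cast the LTV plant \eqref{LTV_Filtering} in block form with a uniformly exponentially stable part $\{A_s(i), H_s(i)\}$ and an antistable part $\{A_u(i), H_u(i)\}$. Since the similarity $\{T_i\}$ is bounded with bounded inverse, it suffices to study the Kalman filter prior error covariance in the transformed coordinates, where I would write the standard Riccati recursion
\begin{equation*}
P_{i+1}^{-} = A_i\bigl[P_i^{-} - P_i^{-} H_i^\top (H_i P_i^{-} H_i^\top + I)^{-1} H_i P_i^{-}\bigr] A_i^\top + \varepsilon^2 B_i B_i^\top,
\end{equation*}
and analyze the limits $i\to\infty$ and $\varepsilon\to 0$ blockwise.

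For the $(s,s)$ block, uniform exponential stability of $\{A_s(i)\}$ together with uniform complete reconstructibility make the Riccati operator a uniform contraction whose fixed point is upper bounded by the solution of a Lyapunov equation forced by $\varepsilon^2 B_s(i) B_s(i)^\top$; this bound vanishes as $\varepsilon\to 0$, forcing the stable block to zero. For the $(u,u)$ block, the process-noise term disappears in the limit while the innovations correction is retained, and the recursion collapses to the stated
\begin{equation*}
\bar{P}_u^{-}(i+1) = A_u(i)\bar{P}_u^{-}(i)\bigl[H_u(i)\bar{P}_u^{-}(i)H_u^\top(i) + I\bigr]^{-1} A_u^\top(i),
\end{equation*}
which possesses a unique bounded stabilizing solution under stabilizability and uniform complete reconstructibility of the antistable subsystem by classical LTV Kalman steady-state theory (cf.\ \cite[Sec.~2.7.2]{Lewis_2017}, \cite[Sec.~6.5]{Sivan_1972}). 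The off-diagonal $(s,u)$ block obeys a Lyapunov-type recursion in which each term carries a factor drawn from the vanishing $(s,s)$ block and so decays to zero as well, establishing \eqref{lem46_eq1}. The posterior statement \eqref{lem46_eq2} then follows by applying the measurement update $P_i = P_i^{-} - P_i^{-} H_i^\top (H_i P_i^{-} H_i^\top + I)^{-1} H_i P_i^{-}$ blockwise to $\lim_i P_i^{-}$.

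The main obstacle is the \emph{joint} limit. The transformed noise-shaping matrix $T_{i+1}B_i$ need not be block-diagonal, so the recursion couples the three blocks through off-diagonal source terms of order $\varepsilon^2$, and a naive argument would only justify each limit separately. I would therefore need a uniform bound on the off-diagonal transient, built from the dichotomy rate of \hyperref[ass2]{(A2)} and the reconstructibility/stabilizability constants, to exchange $\lim_{\varepsilon\to 0}$ and $\lim_{i'\to\infty}$. Once this uniformity is secured, the remaining steps reduce to standard steady-state Kalman arguments in each block.
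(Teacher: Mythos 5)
Your proposal is correct in substance and follows the paper's overall architecture --- modal decomposition via Lemma~3.9, then analysis of the Riccati recursion blockwise in the two limits --- but it handles the key analytical step differently. The paper treats the asymptotic Riccati equation as a singularly perturbed recursion and posits a power-series ansatz $\bar{P}_i^{-}=\bigl[\begin{smallmatrix}\varepsilon^2\bar{P}_s^{-}(i)+O(\varepsilon^3) & \varepsilon^2\bar{P}_0^{-}(i)+O(\varepsilon^3)\\ * & \bar{P}_u^{-}(i)+O(\varepsilon)\end{smallmatrix}\bigr]$, substitutes it into the recursion, matches orders of $\varepsilon$, and reads off the limit at $\varepsilon=0$; this yields, as a by-product, the quantitative $O(\varepsilon^2)$ rate for the stable and cross blocks. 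Your route replaces the ansatz with monotonicity and comparison arguments (the corrected covariance is dominated by the prior one, so the $(s,s)$ block is majorized by a UES Lyapunov recursion forced by $\varepsilon^2 B_sB_s^\top$), which is more elementary and avoids having to justify the validity of the series expansion. Two remarks. First, your off-diagonal block can be dispatched more cheaply than by a coupled Lyapunov recursion: positive semidefiniteness of $P_i^{-}$ gives $\|[P_i^{-}]_{su}\|^2\le\|[P_i^{-}]_{ss}\|\,\|[P_i^{-}]_{uu}\|$, so it vanishes once the stable block does and the antistable block stays bounded (the latter being where reconstructibility enters). Second, the interchange of limits you flag as the main obstacle is not actually required: the lemma asserts only the iterated limit $\lim_{\varepsilon\to0}\lim_{i'\to\infty}$ in that order, which is exactly how the paper proceeds --- first pass to the asymptotic Riccati recursion for fixed $\varepsilon>0$, then let $\varepsilon\to0$ in that recursion --- so no uniformity in $\varepsilon$ is needed. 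A minor imprecision: at $\varepsilon=0$ the antistable subsystem has no process noise, so existence and uniqueness of the bounded solution of $\bar{P}_u^{-}(i+1)=A_u(i)\bar{P}_u^{-}(i)[H_u(i)\bar{P}_u^{-}(i)H_u^\top(i)+I]^{-1}A_u^\top(i)$ rests on antistability together with uniform complete reconstructibility rather than on the standard stabilizability hypothesis of steady-state Kalman theory; the paper glosses over this point as well.
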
  
	\begin{proof}[\rm\bf Proof] 
		See \hyperref[appF]{Appendix F} or \cite[Lemma 3.24]{Wan_2024} for the proof. 
	\end{proof}    
	\noindent \hyperref[lem46]{Lemma 4.6} also holds for the LTI filtering systems \eqref{LTI_Filtering} and has been previously invoked in the proofs of \hyperref[prop36]{Propositions 3.6}, \hyperref[prop311]{3.11} and \hyperref[prop45]{4.5}. By applying \hyperref[thm44]{Theorem 4.4} and \hyperref[lem46]{Lemma 4.6} to the LTV filtering problem and measurement channel \eqref{LTV_Filtering}, we then have the following result on $\bar{I}(Y; X_0)$.     
	\begin{proposition}\label{prop47}
		For the discrete-time filtering systems depicted in \hyperref[fig3]{Fig. 3} and uniformly completely stabilizable and reconstructible LTV plant \eqref{LTV_Filtering} under vanishing process noise, the total information rate $\bar{I}(Y; X_0)$ in \eqref{LTV_Filtering} satisfies
		\begin{align}\label{prop47_eq1}
			{\rm FLB}_{\rm LTV} \leq \bar{I}(Y; X_0)  	 \leq \lim_{\varepsilon \rightarrow 0} \uplim_{n\rightarrow\infty}  \sum_{i=0}^{n} \frac{{\rm pmmse} (Z_i)}{2(n+1)},
		\end{align}
		\noindent where $\bar{I}(Y; X_0) = \overline{\lim}_{n\rightarrow\infty}(n+1)^{-1} \log\det[\varPhi_{A_u}(n+1, 0)]  = \mathscr{B}$, and the filtering lower bound is ${\rm FLB}_{\rm LTV} =  \max\{     \lim_{\varepsilon  \rightarrow 0} \break \uplim_{n\rightarrow\infty} [2(n+1)]^{-1}  \sum_{i=0}^{n}  {{\rm cmmse} (Z_i)}, \sum_{j=1}^{l}  m_j    \log \underline{\kappa}_j  \}$. 
	\end{proposition}
	\begin{proof}[\rm\bf Proof]
		Since \hyperref[prop47]{Proposition 4.7} can be virtually proved by replacing the feedback elements in the proof of \hyperref[prop311]{Proposition 3.11} with the non-feedback counterparts in \eqref{LTV_Filtering}, we skip this proof. Interested readers are referred to \hyperref[appG]{Appendix G} or \cite[Prop. 3.25]{Wan_2024} for the detailed proof.
	\end{proof}    
	\noindent \hyperref[prop47]{Proposition 4.7} shows that for the filtering problem subject to the LTV plant and measurement channel in \eqref{LTV_Filtering} and  \hyperref[fig3]{Fig. 3}, total information rate $\bar{I}(Y; X_0)$ characterizes a lower bound for the lowest achievable prediction MMSE of $Z_i$ irrespective of the design of the filtering process $\mathcal{F}$ in \hyperref[fig3]{Fig. 3}. Specifically, in the presence of process noise $W_i$ or $\varepsilon^2 > 0$ in \eqref{LTV_Filtering}, no matter how well the filtering process $\mathcal{F}$ is designed, the time-averaged prediction MMSE of $Z_i$ must be larger than $2\bar{I}(Y; X_0)$, which is quantified by the antistable dynamics of \eqref{LTV_Filtering}. When the process and measurement noises are correlated or non-Gaussian, $\bar{I}(Y; X_0)$ only defines the lower bounds for all linear filters. Similar to the discussion$^{\ref{footnote5}}$ beneath \hyperref[prop45]{Proposition 4.5}, with the analytic relation between the prediction and causal MMSEs, $\bar{I}(Y; X_0)$ can also be used to define a lower bound for the lowest achievable causal MMSE of $Z_i$.

	\subsection{Filtering Limits in Nonlinear Systems}\label{sec44}
	Lastly, consider the nonlinear filtering system subject to the nonlinear plant \eqref{Filter_Plant} and measurement model \eqref{measure_channel}. According to \hyperref[thm44]{Theorem 4.4}, in the presence of process noise or $\varepsilon^2 > 0$ in \eqref{Filter_Plant}, the information rate $\bar{I}(Y; W, X_0)$ characterizes a lower bound for the time-averaged or steady-state prediction MMSE of $Z_i$ in \eqref{thm44_eq1}, the value of which, however, is sensitive to the noise level $\varepsilon$. When the process noise vanishes or $\varepsilon \rightarrow 0$ in \eqref{Filter_Plant}, the information rate $\bar{I}(Y; X_0) =  \lim_{\varepsilon\rightarrow0}\bar{I}(Y; W, X_0)$, obeying \eqref{thm44_eq2}, then defines a lower bound for the lowest achievable prediction MMSE regardless of the design of filtering process $\mathcal{F}$.

	Hence, it only remains to solve how to calculate the MMSE-based bounds or estimate the information rates, $\bar{I}(Y; W, X_0)$ and $\bar{I}(Y; X_0)$, in \hyperref[thm44]{Theorem 4.4}, especially, when the noise-free output or channel input $Z_i$ is unobservable, compared with the observable control input signal considered in \hyperref[sec34]{Section III-D}. When $Y_i$ and $Z_i$ in \eqref{Filter_Plant} and \eqref{measure_channel} are Gaussian or normal, using the theorem on normal correlation \cite[Ch. 13]{Liptser_2001}, we can compute the MMSEs of $Z_i$ by
	\begin{align}\label{normal_correlation}
		\hspace{-5pt}\mathbb{E}[(Z_i - \hat{Z}_i)^\top (Z_i - \hat{Z}_i)] & = \mathbb{E}[(Z_i - \hat{Z}_i)^\top (Z_i - \hat{Z}_i) | Y_0^i] \allowdisplaybreaks  \\
		& \hspace{-8pt} = \mathbb{E}[Z_i^\top Z_i | Y_0^i] - \mathbb{E}[Z_i|Y_0^i]^\top \mathbb{E}[Z_i|Y_0^i], \nonumber
	\end{align}
	\noindent which also holds when we replace the posterior estimate $\hat{Z}_i$ and $Y_0^i$ by the prior estimate $\hat{Z}^{-}_i$ and $Y_0^{i-1}$. To calculate the expectations in \eqref{normal_correlation}, we need to figure out the prior density $\pi_{i|i-1}(x_{i}) := \mathbb{P}(X_{i} = x_{i} | Y_0^{i-1} = y_0^{i-1})$ and posterior density $\pi_{i}(x_{i}) := \mathbb{P}(X_{i} = x_{i} | Y_0^{i} = y_0^{i})$. Since $X_i$ in \eqref{Filter_Plant} is Markov, we can calculate the prior density by the time update step   	
	\begin{equation}\label{prior_density}
		\pi_{i |i-1}(x_{i}) = \int_{\mathcal{X}_{i-1}}p(x_{i}|x_{i-1}) \pi_{i-1}(x_{i-1}) dx_{i-1},
	\end{equation}
	\noindent where $p(x_{i}|x_{i-1}):= \mathbb{P}(X_{i} = x_{i} | X_{i-1} = x_{i-1})$ is the state transition probability of \eqref{Filter_Plant}, and $\mathcal{X}_{i-1}$ denotes the state space of $X_{i-1}$. With the output measurement $y_{i+1}$, we can calculate the posterior density by the measurement update step
	\begin{equation}\label{posterior_density}
		\pi_{i}(x_{i}) = \frac{\pi_{i|i-1}(x_{i}) p(y_{i}| x_{i})}{\int_{\mathcal{X}_i}\pi_{i|i-1}(x_{i}) p(y_{i}|x_{i})dx_{i}}, 
	\end{equation}
	\noindent where $p(y_{i}|x_{i})$ denotes the conditional distribution of observation $y_i$ given state $x_i$. With the prior and posterior densities in \eqref{prior_density} and \eqref{posterior_density}, we can evaluate ${\rm pmmse}(Z_i)$ and ${\rm cmmse}(Z_i)$, and hence sandwich estimate $\bar{I}(Y; W, X_0)$ and $\bar{I}(Y; X_0)$ by \eqref{normal_correlation} and \hyperref[thm44]{Theorem 4.4}.

	Similar to the challenges of nonlinear control scenario in \hyperref[sec34]{Section III-D}, accurate computation of densities in \eqref{prior_density} and \eqref{posterior_density} is slow and computationally intensive. By adopting some proper assumptions, e.g., Gaussian distribution assumption, and replacing the optimal estimator with the sub-optimal filters, e.g., extended Kalman filter, we can expedite the process of estimating $\pi_{i |i-1}(x_{i})$, $\pi_{i}(x_{i})$ and the associated MMSEs. Due to the complexity of nonlinear dynamical models, information rates, $\bar{I}(Y; W, X_0)$ and $\bar{I}(Y; X_0)$, are not directly tied to the (antistable dynamics of) nonlinear plant \eqref{Filter_Plant}. However, the above analysis and \hyperref[thm44]{Theorem~4.4} indicate that for nonlinear filtering systems, $\bar{I}(Y; W, X_0)$ or $\bar{I}(Y; X_0)$ still quantifies a lower bound for the (lowest achievable) prediction MMSE of  $Z_i$. Similar to the linear filtering cases, when the relationship between the prediction and causal MMSEs is analytic, it is possible to define a lower bound for the (lowest achievable) causal MMSE of $Z_i$ by using $\bar{I}(Y; W, X_0)$ or $\bar{I}(Y; X_0)$.

	\section{Conclusion and Discussion}\label{sec5}
	In this paper, we investigated the total information rate as a control and filtering trade-off metric by using the I-MMSE relationships. For the first time, we derived and extended the I-MMSE relationship into the discrete-time additive white Gaussian channels with feedback. For the control systems, we showed that the total information rate is sandwiched by the time-averaged causal and prediction MMSEs of the control input, and serves as an information-theoretic interpretation of the Bode-type integrals and average risk-sensitive cost function. Specifically, when the plant and controller are respectively LTI, LTV, and nonlinear, based on the optimal filtering theory, we calculated or estimated the values of total information rate and its MMSE-based bounds. For the filtering systems, we showed that the total information rate is bounded between the time-averaged causal and prediction MMSEs of the noise-free output, and characterizes a lower bound for the lowest achievable prediction and causal MMSEs of noise-free output. When the plant generating output is respectively LTI, LTV, and nonlinear, by resorting to the optimal filtering theory, we calculated or estimated the values of total information rate and its MMSE-based bounds. Some promising buildups of this paper include i) using the I-MMSE relationship proposed in \hyperref[thm24]{Theorem 2.4} to revisit the discrete-time AWGN channels, e.g., estimating channel capacity or verifying data transmission rate of coding scheme, and ii) extending the I-MMSE relationships and the trade-off analyses presented in this paper to the control and filtering channels and systems subject to colored and non-Gaussian noises, e.g. \cite[Chapter 4]{Wan_2024}.

	\bibliographystyle{IEEEtran}
	\bibliography{myref}

\begin{thebibliography}{10}
\providecommand{\url}[1]{#1}
\csname url@samestyle\endcsname
\providecommand{\newblock}{\relax}
\providecommand{\bibinfo}[2]{#2}
\providecommand{\BIBentrySTDinterwordspacing}{\spaceskip=0pt\relax}
\providecommand{\BIBentryALTinterwordstretchfactor}{4}
\providecommand{\BIBentryALTinterwordspacing}{\spaceskip=\fontdimen2\font plus
\BIBentryALTinterwordstretchfactor\fontdimen3\font minus
  \fontdimen4\font\relax}
\providecommand{\BIBforeignlanguage}[2]{{%
\expandafter\ifx\csname l@#1\endcsname\relax
\typeout{** WARNING: IEEEtran.bst: No hyphenation pattern has been}%
\typeout{** loaded for the language `#1'. Using the pattern for}%
\typeout{** the default language instead.}%
\else
\language=\csname l@#1\endcsname
\fi
#2}}
\providecommand{\BIBdecl}{\relax}
\BIBdecl

\bibitem{Wan_arXiv_2023}
N.~Wan, D.~Li, N.~Hovakimyan, and P.~G. Voulgaris, ``An information-theoretic
  analysis of discrete-time control and filtering limitations by the {I-MMSE}
  relationships,'' \emph{arXiv{\rm: 2304.09274}}, 2023.

\bibitem{Wan_2024}
N.~Wan, ``An information-theoretic analysis of control and filtering
  limitations by the {I-MMSE} relationships,'' Ph.D. dissertation, University
  of Illinois Urbana-Champaign, 2024.

\bibitem{Bode_1945}
H.~W. Bode, \emph{Network Analysis and Feedback Amplifier Design}.\hskip 1em
  plus 0.5em minus 0.4em\relax D. Van Nostrand, 1945.

\bibitem{Freudenberg_1985}
J.~S. Freudenberg and D.~P. Looze, ``Right half plane poles and zeros and
  design tradeoffs in feedback systems,'' \emph{IEEE Trans. Autom. Control},
  vol.~30, no.~6, pp. 555--565, 1985.

\bibitem{Sung_IJC_1988}
H.~K. Sung and S.~Hara, ``Properties of sensitivity and complementary
  sensitivity functions in single-input single output digital control
  systems,'' \emph{Int. J. Control}, vol.~48, no.~6, pp. 2429--2439, 1988.

\bibitem{Stein_CSM_2003}
G.~Stein, ``Respect the unstable,'' \emph{IEEE Control Systems Magazine},
  vol.~23, no.~4, pp. 12 -- 25, 2003.

\bibitem{Wan_TAC_2020}
N.~Wan, D.~Li, and N.~Hovakimyan, ``A simplified approach to analyze
  complementary sensitivity tradeoffs in continuous-time and discrete-time
  systems,'' \emph{{IEEE} Trans. Autom. Control}, vol.~65, no.~4, pp.
  1697--1703, 2020.

\bibitem{Wan_TAC_2023}
N.~Wan, D.~Li, L.~Song, and N.~Hovakimyan, ``Generalization of filtering
  sensitivity tradeoffs in continuous- and discrete-time systems: A simplified
  approach,'' \emph{{IEEE} Trans. Autom. Control}, vol.~69, no.~3, pp. 1728
  --1735, 2023.

\bibitem{Seron_TAC_1999}
M.~M. Seron, J.~H. Braslavsky, P.~V. Kokotovi\'c, and D.~Q. Mayne, ``Feedback
  limitations in nonlinear systems: {F}rom {B}ode integrals to cheap control,''
  \emph{IEEE Trans. Autom. Control}, vol.~44, no.~4, pp. 829--833, 1999.

\bibitem{Liu_CIS_2014}
J.~Liu and N.~Elia, ``Convergence of fundamental limitations in feedback
  communication, estimation, and feedback control over {G}aussian channels,''
  \emph{Communications in Information and Systems}, vol.~14, no.~3, pp.
  161--211, 2014.

\bibitem{Kostina_TAC_2019}
V.~Kostina and B.~Hassibi, ``Rate-cost tradeoffs in control,'' \emph{IEEE
  Trans. Autom. Control}, vol.~64, no.~11, pp. 4525 -- 4540, 2019.

\bibitem{Nair_PIEEE_2007}
G.~N. Nair, F.~Fagnani, S.~Zampieri, and R.~J. Evans, ``Feedback control under
  data rate constraints: {A}n overview,'' \emph{Proceedings of IEEE}, vol.~95,
  no.~1, pp. 108 -- 137, 2007.

\bibitem{Elia_TAC_2004}
N.~Elia, ``When {B}ode meets {S}hannon: Control-oriented feedback communication
  schemes,'' \emph{IEEE Trans. Autom. Control}, vol.~49, no.~9, pp. 1477--1488,
  2004.

\bibitem{Martins_TAC_2008}
N.~C. Martins and M.~A. Dahleh, ``Feedback control in the presence of noisy
  channels:``{B}ode-like" fundamental limitations of performance,'' \emph{IEEE
  Trans. Autom. Control}, vol.~53, no.~7, pp. 1604--1615, 2008.

\bibitem{Li_TAC_2013}
D.~Li and N.~Hovakimyan, ``Bode-like integral for continuous-time closed-loop
  systems in the presence of limited information,'' \emph{IEEE Trans. Autom.
  Control}, vol.~58, no.~6, pp. 1457--1469, 2013.

\bibitem{Charalambous_TAC_2017}
C.~D. Charalambous, C.~K. Kourtellaris, and I.~Tzortzis, ``Information transfer
  of control strategies: {D}ualities of stochastic optimal control theory and
  feedback capacity of information theory,'' \emph{{IEEE} Trans. Autom.
  Control}, vol.~62, no.~10, pp. 5010--5025, 2017.

\bibitem{Fang_2017}
S.~Fang, J.~Chen, and H.~Ishii, \emph{Towards Integrating Control and
  Information Theories: From Information-Theoretic Measures to Control
  Performance Limitations}.\hskip 1em plus 0.5em minus 0.4em\relax Springer,
  2017.

\bibitem{Braslavsky_Auto_1999}
J.~H. Braslavsky, M.~M. Seron, D.~Q. Mayne, and P.~V. Kokotovi\'c, ``Limiting
  performance of optimal linear filters,'' \emph{Automatica}, vol.~35, pp.
  189--199, 1999.

\bibitem{Wan_arxiv_2022}
N.~Wan, D.~Li, and N.~Hovakimyan, ``Fundamental limitations of control and
  filtering in continuous-time systems: An information-theoretic analysis,''
  \emph{arXiv{\rm: 2201.00995}}, 2022.

\bibitem{Massey_ISIT_1990}
J.~Massey, ``Causality, feedback and directed information,'' in \emph{Proc.
  Int. Symp. Inf. Theory Appl.}, 1990, pp. 303--305.

\bibitem{Kim_TIT_2008}
Y.~H. Kim, ``A coding theorem for a class of stationary channels with
  feedback,'' \emph{IEEE Trans. Inform. Theory}, vol.~54, no.~4, pp.
  1488--1499, 2008.

\bibitem{Weissman_TIT_2013}
T.~Weissman, Y.~H. Kim, and H.~H. Permuter, ``Directed information, causal
  estimation, and communication in continuous time,'' \emph{IEEE Trans. Inform.
  Theory}, vol.~59, no.~3, pp. 1271--1287, 2013.

\bibitem{Tanaka_TAC_2017}
T.~Tanaka, K.-K.~K. Kim, P.~A. Parrilo, and S.~K. Mitter, ``Semidefinite
  programming approach to gaussian sequential rate-distortion trade-offs,''
  \emph{IEEE Trans. Autom. Control}, vol.~62, no.~4, pp. 1896 -- 1910, 2017.

\bibitem{Kolmogorov_TIT_1956}
A.~Kolmogorov, ``On the {S}hannon theory of information transmission in the
  case of continuous signals,'' \emph{IRE Trans. Inform. Theory}, vol.~2,
  no.~4, pp. 102--108, 1956.

\bibitem{Yu_TAC_2010}
Y.~Sun and P.~G. Mehta, ``The {K}ullback-{L}eibler rate pseudo-metric for
  comparing dynamical systems,'' \emph{IEEE Trans. Autom. Control}, vol.~55,
  no.~7, pp. 1585--1598, 2010.

\bibitem{Zang_SCL_2003}
G.~Zang and P.~A. Iglesias, ``Nonlinear extension of {B}ode's integral based on
  an information-theoretic interpretation,'' \emph{Syst. \& Control Lett.},
  vol.~50, no.~1, p. 11–19, 2003.

\bibitem{Charalambous_TAC_2014}
C.~D. Charalambous, P.~A. Stavrou, and N.~U. Ahmed, ``Nonanticipative rate
  distortion function and relations to filtering theory,'' \emph{IEEE Trans.
  Autom. Control}, vol.~59, no.~4, pp. 937 -- 952, 2014.

\bibitem{Tanaka_TAC_2018}
T.~Tanaka, P.~M. Esfahani, and S.~K. Mitter, ``{LQG} control with minimum
  directed information: {S}emidefinite programming approach,'' \emph{IEEE
  Trans. Autom. Control}, vol.~63, no.~1, pp. 37 -- 52, 2018.

\bibitem{Stavrou_SIAMJCO_2018}
P.~A. Stavrou, T.~Charalambous, C.~D. Charalambous, and S.~Loyka, ``Optimal
  estimation via nonanticipative rate distortion function and applications to
  time-varying {G}auss-{M}arkov processes,'' \emph{{SIAM} Journal on Control
  and Optimization}, vol.~56, no.~5, pp. 3731--3765, 2018.

\bibitem{Tanaka_CDC_2017}
T.~Tanaka, ``Optimal sensor design and zero-delay source coding for
  continuous-time vector gauss-markov processes,'' in \emph{56th IEEE
  Conference on Decision and Control}, Melbourne, Australia, 2017.

\bibitem{Guo_TIT_2005}
D.~Guo, S.~Shamai, and S.~Verd\'u, ``Mutual information and minimum mean-square
  error in {G}aussian channels,'' \emph{IEEE Trans. Inform. Theory}, vol.~51,
  no.~4, pp. 1261--1282, 2005.

\bibitem{Tanaka_arxiv_2022}
T.~Tanaka, V.~Zinage, V.~Ugrinovskii, and M.~Skoglund, ``Continuous-time
  channel gain control for minimum-information {K}alman-{B}ucy filtering,''
  \emph{arXiv{\rm: 2202.02880}}, 2022.

\bibitem{Duncan_SIAMJAM_1970}
T.~E. Duncan, ``On the calculation of mutual information,'' \emph{SIAM Journal
  on Applied Mathematics}, vol.~19, no.~1, pp. 215--220, 1970.

\bibitem{Kadota_TIT_1971a}
T.~T. Kadota, M.~Zakai, and J.~Ziv, ``Mutual information of the white
  {G}aussian channel with and without feedback,'' \emph{IEEE Trans. on Inform.
  Theory}, vol.~17, no.~4, pp. 368--371, 1971.

\bibitem{Bucy_IS_1979}
R.~S. Bucy, ``Information and filtering,'' \emph{Information Science}, vol.~18,
  no.~2, pp. 179 -- 187, 1979.

\bibitem{Han_TIT_2016}
G.~Han and J.~Song, ``Extensions of the {I-MMSE} relationship to {G}aussian
  channels with feedback and memory,'' \emph{IEEE Trans. Inform. Theory},
  vol.~62, no.~10, pp. 5422--5445, 2016.

\bibitem{Polyanskiy_2016}
Y.~Polyanskiy and Y.~Wu, \emph{Lecture Notes on Information Theory}.\hskip 1em
  plus 0.5em minus 0.4em\relax Lecture Notes for ECE 563 (UIUC), 2016.

\bibitem{Mehta_TAC_2008}
P.~G. Mehta, U.~Vaidya, and A.~Banaszuk, ``Markov chains, entropy, and
  fundamental limitations in nonlinear stabilization,'' \emph{IEEE Trans.
  Autom. Control}, vol.~53, no.~3, pp. 784 -- 791, 2008.

\bibitem{Cover_2012}
T.~M. Cover and J.~A. Thomas, \emph{Elements of Information Theory}.\hskip 1em
  plus 0.5em minus 0.4em\relax John Wiley \& Sons, 2012.

\bibitem{Martins_TAC_2007}
N.~C. Martins, M.~A. Dahleh, and J.~C. Doyle, ``Fundamental limitations of
  disturbance attenuation in the presence of side information,'' \emph{IEEE
  Trans. Autom. Control}, vol.~52, no.~1, pp. 56--66, 2007.

\bibitem{Miao_SP_2020}
H.~Miao, F.~Zhang, and R.~Tao, ``Mutual information rate of nonstationary
  statistical signals,'' \emph{Signal Processing}, vol. 171, p. 107531, 2020.

\bibitem{Vu_NC_2009}
V.~Q. Vu, B.~Yu, and R.~E. Kass, ``Information in the nonstationary case,''
  \emph{Neural Computation}, vol.~21, no.~3, pp. 688--703, 2009.

\bibitem{Carlos_Entropy_2019}
C.~Granero-Belinch\'{o}n, S.~G. Roux, and N.~B. Garnier, ``Information theory
  for non-stationary processes with stationary increments,'' \emph{Entropy},
  vol.~21, no.~12, p. 1223, 2019.

\bibitem{Wan_SCL_2019}
N.~Wan, D.~Li, and N.~Hovakimyan, ``Sensitivity analysis of linear
  continuous-time feedback systems subject to control and measurement noise: An
  information-theoretic approach,'' \emph{Syst. \& Control Lett.}, vol. 133, p.
  104548, 2019.

\bibitem{Ihara_1993}
S.~Ihara, \emph{Information Theory for Continuous Systems}.\hskip 1em plus
  0.5em minus 0.4em\relax World Scientific, 1993.

\bibitem{Guo_2004}
D.~Guo, ``Gaussian channels: {I}nformation, estimation and multiuser
  detection,'' Ph.D. dissertation, Princeton University, 2004.

\bibitem{Ishii_SCL_2011}
H.~Ishii, K.~Okano, and S.~Hara, ``Achievable sensitivity bounds for mimo
  control systems via an information theoretic approach,'' \emph{Syst. \&
  Control Lett.}, vol.~60, pp. 111--118, 2011.

\bibitem{Zhao_Auto_2014}
Y.~Zhao, P.~Minero, and V.~Gupta, ``On disturbance propagation in
  leader-follower systems with limited leader information,'' \emph{Automatica},
  vol.~50, no.~2, pp. 591--598, 2014.

\bibitem{Kadota_TIT_1971b}
T.~T. Kadota, M.~Zakai, and J.~Ziv, ``Capacity of a continuous memoryless
  channel with feedback,'' \emph{IEEE Trans. Inform. Theory}, vol.~17, no.~4,
  pp. 372 -- 378, 1971.

\bibitem{Kim_TIT_2006}
Y.-H. Kim, ``Feedback capacity of the first-order moving average gaussian
  channel,'' \emph{IEEE Trans. Inform. Theory}, vol.~52, no.~7, pp. 3063 --
  3079, 2006.

\bibitem{BenArtzi_IEOT_1991}
A.~Ben-Artzi and I.~Gohberg, ``Dichotomy, discrete bohl exponents, and spectrum
  of block weighted shifts,'' \emph{Integral Equations Operator Theory},
  vol.~14, no.~5, pp. 613 -- 677, 1991.

\bibitem{Iglesias_Auto_2001}
P.~A. Iglesias, ``Tradeoffs in linear time-varying systems: An analogue of
  {B}ode's sensitivity integral,'' \emph{Automatica}, vol.~37, no.~10, pp.
  1541--1550, 2001.

\bibitem{Dieci_JDE_2010}
L.~Dieci, C.~Elia, and E.~V. Vleck, ``Exponential dichotomy on the real line:
  {SVD} and {QR} methods,'' \emph{Journal of Differential Equations}, vol. 248,
  no.~2, pp. 287 -- 308, 2010.

\bibitem{Tranninger_CSL_2020}
M.~Tranninger, R.~Seeber, M.~Steinberger, and M.~Horn, ``Uniform detectability
  of linear time varying systems with exponential dichotomy,'' \emph{IEEE
  Control Syst. Lett.}, vol.~4, no.~4, pp. 809 -- 814, 2020.

\bibitem{Glover_SCL_1988}
K.~Glover and J.~C. Doyle, ``State-space formulae for all stabilizing
  controllers that satisfy an ${H}_\infty$-norm bound and relations to risk
  sensitivity,'' \emph{Syst. \& Control Lett.}, vol.~11, no.~3, pp. 167 -- 172,
  1988.

\bibitem{Peters_TAC_1999}
M.~A. Peters and P.~A. Iglesias, ``The relationship between minimum entropy
  control and risk-sensitive control for time-varying systems,'' \emph{IEEE
  Trans. Autom. Control}, vol.~44, no.~5, pp. 1065 -- 1069, 1999.

\bibitem{Lewis_2017}
F.~L. Lewis, L.~Xie, and D.~Popa, \emph{Optimal and Robust Estimation: With an
  Introduction to Stochastic Control Theory}, 2nd~ed.\hskip 1em plus 0.5em
  minus 0.4em\relax CRC Press, 2017.

\bibitem{Chen_RSN_2011}
J.~Chen and L.~Ma, ``Particle filtering with correlated measurement and process
  noise at the same time,'' \emph{IET Radar, Sonar and Navigation}, vol.~5,
  no.~7, pp. 726--730, 2011.

\bibitem{Wang_Auto_2014}
X.~Wang, Y.~Liang, Q.~Pan, and Z.~Wang, ``General equivalence between two kinds
  of noise-correlation filters,'' \emph{Automatica}, vol.~50, no.~12, pp.
  3316--3318, 2014.

\bibitem{Arasaratnam_PIEEE_2007}
I.~Arasaratnam, S.~Haykin, and R.~J. Elliott, ``Discrete-time nonlinear
  filtering algorithms using {G}auss-{H}ermite quadrature,'' \emph{Proceedings
  of IEEE}, vol.~95, no.~5, pp. 953 -- 977, 2007.

\bibitem{Chen_Statistics_2003}
Z.~Chen, ``Bayesian filtering: From {K}alman filters to particle filters, and
  beyond,'' \emph{Statistics}, vol. 182, no.~1, 2003.

\bibitem{Simon_2006}
D.~Simon, \emph{Optimal State Estimation: {K}alman, $H_\infty$, and Nonlinear
  Approaches}.\hskip 1em plus 0.5em minus 0.4em\relax Wiley, 2006.

\bibitem{Liptser_2001}
R.~S. Liptser and A.~N. Shiryaev, \emph{Statistics of Random Processes {II}:
  {A}pplication}.\hskip 1em plus 0.5em minus 0.4em\relax New York:
  Springer-Verlag, 2001.

\bibitem{Sung_IJC_1989}
H.-K. Sung and S.~Hara, ``Properties of complementary sensitivity function in
  siso digital control systems,'' \emph{International Journal of Control},
  vol.~50, no.~4, pp. 1283 -- 1295, 1989.

\bibitem{Hu_LAA_2017}
G.-D. Hu and T.~Mitsui, ``Exponential stability of time-varying linear discrete
  systems,'' \emph{Linear Algebra and its Applications}, vol. 528, no.~1, pp.
  384--393, 2017.

\bibitem{Sivan_1972}
H.~Kwakernaak and R.~Sivan, \emph{Linear Optimal Control Systems}.\hskip 1em
  plus 0.5em minus 0.4em\relax Wiley-Interscience, 1972.

\bibitem{Naidu_1988}
D.~S. Naidu, \emph{Singular Perturbation Methodology in Control Systems}.\hskip
  1em plus 0.5em minus 0.4em\relax Bath Press, 1988.

\end{thebibliography}
	
	\newpage
	
	\vskip -2.1\baselineskip plus -1fil
	
	\begin{biography}[{\includegraphics[width=1in, height=1.25in, clip, keepaspectratio]{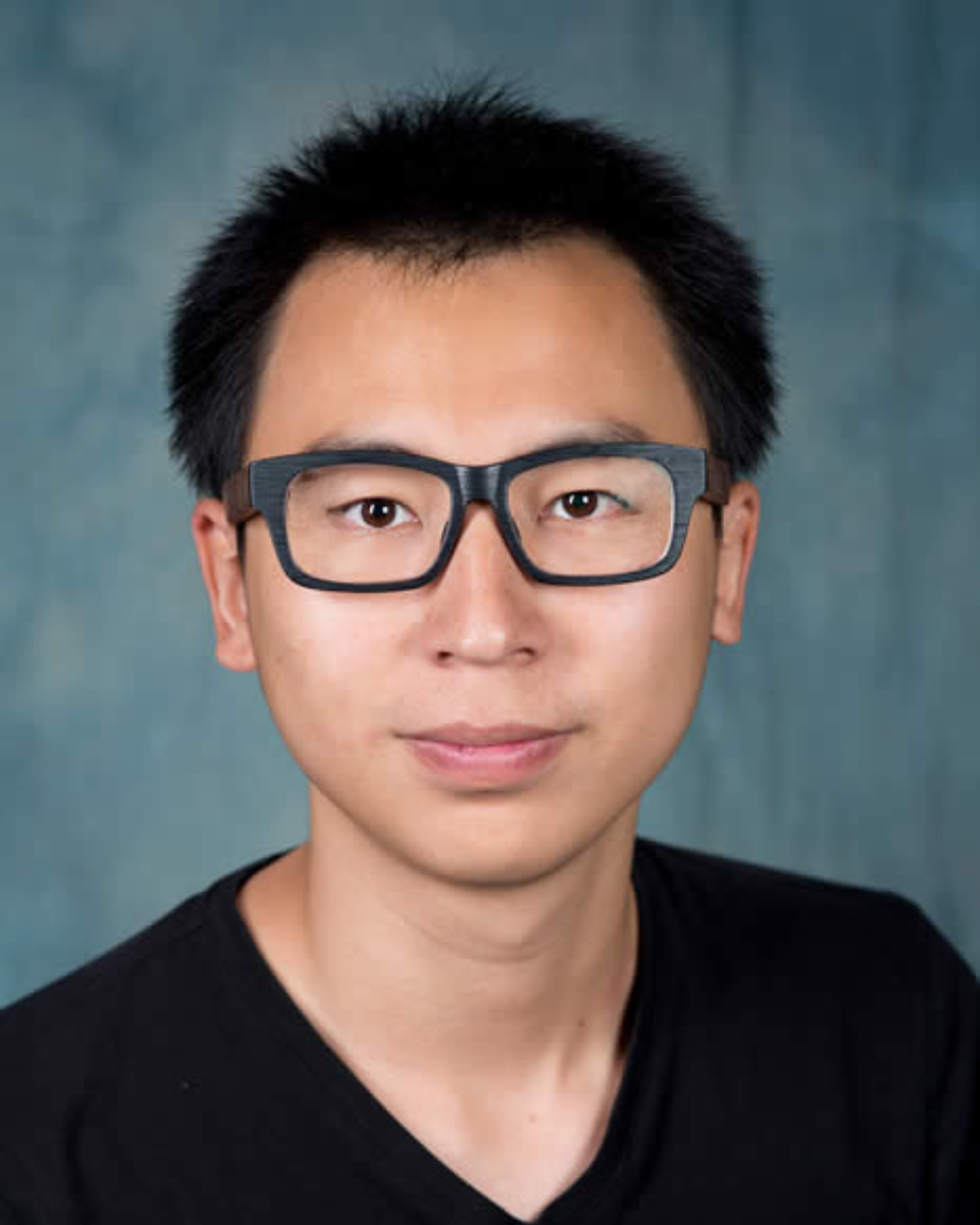}}]{Neng Wan} received the bachelor’s (Hons.) and master’s degrees in aerospace engineering from Harbin Institute of Technology, Harbin, China, in 2013 and 2015, respectively, and the master’s degree in applied and computational mathematics from the University of Minnesota Duluth, Duluth, MN, USA, in 2017. He is currently pursuing the Ph.D. degree with the Department of Mechanical Science and Engineering and Coordinated Science Laboratory, University of Illinois at Urbana-Champaign, Urbana, IL, USA. 
		
	His research interests include control theory, information theory, machine learning and their applications.
	\end{biography}

	\vskip -2.1\baselineskip plus -1fil
	
	\begin{biography}[{\includegraphics[width=1in, height=1.25in, clip, keepaspectratio]{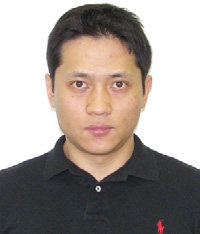}}]{Dapeng Li} is a Research Professor of mechanical and energy engineering with Southern University of Science and Technology in Shenzhen, China. He received his B.S. and M.S. degrees in Automatic Control from University of Science and Technology of China. He got his Ph.D. degree in Mechanical Engineering in 2011 from University of Illinois at Urbana-Champaign. 
		
	His research interests include the theory of robust adaptive control, information/control theory, machine learning, and their applications in HVAC and aerospace systems.	 
	\end{biography}

	\vskip -2.1\baselineskip plus -1fil
		
	\begin{biography}[{\includegraphics[width=1in, height=1.25in,clip, keepaspectratio]{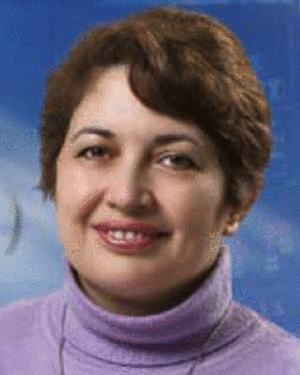}}]{Naira Hovakimyan} (Fellow, IEEE)  received the M.S. degree in applied mathematics from Yerevan State University, Yerevan, Armenia, and the Ph.D. degree in physics and mathematics from the Institute of Applied Mathematics of Russian Academy of Sciences, Moscow, Russia.
		
	She is currently a W. Grafton and Lillian B. Wilkins Professor of mechanical science and engineering with the University of Illinois at Urbana-Champaign, Urbana, IL, USA. She has co-authored 2 books and more than 450 refereed publications. She holds 11 patents.
	
	Dr. Hovakimyan was the recipient of AIAA Mechanics and Control of Flight Award in 2011, SWE Achievement Award in 2015, IEEE CSS Award for Technical Excellence in Aerospace Controls in 2017, and AIAA Pendray Aerospace Literature Award in 2019. In 2014, she received the Humboldt prize for her lifetime achievements. She is a Fellow of AIAA. Her work was featured in The New York Times, on Fox TV and CNBC.	
	\end{biography}

	\vskip -2.1\baselineskip plus -1fil
	
	\begin{biography}[{\includegraphics[width=1in, height=1.25in,clip, keepaspectratio]{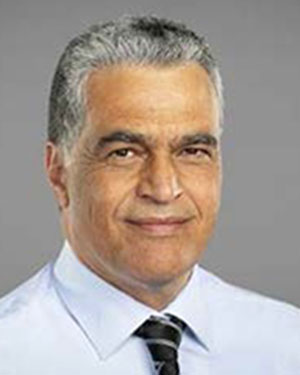}}]{Petros G. Voulgaris} (Fellow, IEEE)  received the Diploma in Mechanical Engineering from the National Technical University, Athens, Greece, in 1986, and the S.M. and Ph.D. degrees in Aeronautics and Astronautics from the Massachusetts Institute of Technology in 1988 and 1991, respectively.
		
		He is currently Chair, Founding Aerospace Program Director, and Victor LaMar Lockhart Professor in Mechanical Engineering at University of Nevada, Reno. Before joining UNR in 2020 and since 1991, he has been a faculty with the Department of Aerospace Engineering, University of Illinois at Urbana-Champaign holding also appointments with the Coordinated Science Laboratory, and the department of Electrical and Computer Engineering. His research interests are in the general area of robust and optimal control and coordination of autonomous systems.\allowdisplaybreaks
		
		Dr. Voulgaris is a recipient of several awards including the NSF Research Initiation Award, the ONR Young Investigator Award and the UIUC Xerox Award for research. He has also been a Visiting ADGAS Chair Professor, Mechanical Engineering, Petroleum Institute, Abu Dhabi, UAE and a Visiting Guangbiao Chair at Zhejiang University, China. His research has been supported by several agencies including NSF, ONR, AFOSR, NASA. He is also a Fellow of IEEE.	
	\end{biography}

\clearpage

\section*{Appendix A: Proof of Proposition~3.6}\label{appA}
A sketch and an overview of the proof are first given. The two inequalities in \eqref{prop36_eq1} can be directly obtained by applying \hyperref[thm32]{Theorem 3.2} or \ref{thm34} to the LTI control channel \eqref{LTI_Ctrl_Channel}. It then remains to verify that $\bar{I}(E; X_0) = \sum_j \log |\lambda_j^{+}(A)|$, which is an important evidence for showing that $\bar{I}(E; X_0)$ serves as an information-theoretic interpretation of the Bode's integral of discrete-time LTI control systems \cite{Sung_IJC_1989}. We can derive this equality by separately proving the statements i) $\bar{I}(E; X_0) \geq \sum_j \log |\lambda_j^{+}(A)|$, and ii) $\bar{I}(E; X_0) \leq \sum_j \log |\lambda_j^{+}(A)|$. Statement i) is well-known and has been proved by using the internally mean-square stability and the maximum entropy conditions in \cite[Lemma 4.1]{Martins_TAC_2007}. However, to the authors' knowledge, statement ii) has not been proved in any existing literature yet. Surprisingly, even without proving statement ii), equality $\bar{I}(E; X_0) = \sum_j \log |\lambda_j^{+}(A)|$ was incidentally derived by \cite[Corollary 1]{Tanaka_TAC_2018} during the minimization of the directed information (rate) $I(U_0^n \rightarrow E_0^n)  = I(E_0^n; X_0)$ subject to a LQG performance constraint. In the following, we will prove statement ii) by using tools from information and estimation theories, which along with the proof of statement i), provides a more straightforward and concise approach to derive the identity $\bar{I}(E; X_0) = \sum_j \log |\lambda_j^{+}(A)|$.

In order to calculate the estimates and estimation error (covariance) of control input $U_i = GX_i$, consider the following filtering problem originated from \eqref{LTI_Ctrl_Plant}-\eqref{LTI_Ctrl_Channel} 
\begin{equation}\label{LTI_Ctrl_Filtering}
	\begin{split}
		X_{i+1} & = (A+BG)X_i +BW_i\\
		E_i & = GX_i + W_i,
	\end{split}
\end{equation}
where $E_i$ is observable, and $X_i$ is the hidden states to estimate. By applying the LTI Kalman filter to \eqref{LTI_Ctrl_Filtering}, and letting the prior estimate be $\hat{X}_i^{-} := \mathbb{E}[X_i|E_0^{i-1}]$, the non-zero prior error covariance $P_i^{-} := \mathbb{E}[(X_i - \hat{X}_i^{-})(X_i - \hat{X}_i^{-})^\top]$ satisfies
\begin{equation}\label{LTI_Prediction}
	P_{i}^{-} = AP_{i-1}^{-}(GP^{-}_{i-1}G^\top + I)^{-1} A^\top.
\end{equation}
Define the posterior estimate $\hat{X}_i = \mathbb{E}[X_i | E_0^i]$, and the posterior error covariance $P_i := \mathbb{E}[(X_i - \hat{X}_i)(X_i - \hat{X}_i)^\top]$ obeys
\begin{equation}\label{LTI_Correction}
	P_{i} = P_{i}^{-}-P_{i}^{-}G^\top(GP_{i}^{-}G^\top + I)^{-1}GP_{i}^{-}.
\end{equation}
The evolution of error covariance matrices in \eqref{LTI_Prediction} and \eqref{LTI_Correction} is equivalent to the filtering result of the following system without control input and noise\footnote{The filtering equivalence only exists for the prior/posterior error covariance matrices of \eqref{LTI_Ctrl_Filtering} and \eqref{prop36_eq4.5}, while the prior/posterior estimates of these two systems are not identical.}
\begin{equation}\label{prop36_eq4.5}
	\begin{split}
		X_{i+1} & = AX_i, \allowdisplaybreaks\\
		E_i & = GX_i + W_i.
	\end{split}
\end{equation}
When \eqref{LTI_Ctrl_Filtering} is detectable, we can partition \eqref{LTI_Ctrl_Filtering} and \eqref{prop36_eq4.5} into stable and unstable parts by modal decomposition. By the LTI version of \hyperref[lem46]{Lemma~4.6}, the asymptotic prior error covariance $\lim_{i\rightarrow\infty}P_i^{-}$ in \eqref{LTI_Prediction} takes the form of  
\begin{equation}\label{prop36_eq5}
	\lim_{i\rightarrow\infty}P_i^{-} = \left[\begin{matrix}
		0 & 0\\
		0 & P_u^{-}
	\end{matrix}\right] ,
\end{equation}
where $P_u^{-} := \lim_{i\rightarrow\infty} \mathbb{E}[(X_u(i) - \hat{X}_u^{-}(i))(X_u(i) - \hat{X}_u^{-}(i))^{\top}]$, with unstable modes $X_u(i)$ and their prior estimates $\hat{X}_u^{-}(i)$, is the asymptotic prior error covariance of the unstable part and satisfies the steady-state ARE (algebraic Riccati equation)
\begin{equation}\label{LTI_ARE_Ctrl}
	P_{u}^{-} = A^{}_u P_u^{-} (G^{}_u P_u^{-} G_u^\top + I)^{-1} A_u^\top,
\end{equation}
with $A_u$ and $G_u$ being the unstable parts of $A$ and $G$.

With the filtering result above, we now prove statement i) and show that $\bar{I}(E; X_0) \geq \sum_j \log |\lambda_j^{+}(A)| = (1/2)  \log \det \break  (G_u^{}   P_u^{-}G_u^\top+ I)$, where the inequality is proved by resorting to \cite[Lemma 4.1]{Martins_TAC_2007}, and the equality can be obtained by taking $\log\det(\cdot)$ on both sides of \eqref{LTI_ARE_Ctrl}. To prove statement ii) and show that $\bar{I}(E; X_0) \leq (1/2) \log\det(G_u^{}P_u^{-}G_u^\top + I) = \sum_j \log |\lambda_j^{+}(A)|$, consider the following manipulation on $I(E_0^n; X_0)$
	\begin{align}\label{prop36_eq7}
			I(E_0^n; X_0) & \neweq{(a)} h(E_0^n) - h(W_0^n) \allowdisplaybreaks \\
			& \hspace{-20pt} \neweq{(b)} h(G (X_0^n - \hat{X}_{0:n}^{-}) + W_0^n) - h(W_0^n) \nonumber \allowdisplaybreaks\\
			& \hspace{-20pt} \newleq{(c)} \frac{1}{2} \sum_{i=0}^{n} \log  [(2\pi {\rm e})^s  |GP_i^{-}G^\top + I|] - \frac{n+1}{2} \log( 2\pi {\rm e} )^s \nonumber \allowdisplaybreaks\\
			&\hspace{-20pt} \neweq{(d)}  (n+1) \sum_j \log|\lambda_j^{+}(A_u)| ,\nonumber
	\end{align}
	\noindent where (a) invokes \eqref{prop33_eq2} in \hyperref[prop33]{Proposition 3.3}; (b) follows from \eqref{LTI_Ctrl_Channel} and $h(GX_0^n + W_0^n) = h(G (X_0^n - \hat{X}_{0:n}^{-}) + W_0^n)$, since $\hat{X}_{0:n}^{-} =  \{ \mathbb{E}[X_{0}^i | E_0^{i-1}]  \}_{i=0}^n$ is deterministic; (c) follows from $h(X_0^n) \leq \sum_{i=0}^{n}h(X_i)$ and the maximum entropy condition when $E_i$ and $W_i \in \mathbb{R}^s$ in \eqref{LTI_Ctrl_Channel}, and (d) relies on \eqref{prop36_eq5} and \eqref{LTI_ARE_Ctrl}. Dividing both sides of \eqref{prop36_eq7} by $n+1$ and taking the limit superior as $n\rightarrow\infty$, we have $\bar{I}(E; X_0) \leq \sum_j \log|\lambda_j^{+}(A)|$. Combining statements i) and ii), we then prove the identity $\bar{I}(E; X_0) = \sum_j \log |\lambda_j^{+}(A)|$ in \hyperref[prop36]{Proposition 3.6}. More interesting observations on the lower and upper bounds in \eqref{prop36_eq1} can also be implied from \eqref{LTI_Prediction} and \eqref{LTI_Correction}.\footnote{\label{LTI_foot}For example, {\bf a)} $2\sum_j \log |\lambda_j^{+}(A)| = \sum_k \log|1+\eta_k|$ with $\{\eta_k\}$ being the eigenvalues of $G^{}_uP_u^{-}G_u^\top$, which is obtained by taking $\log \det(\cdot)$ on both sides of \eqref{LTI_ARE_Ctrl}; {\bf b)} $\sum_j \log |\lambda_j^{+}(A)| \geq \overline{\lim}_{n\rightarrow\infty}\sum_{i=0}^n {\rm cmmse}(U_i) / [2(n+1)]$, which is derived by pre- and post-multiplying both sides of \eqref{LTI_Correction} by $G$ and $G^\top$, taking the trace operator ${\rm tr}(\cdot)$, invoking inequality $x/(1+x) \leq \log(1+x)$ when $x > 0$ and result a), and taking average over $i = 0, \cdots, n$ as $n\rightarrow\infty$, and {\bf c)} $\sum_j \log |\lambda_j^{+}(A)| \leq \overline{\lim}_{n\rightarrow\infty} \sum_{i=0}^{n}{\rm pmmse}(U_i) / [2(n+1)]$, which is attained by taking $\log \det(\cdot)$ on both sides of \eqref{LTI_Prediction}, applying the relation $\log \det (I + G P^{-}_iG^\top) = -2 \log(  \mathbb{E}[ \exp[-(U_i - \hat{U}_i^{-})^\top (U_i - \hat{U}_i^{-}) / 2]  ]  ) \leq {\rm pmmse}(U_i)$ from \cite[Lemma 3.1]{Glover_SCL_1988} and Jensen's inequality, and taking average over $i = 0, \cdots, n$ as $n\rightarrow \infty$. Observations b) and c) along with the identity $\bar{I}(E; X_0) = \sum_j \log |\lambda_j^{+}(A)|$ reaffirm \eqref{prop36_eq1} in \hyperref[prop36]{Proposition 3.6}.}

	To verify the sandwich bounds in \eqref{prop36_eq1}, we consider a one-dimensional LTI control system with $A = 1.5$, $B = 1$, and $G = -1$ in \eqref{LTI_Ctrl_Plant}-\eqref{LTI_Ctrl_Channel}. By \hyperref[prop36]{Proposition 3.6}, the total information rate of channel \eqref{LTI_Ctrl_Channel} satisfies $\bar{I}(E; X_0) = 0.5850$. From \eqref{LTI_Prediction} and \eqref{LTI_Correction}, we figure out that the steady-state prediction MMSE is 1.25, and the steady-state causal MMSE is 0.5556, which together verify the inequalities in \eqref{prop36_eq1}.

	\section*{Appendix B: Stability and Sensitivity of Discrete-Time LTV Systems}\label{appB}
	Consider the following $m$-dimensional discrete-time homogeneous LTV system
	\begin{equation}\label{LTV_SYS}
		X_{i+1} = A_i X_i,
	\end{equation}
	where $X_i \in \mathbb{R}^{m}$, $\{A_i\}$ is a sequence of $m\times m$ matrices, and $i = k, k+1, \cdots, k+n$. Let $\varPhi(k+n, k) := \prod_{i=k}^{k+n-1}A_i$ be the discrete-time state transition matrix of \eqref{LTV_SYS} from $k$ to $k+n$. We say that the sequence $\{A_i\}$ is {\it uniformly exponentially stable} (UES), if there exist positive constants $\alpha$ and $\beta < 1$, independent of $k$ and $n$, such that $\| \varPhi_A(k + n, k) \| = \overline{\mu}(\varPhi_A(k + n, k)) < \alpha\beta^{n}$, and $\{A_i\}$ is {\it uniformly exponentially antistable} (UEA) if there exist positive constants $\alpha$ and $\beta > 1$, independent of $k$ and $n$, such that $\underline{\mu}(\varPhi_A(k + n, k)) > \alpha \beta^{n}$. 
	
	The LTV system \eqref{LTV_SYS} is said to possess an {\it exponential dichotomy} if it can be decomposed into stable and antistable parts. Let $\{P_k\}$ be a bounded sequence of projections in $\mathbb{R}^m$ such that the rank of $P_k$ is constant. We say that $\{P_k\}$ is a {\it dichotomy} for $\{A_i\}$ if the commutativity condition $A_kP_k = P_{k+1}A_k$ is satisfied for all $k$ and there exist positive constants $\alpha$ and $\beta > 1$ such that $\|\varPhi_A(k+n, k)P_k x\| > \alpha \beta^{n+1}  \|P_k x\|$ and $\|\varPhi_A(k+n, k)(I - P_k)x\| < (\alpha \beta^l)^{-1} \| (I - P_k) x\|$ for any $x$. 
	
	Define the weighted shift operator $\mathscr{S}_A$ by $(\mathscr{S}_A X)_i := A_{i-1} X_{i-1}$, and let $\sigma(\mathscr{S}_A)$ denote the spectrum of the operator. The sequence $\{A_k\}$ is stable (or antistable) if and only if $\sigma(\mathscr{S}_A) \in \mathbb{D}$ (or $\sigma(\mathscr{S}_A) \in \mathbb{C} \backslash \mathbb{D}$), where $\mathbb{D}$ denotes the unit disk, and $\mathbb{C}$ stands for the entire complex plane. The spectrum of the weighted shift is contained in $r$ concentric annuli:	
	\begin{equation*}
		\sigma(\mathscr{S}_A) = \cup_{j=1}^{r} \{ \lambda \in \mathbb{C}: \underline{\kappa}_j \leq |\lambda| \leq \overline{\kappa}_j \},
	\end{equation*}
	where $0 \leq \underline{\kappa}_1 \leq \overline{\kappa}_1 < \underline{\kappa}_2 < \cdots < \underline{\kappa}_r < \overline{\kappa}_r < \infty$ and $1 \leq r \leq m$. Exponential dichotomy is equivalent to the existence of an integer $q \in \{1, \cdots, r-1\}$ such that either $\overline{\kappa}_q < 1 < \underline{\kappa}_{q+1}$ or $\overline{\kappa}_r < 1$. More detailed definitions and explanations on the terminology related to the discrete-time LTV systems can be found in \cite{Iglesias_Auto_2001, Hu_LAA_2017} and references therein.

	\section*{Appendix~C: Proof of Corollary 3.12}\label{appC}
	The MMSE-based sandwich bounds in \eqref{cor312_eq1} can be readily derived by applying \hyperref[thm34]{Theorem 3.4} to the LTV plant \eqref{LTV_Ctrl_Plant} and control channel \eqref{LTI_Nonlinear_Channel}. It then remains to show that $\bar{I}(E; X_0) \geq \sum_{j=1}^{l}m_j \log \underline{\kappa}_j$, which, analogous to the constraint of LTV Bode's integral $\mathscr{B}$, indicates the control trade-off property of $\bar{I}(E; X_0)$, and is also an intermediate result invoked in the proof of \hyperref[prop311]{Proposition 3.11}. To prove this inequality, consider the antistable part of \eqref{LTV_Ctrl_Plant} governed by the following equation
	\begin{equation}\label{cor312_eq2}
		X_u(i+1) = A_u(i)X_u(i) + B_u(i)E(i),
	\end{equation}
	\noindent where $A_u(i)$ and $B_u(i)$ follow the definitions in \hyperref[lem39]{Lemma 3.9}. The solution to \eqref{cor312_eq2}, $X_u(n)$ or $X_u(n+1)$, is
	\begin{align}\label{cor312_eq3}
		X_u(n) & = \varPhi_{A_u}(n, 0) X_u(0) + \sum_{i=0}^{n-1}\varPhi_{A_u}(n, i+1)B_u(i)E(i) \nonumber \\
		& = \varPhi_{A_u}(n, 0) [X_u(0) + \bar{X}_u(n)],
	\end{align}
	\noindent where the state transition matrix is given by $\varPhi_{A_u}(n, 0) := \prod_{i=0}^{n-1}A_u(i)$, and $\bar{X}_u(n) := \varPhi_{A_u}(n, 0)^{-1} \sum_{i=0}^{n-1}\varPhi_{A_u}(n, i+1)B_u(i)E(i)$. Since the LTV subsystem \eqref{cor312_eq2}, stabilized by the nonlinear control mapping \eqref{Nonlinear_Ctrl} or \eqref{LTI_Nonlinear_Channel}, is internally mean-square stable, i.e., $\sup_i \mathbb{E}[X^\top_i X_i^{}] < \infty$, substituting \eqref{cor312_eq3} into the mean-square stability condition gives
	\begin{equation}\label{cor312_eq4}
		\begin{split}
			+\infty > S & > \log \det \mathbb{E}[X_u(n)X_u^\top(n)]\\
			& = 2 \log \det \varPhi_{A_u}(n, 0) + \log\det \Gamma_n,
		\end{split}
	\end{equation}
	\noindent where $\Gamma_n := \mathbb{E}[(X_u(0) + \bar{X}_u(n))(X_u(0) + \bar{X}_u(n))^\top]$. Moreover, the total information $I(E_0^{n-1}; X_0)$ in \eqref{LTI_Nonlinear_Channel} satisfies
	\begin{align}\label{cor312_eq5}
		I(E_0^{n-1}; X_0)    
		& \newgeq{(a)} I(\bar{X}_u(n); X_u(0))  \allowdisplaybreaks\\
		& = h(X_u(0)) - h(X_u(0) | \bar{X}_u(n))  \nonumber  \allowdisplaybreaks\\
		& \neweq{(b)} h(X_u(0)) - h(X_u(0) + \bar{X}_u(n) | \bar{X}_u(n)) \nonumber  \allowdisplaybreaks\\
		& \newgeq{(c)} h(X_u(0)) - h(X_u(0) + \bar{X}_u(n)) \nonumber  \allowdisplaybreaks\\
		& \newgeq{(d)} h(X_u(0)) - \frac{s_u}{2} \log(2\pi {\rm e}) - \frac{1}{2} \log \det \Gamma_n ,\nonumber
	\end{align}
	\noindent where (a) follows from the fact that $X_u(0)$ is a function of $X(0)$, and $\bar{X}_u(n)$ is a function of $E_0^{n-1}$; (b) uses the property $h(X|Y) = h(X + f(Y)|Y)$; (c) can be explained by the non-negativity of mutual information, since $I(X; Y) = h(X)-h(X|Y) \geq 0$, and (d) assumes that $X_u \in \mathbb{R}^{s_u}$ and resorts to the maximum condition of differential entropy. Substituting inequality \eqref{cor312_eq4} into \eqref{cor312_eq5} gives
	\begin{align}\label{cor312_eq6}
		I(E_0^{n-1}; X_0) \geq h(X_u(0)) & - ({s_u}/{2})  \log(2\pi e) \allowdisplaybreaks\\
		& - {S}/{2} + \log\det \varPhi_{A_u}(n, 0). \nonumber
	\end{align}
	Dividing both sides of \eqref{cor312_eq6} by $n$ and taking the limit superior as $n\rightarrow\infty$, we have
	\begin{align*}
		\bar{I}(E; X_0) \geq \uplim_{n\rightarrow\infty} \frac{\log \det \varPhi_{A_u}(n, 0)}{n} &  \geq \sum_{j=1}^{l} m_j \ln \underline{\kappa}_j \geq m_u \ln \beta.
	\end{align*}
	This completes the proof.

\section*{Appendix D: Proof of Theorem~4.4}\label{appD}
	We first derive the lower and upper bounds of $\bar{I}(Y; W, X_0)$ in \eqref{thm44_eq1}. From \eqref{thm42_eq2}, we can tell that $\uplim_{n\rightarrow\infty} [2(n+1)]^{-1} \cdot \sum_{i=0}^{n}{\rm cmmse}(Z_i)$ and $\uplim_{n\rightarrow\infty} [2(n+1)]^{-1}  \sum_{i=0}^{n}{\rm pmmse}(Z_i)$ are respectively a lower bound and an upper bound for $\bar{I}(Y; W, X_0)$. By \eqref{prop43_eq1} and the non-negativity of mutual information, or $I(Y_0^n; W_0^{n-1}  |  X_0) \geq 0$, we see that $\bar{I}(Y; X_0)$ is also a lower bound of $\bar{I}(Y; W, X_0)$. Meanwhile, according to the definition of filtering capacity, $\mathcal{C}_f$ should be the lowest upper bound for $\bar{I}(Y; W, X_0)$.

	We then consider the scenario when the noise-free output $Z_i$ is subject to the power constraint, i.e., $\mathbb{E}[Z_i^\top  Z_i] < \rho_i$, $\forall i$. Without loss of generality, suppose $Y_i$ and $V_i \in \mathbb{R}^s$ in \eqref{measure_channel}. To verify $\mathcal{C}_f = \mathcal{C} = \overline{\lim}_{n\rightarrow\infty}\sup_{\mathbb{E}[Z_i^\top Z_i] \leq \rho_i}  [2(n+1)]^{-1} \break \sum_{i=0}^{n} \log(|\Sigma_{Z_i}  + I|)$ in \hyperref[thm44]{Theorem~4.4}, we first show that $\mathcal{C} \leq \mathcal{C}_f = \overline{\lim}_{n\rightarrow\infty}  \sup_{\mathbb{E}[Z_i^\top Z_i] \leq \rho_i} [2(n+1)]^{-1} \sum_{i=0}^{n} \log(|\Sigma_{Z_i} + I|)$, where the inequality can be implied by the definition of $\mathcal{C}$ and $\mathcal{C}_f$ and the identity between total information and input-output mutual information in \eqref{measure_channel}, i.e., ${I}(Y_0^n; W_0^{n-1}, X_0) = I(Y_0^n; Z_0^n)$, and the equality follows 
	\begin{align} \label{thm44_eq3}
		\bar{I}(Y; W, X_0) & \neweq{(a)} \bar{h}(Y) - \bar{h}(V) \leq \mathcal{C}_f \allowdisplaybreaks \\
		& \hspace{-40pt} \newleq{(b)} \uplim_{n\rightarrow\infty} \sup_{\mathbb{E}[Z_i^\top Z^{}_i] \leq \rho^{}_i} \frac{\sum_{i=0}^{n}\log[(2\pi {\rm e})^{s} | {\Sigma_{Y_i}}|] }{2(n+1)} - \frac{\log [(2\pi {\rm e})^{s}]}{2} \nonumber \allowdisplaybreaks \\
		&\newleq{(c)} \uplim_{n\rightarrow\infty} \sup_{\mathbb{E}[Z_i^\top Z^{}_i] \leq \rho^{}_i} \frac{1}{2(n+1)}  \sum_{i=0}^{n} \log(|\Sigma_{Z_i} \break +I|) \nonumber \allowdisplaybreaks \\
		&\neweq{(d)} \uplim_{n\rightarrow \infty} \frac{1}{2(n+1)} \sum_{i=0}^{n} \log \left( 1 + \rho^{}_i \right) , \nonumber
	\end{align}
	\noindent where (a) follows from \eqref{prop43_eq2} and \hyperref[def21]{Definitions~2.1} and \ref{def22}; (b) uses the property $h(Y_0^n) \leq \sum_{i=0}^{n}h(Y_i)$ with equality attained if and only if $\{Y_i\}_{i=0}^n$ are mutually independent, maximum entropy condition, and $V_i \sim \mathcal{N}(0, I)$; (c) utilizes the fact that $\Sigma_{Y_i} = \Sigma_{Z_i} + \Sigma_{V_i}$, where $\Sigma_{V_i} = I$, and (d) holds when \eqref{measure_channel} is one-dimensional, i.e., $s = 1$. To show the other direction, consider a special scenario when the noise-free outputs $Z_i \sim \mathcal{N}(0, \Sigma_{Z_i})$ for $i = 0, \cdots, n$ in the non-feedback channel \eqref{measure_channel} are mutually independent. The channel capacity $\mathcal{C}$ of this special scenario satisfies
	\begin{equation}\label{thm44_eq4}
		\mathcal{C} \geq \bar{I}(Y; Z) = \uplim_{n\rightarrow\infty} \sup_{\mathbb{E}[Z_i^\top Z^{}_i] \leq \rho^{}_i} \frac{1}{2(n+1)}  \sum_{i=0}^{n} \log(|\Sigma_{Z_i}  +I|).
	\end{equation}	
	Combining \eqref{thm44_eq3} and \eqref{thm44_eq4}, we prove the equations subject to the finite-power constraint in \hyperref[thm44]{Theorem~4.4}.

	Lastly, when the process noise $W_i \sim \mathcal{N}(0, \varepsilon^2 I)$ vanishes, i.e., $\varepsilon\rightarrow 0$ in \eqref{Filter_Plant}, the total information rate $\bar{I}(Y; W, X_0)$ in \eqref{thm42_eq2} satisfies
	\begin{equation}\label{thm44_eq5}
		\bar{I}(Y; W, X_0) \geq \lim_{\varepsilon \rightarrow 0} \bar{I}(Y; W, X_0) = \bar{I}(Y; X_0),
	\end{equation}
	\noindent which follows the decomposition in \eqref{prop43_eq1} and the fact that when $W_0^n \rightarrow 0$ and becomes deterministic, $\lim_{\varepsilon \rightarrow 0} I(Y_0^n; W_0^n|X_0) \break = 0$. Taking the limit $\varepsilon\rightarrow 0$ in \eqref{thm42_eq2} and substituting \eqref{thm44_eq5} into the result, we obtain \eqref{thm44_eq2}.

	\section*{Appendix E: Proof of Proposition 4.5}\label{appE}
	Since \hyperref[prop45]{Proposition~4.5} can be virtually proved by replacing the feedback control channel in the proof of \hyperref[prop36]{Proposition~3.6} or \hyperref[prop311]{Proposition~3.11} with the non-feedback measurement channel in \eqref{LTI_Filtering}, in this proof we only show the key results and skip the overlapped details. To calculate the optimal output estimates, $\hat{Z}_i^{-} := \mathbb{E}[Z_i|Y_0^{i-1}] = \mathbb{E}[HX_i|Y_0^{i-1}] $ and $\hat{Z}_i := \mathbb{E}[Z_i|Y_0^{i}]= \mathbb{E}[HX_i|Y_0^{i}]$, and the predicted/causal MMSEs of $Z_i$ in \eqref{prop45_eq1}, we can apply the LTI Kalman filter to \eqref{LTI_Filtering}. Define the state prior estimate $\hat{X}_i^{-} := \mathbb{E}[X_i | Y_0^{i-1}]$ and state posterior estimate $\hat{X}_{i} := \mathbb{E}[X_{i} | Y_0^{i}]$. We can calculate $\hat{Z}_i^{-}$, $\hat{X}_i^{-}$, and ${\rm pmmse}(Z_i)$ from the prediction step of Kalman filter
	\begin{equation}\label{LTI_fl_prediction}
		\begin{split}
			\hat{X}_{i}^{-} & = A \hat{X}_{i-1}\\
			P_{i}^{-} & = AP_{i-1}^{-}(HP_{i-1}^{-}H^\top + I)^{-1}A^\top + \varepsilon^2 BB^\top,
		\end{split}
	\end{equation}
	\noindent where the state prior error covariance is $P_i^{-} := \mathbb{E}[(X_i - \hat{X}_i^{-}) \cdot (X_i - \hat{X}_i^{-})^\top]$. $\hat{Z}_i$, $\hat{X}_i$. In the meantime, ${\rm cmmse}(Z_i)$ can be calculated from the correction step of Kalman filter
	\begin{equation}\label{LTI_fl_correction}
		\begin{split}
			\hat{X}_{i} & = \hat{X}_{i}^{-} + K_{i}(Y_{i} - H \hat{X}_{i}^{-})\\
			P_{i} & = P_{i}^{-} - P_{i}^{-}H^\top  (HP_{i}^{-}H^\top + I)^{-1} H P_{i}^{-},
		\end{split}
	\end{equation}
	\noindent where the Kalman gain is given by $K_{i} := P_{i}^{-}H^\top  (I + HP_{i}^{-}H^\top)^{-1}$, and the state posterior error covariance is $P_{i} := \mathbb{E}[(X_{i} - \hat{X}_{i}) (X_{i} - \hat{X}_{i})^\top]$.

	Inequalities \eqref{prop45_eq1} are obtained by applying \hyperref[thm44]{Theorem~4.4} to the measurement channel in \eqref{LTI_Filtering}. We then prove $\bar{I}(Y; X_0) = \sum_j \log |\lambda_j^{+}(A)|$ by respectively showing that  i) $\bar{I}(Y; X_0) \geq \sum_j \log |\lambda_j^{+}(A)|$, and ii) $\bar{I}(Y; X_0) \leq \sum_j \log |\lambda_j^{+}(A)|$. Statement i) can be obtained by using \hyperref[ass4]{(A4)} and mimicking the proof of \hyperref[cor312]{Corollary~3.12}. Meanwhile, in the limits of $i\rightarrow \infty$ and $\varepsilon \rightarrow 0$, by taking $\log \det( \cdot )$ on both sides of \eqref{LTI_fl_prediction} and applying the LTI version of \hyperref[lem46]{Lemma~4.6} to the prior error covariance matrices in \eqref{LTI_fl_prediction}, we have
	\begin{align}\label{prop45_eq2}
			\lim_{\varepsilon \rightarrow 0} \lim_{i\rightarrow\infty}\log \det(HP_{i}^{-}H^{\top} + I)  & = \log \det(H^{}_uP_u^{-}H_u^{\top} + I) \nonumber \allowdisplaybreaks \\
			& = 2\sum_j \log |\lambda_j^{+}(A)| ,
	\end{align}
	\noindent where $P_u^{-}$ and $H_u$ are the unstable parts of $\lim_{i\rightarrow\infty}P_i^{-}$ and $H$, and $\lambda_j^{+}(A)$ denote the eigenvalues of $A_u$, which is the unstable part of $A$. Supposing $Y_i$ and $V_i \in \mathbb{R}^s$ in \eqref{LTI_Filtering}, and following \eqref{prop43_eq2}, \eqref{prop45_eq2} and the similar arguments for \eqref{prop36_eq7}, we have
	\begin{align}\label{prop45_eq3}
		I(Y_0^n; X_0) & = h(Y_0^n) - h(V_0^n) \allowdisplaybreaks\\
		& = h(H(X_0^n - \hat{X}_{0:n}^{-}) + V_0^n) - h(V_0^n) \nonumber \allowdisplaybreaks\\
		& \leq \frac{1}{2} \sum_{i=0}^{n} \log[(2\pi {\rm e})^s |HP_i^{-}H^\top + I|] - \frac{n}{2} \log (2\pi {\rm e})^s \nonumber \allowdisplaybreaks\\
		&\overset{\varepsilon \rightarrow 0 \atop n\rightarrow \infty}{=} n \sum_j \log |\lambda_j^{+}(A)|.  \nonumber
	\end{align}
	\noindent Statement ii) and thus \hyperref[prop45]{Proposition~4.5} can be proved by dividing both sides of \eqref{prop45_eq3} by $n$ and taking the limit superior (or limit for the stationary case) as $n\rightarrow \infty$. This completes the proof.

	\section*{Appendix F: Proof of Lemma 4.6}\label{appF}
	For the discrete-time filtering system \eqref{LTV_Filtering} satisfying \hyperref[ass2]{(A2)} and \hyperref[ass3]{(A3)}, we can decompose it into stable and antistable parts by \hyperref[lem39]{Lemma 3.9}:
	\begin{align}\label{lem46_eq3}
			\hspace{-2pt}\left[\begin{matrix}
				X_s(i'+1)\\
				X_u(i'+1)					
			\end{matrix}\right] &= \left[\begin{matrix}
				A_s(i') & 0\\
				0 & A_u(i')
			\end{matrix}\right]\left[ \begin{matrix}
				X_s(i')\\
				X_u(i')
			\end{matrix}\right] + \left[\begin{matrix}
				B_s(i')\\
				B_u(i')
			\end{matrix}\right] W(i'), \nonumber \\
			Y(i') & = \left[\begin{matrix}
				H_s(i') & H_u(i')
			\end{matrix}\right] \left[  \begin{matrix}
				X_s(i')\\
				X_u(i')
			\end{matrix}\right] + V(i'),
	\end{align}
	\noindent where $X_s(i')$ and $X_u(i')$ respectively denote the stable and antistable modes. Since the original system \eqref{LTV_Filtering} is uniformly completely stabilizable and reconstructible, the transformed system and its stable/antistable subsystems in \eqref{lem46_eq3} are also stabilizable and reconstructible. Hence, in the limit of $i'\rightarrow \infty$, the asymptotic state  prior error covariance $\bar{P}^{-}_i := \lim_{i' \rightarrow\infty}  \mathbb{E}\{[\tilde{X}_s^{-}(i')^\top, \tilde{X}_u^{-}(i')^\top]^\top  [\tilde{X}_s^{-}(i')^\top,   \tilde{X}_u^{-}(i')^\top]\}$ of filtering problem \eqref{lem46_eq3} satisfies \cite[Theorem 6.45]{Sivan_1972}
	\begin{equation}\label{lem46_eq4}
		\bar{P}^{-}_{i+1} = \bar{A}^{}_{i} \bar{P}_{i}^{-}( \bar{H}_{i}^{}\bar{P}_{i}^{-}\bar{H}_{i}^\top + I)^{-1}\bar{A}^\top_{i} + \varepsilon^2 \bar{B}^{}_{i} \bar{B}_{i}^\top,
	\end{equation}
	\noindent where $\bar{A}_i = {\rm diag}\{ A_s(i), A_u(i) \}$, $\bar{B}_{i} = [B_s^\top(i), B_u^\top(i)]^\top$, and $\bar{H}_{i} = [H_s(i), H_u(i)]$. When the process noise $W_i$ vanishes in \eqref{LTV_Filtering} and \eqref{lem46_eq3}, or in the limit of $\varepsilon \rightarrow 0$, \eqref{lem46_eq4} becomes a singularly perturbed equation, and we seek its solution $\bar{P}_i^{-}$ in the form of power series in $\varepsilon$ \cite{Naidu_1988, Braslavsky_Auto_1999}:
	\begin{equation}\label{lem46_eq5}
		\bar{P}_i^{-}= \left[\begin{matrix}
			\varepsilon^2 \bar{P}^{-}_s(i) + O(\varepsilon^3) & \varepsilon^2 \bar{P}^{-}_0(i) + O(\varepsilon^3)\\
			* & \bar{P}^{-}_u(i) + O(\varepsilon)
		\end{matrix}\right].
	\end{equation}
	\noindent By letting 
	\begin{equation}\label{lem46_eq6}
		(\bar{H}_{i}^{}\bar{P}_{i}^{-} \bar{H}_{i}^\top + I)^{-1} 
		=:
		\left[\begin{matrix}
			F_s(i) & F_0(i)\\
			F_0(i) & F_u(i)
		\end{matrix}\right] ,
	\end{equation}
	\noindent and substituting \eqref{lem46_eq5} and \eqref{lem46_eq6} into \eqref{lem46_eq4}, we have
	\begin{subequations}\label{lem46_eq7}
		\begin{align}
			&\hspace{-5pt}\varepsilon^2 \bar{P}_s^{-}(i+1) = \varepsilon^2 [ A^{}_s(i) \bar{P}_s^{-}(i) F^{}_s(i) A_s^\top(i) + A^{}_s(i) \bar{P}_0^{-}(i) \cdot \nonumber \allowdisplaybreaks \\
			&\hspace{70pt} F^{}_0(i) A_s^\top (i) + B^{}_s(i) B_s^\top(i) ] + O(\varepsilon^3) \label{lem46_eq7a} \allowdisplaybreaks\\
			&\hspace{-5pt}\varepsilon^2 \bar{P}^{}_0(i+1) = \varepsilon^2 [ A^{}_s(i)\bar{P}_s^{-}(i) F^{}_0(i) A_u^\top(i) + A^{}_s(i) P_0^{-}(i) \cdot \nonumber \allowdisplaybreaks \\
			&\hspace{70pt} F^{}_u(i) A_u^\top(i) + B^{}_s(i) B_u^\top(i) ] + O(\varepsilon^3) \label{lem46_eq7b} \allowdisplaybreaks\\
			&\hspace{-5pt}\bar{P}_u^{-}(i+1) = \varepsilon^2 [A^{}_u(i) \bar{P}_0^{-}(i) F^{}_0(i) A_u^\top(i) + B^{}_u(i) B_u^\top(i)] \nonumber  \allowdisplaybreaks \\
			&\hspace{70pt} + A^{}_u(i) \bar{P}_u^{-}(i) F^{}_u(i) A_u^\top(i) + O(\varepsilon^3). \label{lem46_eq7c}
		\end{align}
	\end{subequations}
	\noindent After setting $\varepsilon = 0$ in \eqref{lem46_eq5} and \eqref{lem46_eq7}, the only non-zero element in $\bar{P}_i^{-}$ is $\bar{P}^{-}_u(i)$. Since $F^{}_u(i) = [H_u(i) \bar{P}^{-}_u(i) H^\top_u(i) + I]^{-1}$ in \eqref{lem46_eq7c}, we can then imply the asymptotic prior error covariance in \eqref{lem46_eq1}. Substituting the asymptotic prior error covariance \eqref{lem46_eq1} into the correction step of LTV Kalman filter \cite[Sec. 2.7.2]{Lewis_2017}, $P_i = P_i^{-} - P_i^{-}H_i^\top(H^{}_iP_i^{-}H_i^\top + I)^{-1}H^{}_i P_i^{-}$, we obtain the asymptotic posterior error covariance in \eqref{lem46_eq2}.

	\section*{Appendix G: Proof of Proposition 4.7}\label{appG}
Since \hyperref[prop47]{Proposition 4.7} and this proof can be seen as the filtering and non-feedback counterparts of \hyperref[prop311]{Proposition 3.11} and its proof, in the following, we only show the key steps while skip the overlapped details. As the process and measurement noises in \eqref{LTV_Filtering} are uncorrelated and zero-mean white Gaussian, we can use the LTV Kalman filter to calculate the optimal estimates of $Z_i$, $\hat{Z}_i^{-} := \mathbb{E}[Z_i|Y_0^{i-1}] = \mathbb{E}[H_iX_i|Y_0^{i-1}]$ and $\hat{Z}_i :=   \mathbb{E}[Z_i|  Y_0^i] = \mathbb{E}[H_iX_i|Y_0^i]$, and the predicted/causal MMSEs of $Z_i$ in \eqref{prop47_eq1}. Let the state prior estimate be $\hat{X}_i^{-} := \mathbb{E}[X_i | Y_0^{i-1}]$ and the state posterior estimate be $\hat{X}_i := \mathbb{E}[X_i | Y_0^i]$. We can then calculate $\hat{Z}_i^{-}$, $\hat{X}_i^{-}$, ${\rm pmmse}(Z_i)$ in \eqref{prop47_eq1}, and the state prior error covariance $P_i^{-} := \mathbb{E}[(X_i - \hat{X}^{-}_i)(X_i - \hat{X}^{-}_i)^\top]$ from the prediction step of Kalman filter
		\begin{equation}\label{LTV_Filter_Prediction}
			\begin{split}
				\hat{X}_{i+1}^{-} & = A_{i} \hat{X}_{i},\\
			P_{i+1}^{-} & = A^{}_{i} P_{i}^{-} (H^{}_{i} P_{i}^{-} H_{i}^\top + I)^{-1}A_{i}^\top + \varepsilon^2 B^{}_{i} B_{i}^\top.
			\end{split}
		\end{equation}
		Meanwhile, $\hat{Z}_i$, $\hat{X}_i$, ${\rm cmmse}(Z_i)$ in \eqref{prop47_eq1}, and state posterior error covariance $P_i := \mathbb{E}[(X_i - \hat{X}_i)(X_i - \hat{X}_i)^\top]$ can be calculated from the correction step of Kalman filter
		\begin{equation}\label{LTV_Filter_Correction}
			\begin{split}
				\hat{X}^{}_{i} & = \hat{X}_{i}^{-} + K_{i}^{}(Y^{}_{i} - H^{}_{i} \hat{X}_{i}^{-}) ,\\
				P^{}_{i} & = P_{i}^{-} - P_{i}^{-}H_{i}^\top  (H_{i}^{} P_{i}^{-}H_{i}^\top + I)^{-1} H^{}_{i} P_{i}^{-},
			\end{split}
		\end{equation}
		\noindent where $K_{i} := P_{i}^{-}H^\top_{i}(H^{}_{i}P_{i}^{-}H_{i}^\top + I)^{-1}$ is the Kalman gain.

		With above estimates, we now prove the results in \hyperref[prop47]{Proposition 4.7}. The MMSE-based sandwich bounds in \eqref{prop47_eq1} can be derived by applying \hyperref[thm44]{Theorem 4.4} to the measurement channel \eqref{LTV_Filtering}. Meanwhile, once we show that $\bar{I}(Y; X_0) = \overline{\lim}_{n\rightarrow\infty}(n+ \break 1)^{-1} \log\det[\varPhi_{A_u}(n+1, 0)]$, we can tell that $\sum_{j=1}^{l} m_j  \log \underline{\kappa}_j$ is also a lower bound of $\bar{I}(Y; X_0)$ by \hyperref[lem310]{Lemma 3.10}. Hence, to prove \hyperref[prop47]{Proposition~4.7}, it remains to verify that $\bar{I}(Y; X_0) = \overline{\lim}_{n\rightarrow\infty}(n+1)^{-1} \log\det[\varPhi_{A_u}(n+1, 0)]$, which is equivalent to showing that i) $\bar{I}(Y; X_0) \geq \overline{\lim}_{n\rightarrow\infty}(n+1)^{-1} \cdot \break \log\det[\varPhi_{A_u}( n+1,  0)]$, and ii) $\bar{I}(Y; X_0) \leq \overline{\lim}_{n\rightarrow\infty}(n+1)^{-1}\cdot\break \log\det[\varPhi_{A_u}(  n+1,  0)]$. Statement i) can be proved by using the\break finite-power constraint \hyperref[ass4]{(A4)} and applying (the proof of) \hyperref[cor312]{Corollary 3.12} to \eqref{LTV_Filtering}. Taking $\log \det(\cdot)$ on both sides of \eqref{LTV_Filter_Prediction}, summing up the result from $i = 0$ to $n$, and applying \hyperref[lem46]{Lemma 4.6} to $P_i^{-}$, as $\varepsilon \rightarrow 0$ and $n\rightarrow \infty$, we have
		\begin{align}\label{prop47_eq2}
			\uplim_{n\rightarrow\infty}  \sum_{i=0}^{n} \log\det \varPsi(i) &= 2 \uplim_{n\rightarrow\infty} \log\det\varPhi_{A_u}(n+1, 0) \allowdisplaybreaks \\
			& \hspace{7pt}+  \log\det \bar{P}_{u}^{-}(0) -  \log\det \bar{P}_{u}^{-}(n+1), \nonumber
		\end{align}
		\noindent where $\varPsi(i)  := H^{}_u(i)\bar{P}^{-}_u(i)H_u^\top(i) + I$; the asymptotic prior error covariance of antistable mode $\bar{P}^{-}_u(i)$ is defined in \eqref{lem46_eq1}, and $H_u(i)$ denotes the antistable part of $H_i$ in \eqref{LTV_Filtering}. Suppose $Y_i$ and $V_i \in \mathbb{R}^s$ in \eqref{LTV_Filtering}. By using \eqref{prop43_eq2} and \eqref{prop47_eq2}, and following the similar derivations of \eqref{prop311_eq8} and \eqref{prop45_eq3}, we have
		\begin{align}\label{prop47_eq3}
			I(Y_0^n; X_0) & = h(Y_0^n) - h(V_0^n) \allowdisplaybreaks\\
			&\hspace{-5pt} = h(H(X_0^n - \hat{X}_{0:n}^{-}) + V_0^n) - h(V_0^n) \nonumber \allowdisplaybreaks\\
			&\hspace{-5pt} \leq \frac{1}{2} \sum_{i=0}^{n} \log[(2\pi {\rm e})^s|H^{}_i P_i^{-} H_i^\top + I|] - \frac{n}{2} \log (2\pi {\rm e})^s \nonumber \allowdisplaybreaks\\
			& \hspace{-5pt} \overset{\varepsilon \rightarrow 0 \atop n\rightarrow \infty}{=}  \log\det\varPhi_{A_u}(n+1, 0) + \frac{1}{2} \log \det \bar{P}_u^{-}(0) \nonumber \allowdisplaybreaks \\
			& \hspace{105pt} - \frac{1}{2} \log\det \bar{P}_u^{-}(n+1) ,  \nonumber
		\end{align}
		where the last equality follows \hyperref[lem46]{Lemma 4.6} and \eqref{prop47_eq2}. Statement ii) and hence \hyperref[prop47]{Proposition 4.7} are proved by dividing \eqref{prop47_eq3} by $n+1$ and taking the limit superior (or limit for the stationary case) as $n\rightarrow \infty$. This completes the proof.

\end{document}